\documentclass[11pt]{article}
\usepackage[papersize={8.5in,11in},top=1.1in,left=1.1in,right=1.1in,bottom=1.1in]{geometry}
\usepackage{cite,amsmath,amsthm,amsfonts}
\usepackage{pstricks,pst-node}

\newtheorem{theorem}{Theorem}[section]
\newtheorem{cor}[theorem]{Corollary}
\newtheorem{lemma}[theorem]{Lemma}

\newtheorem{claim}[theorem]{Claim}
\newtheorem{example}[theorem]{Example}

\newtheorem{definition}[theorem]{Definition}

\newcommand{\classNP}{{\sf NP}}
\newcommand{\classP}{{\sf P}}

\newcommand{\R}{{\mathbb{R}}}
\newcommand{\e}{{\varepsilon}}

\begin{document}

\title{Contribution Games in Networks
  \thanks{An extended abstract of this paper has been accepted for
    publication in the proceedings of the 18th European Symposium on
    Algorithms (ESA 2010). }
}

\author{
Elliot Anshelevich
\thanks{Dept. of Computer Science, Rensselaer Polytechnic Institute, Troy, NY. Supported in part by NSF CCF-0914782.}
\and
Martin Hoefer
\thanks{Dept. of Computer Science, RWTH Aachen University, Germany. {\tt mhoefer@cs.rwth-aachen.de}. Supported by DFG through UMIC Research Center at RWTH Aachen University and by grant Ho 3831/3-1.}
}
\date{}

\maketitle

\begin{abstract}
  We consider {\em network contribution games}, where each agent in a
  network has a budget of effort that he can contribute to different
  collaborative projects or relationships. Depending on the
  contribution of the involved agents a relationship will flourish or
  drown, and to measure the success we use a reward function for each
  relationship. Every agent is trying to maximize the reward from all
  relationships that it is involved in. We consider pairwise
  equilibria of this game, and characterize the existence,
  computational complexity, and quality of equilibrium based on the
  types of reward functions involved. When all reward functions are
  concave, we prove that
  the price of anarchy is at most 2. For convex functions the same
  only holds under some special but very natural conditions. Another
  special case extensively treated are minimum effort games, where the
  reward of a relationship depends only on the minimum effort of any
  of the participants. In these games, we can show existence of
  pairwise equilibrium and a price of anarchy of 2 for concave
  functions and special classes of games with convex
  functions. Finally, we show tight bounds for approximate equilibria
  and convergence of dynamics in these games.
\end{abstract}

\section{Introduction}
\label{sec:intro}

Understanding the degree to which rational agents will participate in
and contribute to joint projects is critical in many areas of
society. With the advent of the Internet and the consideration of
rationality in the design of multi-agent and peer-to-peer systems,
these aspects are becoming of interest to computer scientists and
subject to analytical computer science research. Not surprisingly, the
study of contribution incentives has been an area of vital research
interest in economics and related areas with seminal contributions to
the topic over the last decades. A prominent example from experimental
economics is the \emph{minimum effort coordination
  game}~\cite{Huyck90}, in which a number of participants contribute
to a joint project, and the outcome depends solely on the minimum
contribution of any agent. While the Nash equilibria in this game
exhibit a quite simple structure, behavior in laboratory experiments
led to sometimes surprising patterns see,
e.g.,~\cite{Dufwenberg05,Bornstein02,Fatas06,Riechmann08,Chaudhuri08}
for recent examples. On the analytical side this game was studied, for
instance, with respect to logit-response dynamics and stochastic
potential in~\cite{Anderson01}.
%
%

In this paper we propose and study a simple framework of \emph{network
  contribution games} for contribution, collaboration, and
coordination of actors embedded in networks. The game contains the
minimum effort coordination game as a special case and is closely
related to many other games from the economics literature. In such a
game each player is a vertex in a graph, and the edges represent
bilateral relationships that he can engage in. Each player has a
budget of effort that he can contribute to different edges. Budgets
and contributions are non-negative numbers, and we use them as an
abstraction for the different ways and degrees by which actors can
contribute to a relationship, e.g., by allocating time, money, and
personal energy to maintaining a friendship or a collaboration.
Depending on the contribution of the involved actors a relationship
will flourish or drown, and to measure the success we use a reward
function for each relationship. Finally, each player strives to
maximize the total success of all relationships he is involved in.


A major issue that we address in our games is the impact of
collaboration.  An incentive for collaboration evolves naturally when
agents are embedded in (social) networks and engage in
relationships. We are interested in the way that a limited
collaboration between agents influences properties of equilibria in
contribution games like existence, computational complexity, the
convergence of natural dynamics, as well as measures of
inefficiency. In particular, in addition to unilateral strategy
changes we will allow pairs of players to change their strategies in a
coordinated manner. States that are resilient against such bilateral
deviations are termed \emph{2-strong}~\cite{Andelman09} or
\emph{pairwise equilibria}~\cite{Jackson96}. This adjustment raises a
number of interesting questions. What is the structure of pairwise
equilibria, and what are conditions under which they exist?  Can we
compute pairwise equilibria efficiently or at least efficiently decide
their existence? Are there natural improvement dynamics that allow
players to reach a pairwise equilibrium (quickly)?  What are the
prices of anarchy and stability, i.e., the ratios of the social
welfare of the best possible state over the worst and best welfare of
an equilibrium, respectively?  These are the main questions that
motivate our study.
Before describing our results, we proceed with a formal introduction of
the model.

\subsection{Network Contribution Games}
\label{sec:model}
We consider \emph{network contribution games} as models for the
contribution to relationships in networked environments. In our games
we are given a simple and undirected graph $G=(V,E)$ with $n$ nodes
and $m$ edges. Every node $v \in V$ is a \emph{player}, and every edge
$e \in E$ represents a \emph{relationship} (collaboration, friendship,
etc.). A player $v$ has a given \emph{budget} $B_v\geq 0$ of the total
amount of \emph{effort} that it is able to apply to all of its
relationships (i.e., edges incident to $v$). Budgets are called
\emph{uniform} if $B_u = B_v$ for any $u,v \in V$. In this case,
unless stated otherwise, we assume that $B_v = 1$ for all $v \in V$
and scale reward functions accordingly.

We denote by $E_v$ the set of edges incident to $v$. A \emph{strategy}
for player $v$ is a function $s_v : E_v \to \R_{\ge 0}$ that satisfies
$\sum_{e=(v,u)} s_v(e) \le B_v$ and specifies the amount of effort
$s_v(e)$ that $v$ puts into relationship $e \in E_v$. A \emph{state}
of the game is simply a vector $s = (s_1,\ldots,s_n)$. The success of
a relationship $e$ is measured by a \emph{reward function} $f_e :
\R_{\ge 0}^2 \rightarrow \R$, for which $f_e(x,y) \ge 0$ and
non-decreasing in $x,y \ge 0$. The \emph{utility} or \emph{welfare} of
a player $v$ is simply the total success of all its relationships,
i.e., $w_v(s) = \sum_{e=(v,u)}f_e(s_v(e),s_u(e))$, so both endpoints
benefit equally from the undirected edge $e$. In addition, we will
assume that reward functions $f_e$ are symmetric, so
$f_e(x,y)=f_e(y,x)$ for all $x,y \ge 0$, and for ease of presentation
we will assume they are continuous and differentiable, although most
of our results can be obtained without these assumptions.

We are interested in the existence and computational complexity of
stable states, their performance in terms of social welfare, and the
convergence of natural dynamics to equilibrium. The central concept of
stability in strategic games is the \emph{(pure) Nash equilibrium},
which is resilient against unilateral deviations, i.e., a state $s$
such that $w_v(s_v, s_{-v}) \ge w_v(s_v',s_{-v})$ for each $v \in V$
and all possible strategies $s_v'$. For the \emph{social welfare}
$w(s)$ of a state $s$ we use the natural utilitarian approach and
define $w(s) = \sum_{v \in V} w_v(s)$. A \emph{social optimum} $s^*$
is a state with $w(s^*) \ge w(s)$ for every possible state $s$ of the
game. Note that we restrict attention to states without randomization
and consider only pure Nash equilibria. In particular, the term ``Nash
equilibrium'' will only refer to the pure variant throughout the
paper.

In games such as ours, it makes sense to consider multilateral
deviations, as well as unilateral ones. Nash equilibria have
shortcomings in this context, for instance for a pair of adjacent
nodes who would -- although being unilaterally unable to increase
their utility -- benefit from cooperating and increasing the effort
jointly. The prediction of Nash equilibrium that such a state is
stable is quite unreasonable. In fact, it is easy to show that when
considering pure Nash equilibria, the function $\Phi(s) = w(s)/2$ is
an exact potential function for our games. This means that $s^*$ is an
optimal Nash equilibrium, the price of stability for Nash equilibria
is always 1, and iterative better response dynamics converge to an
equilibrium.  Additionally, for many natural reward functions $f_e$,
the price of anarchy for Nash equilibria remains
unbounded.\footnote{Consider, for instance, a path of length 3 with
  $(e_1,e_2,e_3)$ and $f_{e_1}(x,y)=f_{e_3}(x,y)=\min(x,y)$ and
  $f_{e_2}(x,y)=M\cdot\min(x,y)$, for some large number $M$. }

Following the reasoning in, for example,~\cite{Jackson96, Jackson08},
we instead consider pairwise equilibrium, and focus on the more
interesting case of \emph{bilateral deviations}. An improving
bilateral deviation in a state $s$ is a pair of strategies
$(s_u',s_v')$ such that $w_v(s_u',s_v',s_{-u,v}) > w_v(s)$ and
$w_u(s_u',s_v',s_{-u,v}) > w_u(s)$. A state $s$ is a \emph{pairwise
  equilibrium} if it is a Nash equilibrium and additionally there are
no improving bilateral deviations. Notice that we are actually using a
stronger notion of pairwise stability than described
in~\cite{Jackson08}, since any pair of players can change their
strategies in an arbitrary manner, instead of changing their
contributions on just a single edge. In particular, in a state $s$ a
coalition $C \subseteq V$ has a \emph{coalitional deviation} $s'_C$ if
the reward of every player in $C$ is strictly greater when all players
in $C$ switch from strategies $s_C$ to $s'_C$. $s$ is a \emph{strong
  equilibrium} if no coalition $C \subseteq V$ has a coalitional
deviation. Our notion of pairwise equilibrium is exactly the notion of
{\em 2-strong equilibrium}~\cite{Andelman09}, the restriction of
strong equilibrium to deviations of coalitions of size at most 2.

We evaluate the performance of stable states using prices of anarchy
and stability, respectively. The \emph{price of anarchy (stability)}
for pairwise equilibria in a game is the worst-case ratio of
$w(s^*)/w(s)$ for the worst (best) pairwise equilibrium $s$ in this
game. For a class of games (e.g., with certain convex reward
functions) that have pairwise equilibria, the price of anarchy
(stability) for pairwise equilibria is simply the worst price of
anarchy (stability) for pairwise equilibria of any game in the
class. If we consider classes of games, in which existence is not
guaranteed, the prices are defined as the worst prices of any game in
the class that has pairwise equilibria. Note that unless stated
otherwise, the terms price of anarchy and stability refer to pairwise
equilibria throughout the paper.

\subsection{Results and Contribution}
\label{sec:results}

We already observed above that in every game there always exist pure
Nash equilibria. In addition, iterative better response dynamics
converge to a pure Nash equilibrium, and the price of stability for
Nash equilibria is 1. The price of anarchy for Nash equilibria,
however, can be arbitrarily large, even for very simple reward
functions.

If we allow bilateral deviations, the conditions become much more
interesting. Consider the effort $s_v(e)$ expended by player $v$ on an
edge $e=(u,v)$. The fact that $f_e$ is monotonic nondecreasing tells us
that $w_v$ increases in $s_v(e)$. Depending on the application being
considered, however, the utility could possess the property of
``diminishing returns'', or on the contrary, could increase at a faster
rate as $v$ puts more effort on $e$. In other words, for a fixed effort
amount $s_u(e)$, $f_e$ as a function of $s_v(e)$ could be a concave or a
convex function, and we will distinguish the treatment of the framework
based on these properties.

\begin{table}[htb]
\label{tab:1}
\begin{center}
\renewcommand{\arraystretch}{1.5}
  \begin{tabular}{|c||c|c|}
    \hline

    & Existence & Price of Anarchy \\
    \hline\hline

    General convex&Yes (*)&2 \\

    \hline

    General concave& Not always &2\\

    \hline

    $c_e\cdot (x+y)$& Decision in \classP &1\\

    \hline
    Minimum effort convex&Yes (**)&2 (**)\\

    \hline
    Minimum effort concave&Yes&2\\

    \hline
    Maximum effort&Yes&2\\

    \hline
    Approximate Equilibrium & \multicolumn{2}{c|}{OPT is a 2-apx.\ Equilibrium}\\

    \hline

  \end{tabular}
  \caption{Summary of some of our results for various types of reward
    functions. For the cases where equilibrium always exists, we also
    give algorithms to compute it, as well as convergence results. All
    of our PoA upper bounds are tight. (*) If $\forall e$,
    $f_e(x,0)=0$, \classNP-hard otherwise. (**) If budgets are
    uniform, \classNP-hard otherwise.}
\end{center}
\end{table}

In Section~\ref{sec:convex} we consider the case of convex reward
functions. For a large class $\mathcal{C}$ of convex functions defined
below (Definition~\ref{def:classC}) we can show a tight bound for the
price of anarchy of 2 (Theorem~\ref{thm.classC}). However, for games
with functions from $\mathcal{C}$ pairwise equilibria might not
exist. In fact, we show that it is \classNP-hard to decide their
existence, even when the edges have simple reward functions of either
the form $f_e(x,y) = c_e \cdot (xy)$ or $f_e(x,y) = c_e \cdot (x+y)$
for constants $c_e > 0$ (Theorem~\ref{thm:generalHardness}). If,
however, {\em all} functions are of the form $f_e(x,y) = c_e\cdot
(xy)$, then existence and efficient computation are guaranteed. We
show this existence result for a substantially larger class of
functions that may not even be convex, although it includes the class
of all convex functions $f_e$ with $f_e(x,0)=0$
(Theorem~\ref{thm.convex}). Our procedure to construct a pairwise
equilibrium in this case actually results in a strong equilibrium,
i.e., the derived states are resilient to deviations of every possible
subset of players. As the prices of anarchy and stability for pairwise
equilibria are exactly 2, they extend to strong equilibria simply by
restriction.

As an interesting special case, we prove that if all functions are
$f_e(x,y) = c_e\cdot (x+y)$, it is possible to determine efficiently
if pairwise equilibria exist and to compute them in polynomial time in
the cases they exist (Theorem~\ref{thm:convexExists}).

In Section~\ref{sec:concave} we consider pairwise equilibria for
concave reward functions. In this case, pairwise equilibria may also
not exist. Nevertheless, in the cases when they exist, we can show
tight bounds of 2 on prices of anarchy and stability
(Theorem~\ref{thm:PoAconcave}).

Sections~\ref{sec:min} and~\ref{sec:max} treat different special cases
of particular interest. In Section~\ref{sec:min} we study the
important case of minimum effort games with reward functions $f_e(x,y)
= h_e(\min(x,y))$. If functions $h_e$ are convex, pairwise equilibria
do not necessarily exist, and it is \classNP-hard to decide the
existence for a given game (Theorem~\ref{thm:minHardness}). Perhaps
surprisingly, if budgets are uniform, i.e., if $B_v = B_u$ for all
$u,v \in V$, then pairwise equilibria exist for all convex functions
$h_e$ (Theorem~\ref{thm:convexMinExists}), and the prices of anarchy
and stability for pairwise equilibria are exactly 2
(Theorem~\ref{thm:PoAconvexMinUniform}). If functions $h_e$ are
concave, we can always guarantee existence
(Theorem~\ref{thm:concaveMinExists}). Our bounds for concave functions
in Section~\ref{sec:concave} imply tight bounds on the prices of
anarchy and stability of 2. Most results in this section extend to
strong equilibria. In fact, the arguments in all the existence proofs
can be adapted to show existence of strong equilibria, and tight
bounds on prices of anarchy and stability follow simply by
restriction.

In Section~\ref{sec:max} we briefly consider maximum effort games with
reward functions $f_e(x,y) = h_e(\max(x,y))$. For these games
bilateral deviations essentially reduce to unilateral ones. Hence,
pairwise equilibria exist, they can be found by iterative better
response using unilateral deviations, and the price of stability is 1
(Theorem~\ref{thm:maxExist}). In addition, we can show that the price
of anarchy is exactly 2, and this is tight
(Theorem~\ref{thm:maximum}).

Sections~\ref{sec:approx} to~\ref{sec:convergence} treat additional
aspects of pairwise equilibria. In Section~\ref{sec:approx} we
consider approximate equilibria and show that a social optimum $s^*$
is always a 2-approximate equilibrium (Theorem~\ref{thm.approxEq}). In
Section~\ref{sec:convergence} we consider sequential and concurrent
best response dynamics. We show that for general convex functions and
minimum effort games with concave functions the dynamics converge to
pairwise equilibria (Theorems~\ref{thm:convexConverge}
and~\ref{thm:concaveConverge}). For the former we can even provide a
polynomial upper bound on the convergence times.

Note that allmost all of our results on the price of anarchy for
pairwise equilibria result in a (tight) bound of 2. This bound of 2 is
essentially due to the dyadic nature of relationships, i.e., the fact
that edges are incident to at most two players. The case when edges
are projects among arbitrary subsets of actors is termed \emph{general
  contribution game} and treated in Section~\ref{sec:general}. Here we
consider setwise equilibria, which allow deviations by subsets of
players that are linked via a joint project. For some classes of such
games we show similar results for setwise equilibria as for pairwise
equilibria in network contributions games. In particular, we extend
the results on existence and price of anarchy for general convex
functions and minimum effort games with convex functions. The price of
anarchy for setwise equilibria becomes essentially $k$, where $k$ is
the cardinality of the largest project. However, many of the aspects
of this general case remain open, and we conclude the paper in
Section~\ref{sec:conclude} with this and other interesting avenues for
further research.

\subsection{Related Work}
\label{sec:relatedWork}

The model most related to ours is the co-author model~\cite{Jackson96,
  Jackson08}. The motivation of this model is very similar to ours,
although there are many important differences. For example, in the
usual co-author model, the nodes cannot choose how to split their
effort between their relationships, only which relationship to
participate in. Moreover, we consider general reward functions, and as
described above, our notion of pairwise stability is stronger than
in~\cite{Jackson96, Jackson08}.

Our games are potential games with respect to unilateral deviations
and can thus be embedded in the framework of congestion games. The
social quality of Nash equilibrium in non-splittable atomic congestion
games, where the quality is measured by social welfare instead of
social cost, has been studied in~\cite{Marden08}. Our games allow
players to split their effort arbitrarily between incident edges
(i.e., they are atomic {\em splittable} congestion
games~\cite{Orda93}), and we focus on coalitional equilibrium notions
like pairwise stability, not Nash equilibrium. In addition, the reward
functions (e.g., in minimum effort games) are much more general and
quite different from delay functions usually treated in the congestion
game literature~\cite{Bhaskar09,Cominetti09}.

In~\cite{BramoulleJET07}, Bramoull{\'e} and Kranton consider an
extremely general model of network games designed to model public
goods. Nevertheless, our game is not a special case of this model,
since in~\cite{BramoulleJET07} the strategy of a node is simply a
level of effort it contributes, not how much effort it contributes
{\em to each relationship.} There are many extensions to this model,
e.g., Corbo et al.~\cite{Corbo07} consider similar models in the
context of contributions in peer-to-peer networks. Their work closely
connects to the seminal paper on contribution games by Ballester et
al.~\cite{Ballester06}, which has prompted numerous similar follow-up
studies.

The literature on games played in networks is too diverse to survey
here -- we will address only the most relevant lines of research. In
the last few years, there have been several fascinating papers on
network bargaining games
(e.g.,~\cite{KleinbergSTOC08,ChakrabortyEC09}), and in general on
games played in networks where every edge represents a two-player game
(e.g.,~\cite{Davis09,DaskalakisMinMax09,HoeferDISC09}). All these
games either require that every node plays the same strategy on all
neighboring edges, or leaves the node free to play any strategy on any
edge. While every edge in our game can be considered to be a (very
simple) two-player game, the strategies/contributions that a node puts
on every edge are neither the same nor arbitrarily different:
specifically they are constrained by a budget on the total effort that
a node can contribute to all neighboring edges in total. To the best
of our knowledge, there have been no contributions (other than the
ones mentioned below) to the study of games of this type.

Our game bears some resemblance to network formation games where
players attempt to maximize different forms of network
centrality~\cite{Fabrikant03,Albers06,Brandes08,Kleinberg08,Laoutaris08,Bei09},
although our utility functions and equilibrium structure are very
different. Minimum effort coordination games as proposed by van Huyck
et al.~\cite{Huyck90} represent a special case of our general
model. They are a vital research topic in experimental economics, see
the papers mentioned above and~\cite{Devetag07} for a recent
survey. We study a generalized and networked variant in
Section~\ref{sec:min}. Slightly different adjustments to networks have
recently appeared in~\cite{AlosFerrerMini10,Bloch09}. Our work
complements this body of work with provable guarantees on the
efficiency of equilibria and the convergence times of dynamics.

Some of the special cases we consider are similar to stable
matching~\cite{Gusfield89}, and in fact correlated variants of stable
matching can be considered an ``integral'' version of our game. Our
results generalize existence and convergence results for correlated
stable matching (as, e.g., in~\cite{Ackermann11}), and our price of
anarchy results greatly generalize the results
of~\cite{AnshelevichD09}.

It is worth mentioning the connection of our reward functions with the
``Combinatorial Agency'' framework (see,
e.g.,~\cite{Babaioff06,BabaioffSAGT09}). In this framework, many
people work together on one project, and the success of this project
depends in a complex (usually probabilistic) manner on whether the
people involved choose a high level of effort. It is an interesting
open problem to extend our results to the case in which every project
of a game is an instance of the combinatorial agency problem.

A related but much more coordinated framework is studied in charity
auctions, which can be used to obtain contributions of rational agents for
charitable projects. This idea has been first explored by Conitzer and
Sandholm~\cite{Conitzer04}, and mechanisms for a social network setting
are presented by Ghosh and Mahdian~\cite{Ghosh08}.

\section{Polynomials and Convex Reward Functions}
\label{sec:convex}

In this section we start by considering a class of reward functions that
guarantee a small price of anarchy. 
We first introduce the notions of a \emph{coordinate-convex} and \emph{coordinate-concave} function.
\begin{definition}
  \label{def.coordinate}
  A function $f : \R^n\rightarrow \R$ is
  \begin{itemize}
  \item {\em coordinate-convex} if for all of its arguments $x_i$, we
    have that $\frac{\partial ^2 f}{\partial x_i^2}\geq 0$. A function
    is {\em strictly coordinate-convex} if all these are strict
    inequalities.
  \item {\em coordinate-concave} if for all of its arguments $x_i$, we
    have that $\frac{\partial ^2 f}{\partial x_i^2}\leq 0$. A function
    is {\em strictly coordinate-concave} if all these are strict
    inequalities.
  \end{itemize}
\end{definition}
Note that \emph{every convex function is coordinate-convex}, and
similarly every concave function is coordinate-concave. However,
coordinate-convexity/concavity is necessary but not sufficient for
convexity/concavity. For instance, the function $\log(1+xy)$ is
coordinate-concave, but not concave -- indeed, it is convex if $x = y
\in [0,1]$.

\begin{definition}
  \label{def:classC}
  Class $\mathcal{C}$ consists of all symmetric nondecreasing
  functions $f : \R_{\geq 0}\times \R_{\geq 0}\rightarrow \R_{\geq 0}$
  that are coordinate-convex. Define class $\mathcal{C}'$ as the
  subclass of functions from $\mathcal{C}$ that are strictly
  coordinate-convex. Define class $\mathcal{C}_0$ as the subclass of
  functions from $\mathcal{C}$ that satisfy $f(x,0) = 0$ for all $x
  \ge 0$.
\end{definition}

The class $\mathcal{C}$ is of particular interest to us, because we
can show the following result. The proof will appear below in
Section~\ref{sec:convexPoA}. \\

{\noindent \bf Theorem.} {\it For the class of network contribution
  games with reward functions $f_e \in \mathcal{C}$ for all $e \in E$
  that have a pairwise equilibrium, the prices of anarchy and
  stability for pairwise equilibria are exactly 2.} \\

Before we attack the proof, however, let us give some more intuition about
functions that belong to $\mathcal{C}$ and the properties of pairwise
equilibria in the corresponding games. Consider a polynomial $p(x,y)$ in
two variables with non-negative coefficients that is symmetric (i.e.,
$p(x,y) = p(y,x)$) and non-negative for $x,y \ge 0$.
For every such polynomial $p$ we consider all possible extensions to a
function $f(x,y) = h(p(x,y))$ with $h : \R_{\ge 0} \to \R_{\ge 0}$
being nondecreasing and convex. We call the union of all these
extensions the class $\mathcal{P}$. Clearly, every $p(x,y) \in
\mathcal{P}$ since $h(x) = x$ is convex. In particular, $\mathcal{P}$
contains a large variety of functions such as $xy$, $(x+y)^2$,
$e^{x+y}$, $x^3+y^3+2xy$, etc. Observe that $\mathcal{P} \subset
\mathcal{C}$, and thus the price of anarchy result for $\mathcal{C}$
will hold for every game with arbitrary functions from $\mathcal{P}$.
\begin{claim}
  It holds that $\mathcal{P} \subset \mathcal{C}$.
\end{claim}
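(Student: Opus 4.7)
My plan is to check each of the defining properties of $\mathcal{C}$ in turn, for an arbitrary $f \in \mathcal{P}$. Fix such an $f$, and by definition write $f(x,y) = h(p(x,y))$, where $p$ is a symmetric two-variable polynomial with nonnegative coefficients (nonnegative on $\R_{\ge 0}^2$), and $h : \R_{\ge 0} \to \R_{\ge 0}$ is nondecreasing and convex. Codomain and nonnegativity are immediate from $p \ge 0$ and $h \ge 0$ on $\R_{\ge 0}$. Symmetry of $f$ follows directly from symmetry of $p$: $f(x,y) = h(p(x,y)) = h(p(y,x)) = f(y,x)$.

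For the monotonicity condition, I would note that the formal partial derivative $\partial p/\partial x$ is itself a polynomial with nonnegative coefficients, hence nonnegative on $\R_{\ge 0}^2$; combined with $h' \ge 0$ on $\R_{\ge 0}$ (since $h$ is nondecreasing), the chain rule gives $\partial f/\partial x = h'(p(x,y)) \cdot \partial p/\partial x \ge 0$ throughout $\R_{\ge 0}^2$. The same argument applies to $y$, so $f$ is nondecreasing in each coordinate.

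The main step is coordinate-convexity. I would differentiate once more using the chain and product rules:
\[
\frac{\partial^2 f}{\partial x^2}(x,y) \;=\; h''(p(x,y))\cdot\left(\frac{\partial p}{\partial x}\right)^{\!2} \;+\; h'(p(x,y))\cdot\frac{\partial^2 p}{\partial x^2}.
\]
In the first term, $h'' \ge 0$ because $h$ is convex, and the squared factor is obviously nonnegative. In the second term, $h' \ge 0$, and the key observation is that $\partial^2 p/\partial x^2$ is again a polynomial with nonnegative coefficients, so it is nonnegative on $\R_{\ge 0}^2$. Both summands are therefore $\ge 0$, which gives $\partial^2 f/\partial x^2 \ge 0$, and by symmetry the same for $y$. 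This is the only place a real computation happens, but the obstruction is mild; the ``trick'' is simply that differentiating a polynomial with nonnegative coefficients (with respect to a single variable, any number of times) preserves the nonnegativity of the coefficients.

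This establishes $\mathcal{P} \subseteq \mathcal{C}$. To argue the inclusion is strict (if $\subset$ is read as proper containment), I would exhibit a function in $\mathcal{C}\setminus\mathcal{P}$, for instance $f(x,y) = \min(x,y)^2$: this is symmetric, nondecreasing, and coordinate-convex (its second derivative in $x$ equals $2$ for $x<y$ and $0$ for $x>y$), so it lies in $\mathcal{C}$. Yet it cannot equal $h(p(x,y))$ for such $h,p$, because for any nonconstant symmetric polynomial $p$ with nonnegative coefficients, $p(x,y) \to \infty$ as $x \to \infty$ with $y$ fixed, while $f(x,y) = y^2$ stays bounded on this ray; hence $h$ would have to be eventually constant, contradicting $f(y,y) = y^2$ being unbounded.
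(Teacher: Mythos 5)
Your containment argument is correct and is essentially the paper's own proof: nonnegativity, symmetry and monotonicity are immediate, and coordinate-convexity follows from the chain-rule identity $\partial_{xx} f = h''(p)\,(\partial_x p)^2 + h'(p)\,\partial_{xx} p \ge 0$, using $h'' \ge 0$, $h' \ge 0$, and the fact that differentiating a polynomial with nonnegative coefficients preserves nonnegativity of the coefficients. The paper proves exactly this and nothing more; it reads $\subset$ simply as containment and does not argue properness, so your main argument fully covers what the paper establishes.

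Your added strictness witness, however, is wrong: $f(x,y)=\min(x,y)^2$ is \emph{not} coordinate-convex, hence not in $\mathcal{C}$. For fixed $y$ the map $x \mapsto f(x,y)$ equals $x^2$ for $x \le y$ and the constant $y^2$ for $x \ge y$; its first derivative drops from $2y$ to $0$ at $x=y$, a concave kink. Concretely, with $y=1$ one has $f(1,1)=1 > \tfrac12\bigl(f(0,1)+f(2,1)\bigr)=\tfrac12$, violating convexity in $x$. Checking the second derivative only on the two open regions $x<y$ and $x>y$ misses this (and the function is not even differentiable at $x=y$, which the paper's definition of coordinate-convexity presupposes). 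Since the paper's claim only asserts containment, you can simply drop the strictness paragraph; if you do want a proper-inclusion witness, you need a different example, and note that your boundedness-along-a-ray argument is in tension with coordinate-convexity anyway, since a nondecreasing convex function of $x$ that is bounded on $[0,\infty)$ must be constant in $x$.
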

\begin{proof}
  Let $f(x,y) = h(p(x,y))$ be an arbitrary function in $\mathcal{P}$
  as described above. $f$ is clearly monotone nondecreasing. 
  $\partial_{xx} p$, $\partial_{yy} p$, and $\partial_{xy} p$ are
  non-negative, since $p$ has positive coefficients.
  \[ \partial_{xx} f(x,y) = \partial_{pp} h(p(x,y))\cdot(\partial_x
  p(x,y))^2+\partial_p h(p(x,y))\cdot \partial_{xx} p(x,y) \geq 0
  \]
  since $h$ is convex. The same holds for the other second partial
  derivatives.
\end{proof}

\subsection{Existence and Computational Complexity of Pairwise
  Equilibria}

While we will show that the price of anarchy is 2 in
Section~\ref{sec:convexPoA}, this result says nothing about the
existence and complexity of computing pairwise equilibria. In fact,
even for simple games with reward functions $f_e(x,y) = c_e\cdot(x+y)$
and small constants $c_e$, pairwise equilibria can be absent.

\begin{example}
  \label{exm:noEq} \rm
  In our example there is a triangle graph with nodes $u_1$, $u_2$,
  and $u_3$, edges $e_1 = (u_1, u_2)$, $e_2 = (u_2, u_3)$, and $e_3 =
  (u_3, u_1)$, and uniform budgets. Edge $e_i$ has reward function
  $f_i$ with $f_1(x,y) = f_2(x,y) = 3(x+y)$, and $f_3(x,y) = 2(x+y)$.
  A pairwise equilibrium must not allow profitable unilateral
  deviations. Thus, $s_1(e_1) = s_3(e_2) = 1$, because this is
  obviously a dominant strategy w.r.t. unilateral deviations. Player 2
  can assign his budget arbitrarily. This yields $w_1(s) =
  3+3s_2(e_1)$ and $w_3(s) = 3+3s_2(e_2)$. Changing to a state $s'$
  where $u_1$ and $u_3$ bilaterally deviate by moving all their budget
  to $e_3$ yields $w_1(s') = 3s_2(e_1) + 4 > w_1(s)$ and $w_3(s') = 3s_2(e_2) + 4 > w_3(s)$. Hence, no
  pairwise equilibrium exists.
\end{example}

Although there are games without pairwise equilibria, there is a large
class of functions for which we can show existence and an efficient
algorithm for computation.

\begin{theorem}
  \label{thm.convex}
  A pairwise equilibrium always exists and can be computed efficiently
  when $f_e \in \mathcal{C}_0$ for all $e \in E$.
\end{theorem}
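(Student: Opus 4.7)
The structure of $\mathcal{C}_0$ suggests that equilibria should look like matchings. The plan is to build one greedily, and then verify that the coordinate-convexity and the boundary condition $f_e(x,0)=0$ force stability against both unilateral and bilateral deviations. The first observation I would establish is a \emph{corner-solution} lemma: for any fixed strategy profile of the other players, each player $v$ optimizes $\sum_{e \in E_v} f_e(s_v(e), s_u(e))$ over the simplex $\{s_v \ge 0: \sum_e s_v(e) \le B_v\}$. Since each $f_e$ is convex in $s_v(e)$ (coordinate-convexity), this objective is a sum of convex functions of a single vector $s_v$, hence convex, and its maximum on the simplex is attained at an extreme point. Thus $v$'s best response is to put all $B_v$ on a single edge (or nothing).

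Combined with $f_e(x,0)=0$, this means an edge contributes nonzero reward only if both endpoints concentrate all of their budget there, which naturally produces a matching-type solution. This motivates the following algorithm. Maintain a set $A \subseteq V$ of \emph{available} players, initialized as $V$. Repeatedly pick the edge $e = (u,v)$ with both endpoints in $A$ that maximizes $f_e(B_u, B_v)$, set $s_u(e) = B_u$, $s_v(e) = B_v$, and remove $u,v$ from $A$. Stop when no such edge exists, and set $s_w \equiv 0$ for every remaining $w \in A$. This is a greedy maximum-weight matching and runs in polynomial time.

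Now I would verify that the resulting state $s$ is a pairwise equilibrium. For \textbf{Nash stability}, a matched player $u$ on edge $e_u = (u,u')$ currently earns $f_{e_u}(B_u, B_{u'})$; any unilateral deviation reduces $s_u(e_u)$ (lowering this term by monotonicity) and shifts budget to other edges, all of which have the other endpoint contributing $0$ and therefore reward $0$ by $\mathcal{C}_0$. Unmatched players have reward $0$ and face $s_w(e)=0$ at every neighbor, so they cannot improve either. For \textbf{pairwise stability}, suppose a pair $(u',v')$ had an improving bilateral deviation on edge $e' = (u',v')$. By the corner-solution lemma applied to both $u'$ and $v'$ after the deviation, the only edges that can give them positive reward are $e'$ itself and, for a matched endpoint, their current matched edge (since all other neighbors still play the greedy profile). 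Hence the best post-deviation reward for, say, $u'$, is $\max\{f_{e_u'}(B_{u'}, B_{u''}),\, f_{e'}(B_{u'}, B_{v'})\}$, where $e_{u'}=(u',u'')$ is $u'$'s current match (absent if $u'$ is unmatched). A case analysis on whether $u'$ and/or $v'$ were matched by the greedy will show that at the moment the greedy committed the first of $u'$ or $v'$, the edge $e'$ was still available, so greedy preferred the chosen match and $f_{e'}(B_{u'}, B_{v'})$ cannot exceed the matched reward — contradicting the profitability of the deviation. The remaining case (both $u'$ and $v'$ unmatched) is impossible unless $f_{e'}(B_{u'}, B_{v'})=0$, since otherwise greedy would have picked $e'$.

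The step I expect to require the most care is the case analysis in the pairwise-stability argument: one has to track the timing of the greedy (who became unavailable first) to conclude that $e'$'s weight was dominated by the weight of the edge that stole $u'$ or $v'$ from the pool, and one must also verify that the other endpoints' strategies are genuinely unchanged by the deviation so that the ``only $e_{u'}$ and $e'$ give nonzero reward'' claim is accurate. Everything else — polynomial runtime, the corner-solution reduction from coordinate-convexity, and the use of $f_e(x,0)=0$ to kill all off-matching contributions — is fairly routine once the algorithm is in place.
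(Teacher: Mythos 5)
Your proposal is correct and follows essentially the same route as the paper: a greedy matching on the weights $f_e(B_u,B_v)$ with matched players committing full budgets, stability argued via the extreme-point (corner-solution) consequence of coordinate-convexity, the condition $f_e(x,0)=0$ to kill off-matching rewards, and the ``first endpoint matched'' timing comparison $c_{e_{u'}}\ge c_{e'}$ for bilateral deviations. The only (harmless) difference is that you pin unmatched players to the all-zero strategy, which lets you skip the increment-comparison argument the paper needs because it allows unmatched players to distribute effort arbitrarily.
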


\begin{proof}
  We sort the edges in $E$ in decreasing order by maximum possible
  reward $c_{u,v} = f_{u,v}(B_u,B_v)$, and let $M$ be the result of a
  ``greedy matching'' algorithm for this order. Specifically, we add
  edges to $M$ in this decreasing order as long as adding the edge
  still results in a matching. This algorithm can be made to run in
  $O(m\log m)$ time. We now show that every state $s$ with
  $s_v(e)=B_v$ iff $e \in M$ is a pairwise equilibrium. The nodes that
  are not matched in $M$ can distribute their effort
  arbitrarily. Their payoff remains 0 since $f_e(x,0)=0$.

  We show the result by contradiction. First, suppose that a node $v$
  is willing to deviate unilaterally. Without loss of generality, we
  can assume this deviation removes effort from an edge of $M$, and
  adds all this effort to a single edge $e=(v,u)\not\in M$. We can
  assume this because when forming its best response, $v$ is
  maximizing the sum of convex functions under a budget constraint
  (since all reward functions are coordinate-convex). This means that
  whenever $v$ has an improving unilateral deviation where it adds
  effort to several edges, it also has an improving unilateral
  deviation where it adds all this effort to a single edge.

  For any edge $e=(v,u)\not\in M$ such that $u$ is matched in $M$,
  there is no reason for $v$ to add effort to $e$, since $s_u(e)=0$
  and $f_e(x,0)=0$.  If $u$ is not matched in $M$, then by moving $x$
  effort from edge $e'=(v,u')\in M$ to $e$, $v$ will obtain utility at
  most $f_e(x,B_u)+f_{e'}(B_v-x,B_{u'})$ instead of
  $f_e(0,B_u)+f_{e'}(B_v,B_{u'})$. This being an improving deviation
  implies that
  \begin{equation}\label{eqn:convexExistence.1}
    f_e(x,B_u)-f_e(0,B_u)>f_{e'}(B_v,B_{u'})-f_{e'}(B_v-x,B_{u'}).
  \end{equation}
  Define $e_v(y)=f_e(y,B_u)$ and $e'_v(y)=f_{e'}(y,B_{u'})$. Since
  $e'$ is chosen before $e$ by the greedy algorithm, it must be that
  $c_{e'}\geq c_e$, and so $e_v(B_v)\leq e'_v(B_v)$. Since $e_v(0) =
  e'_v(0) = 0$, there must be some interval of size $x$ on which
  $e'_v$ increases at least as much as $e_v$. But since both $e_v$ and
  $e'_v$ are convex, the interval $[0,x]$ must be the interval of
  smallest increase, and the interval $[B_v-x,B_v]$ is the interval of
  largest increase. This implies that
  \[ e_v(x)-e_v(0) \leq e'_v(B_v)-e'_v(B_v-x) \enspace, \]
  a contradiction with Inequality
  \ref{eqn:convexExistence.1}. Therefore, we only need to address
  bilateral deviations.

  Suppose that a node $v$ is willing to deviate by switching some $x$
  amount of its effort from edge $e' = (v,u') \in M$ to edge $e =
  (u,v)\not\in M$ as part of a bilateral deviation with $u$. We can
  assume w.l.o.g. that $e'$ was the first edge of $E_v\cup E_u$
  that was added to $M$, and so $c_{e'} \geq c_e$. For $v$ to be
  willing to deviate, it must be that Inequality
  \ref{eqn:convexExistence.1} is satisfied. The rest of the argument
  proceeds as before.
\end{proof}

Theorem~\ref{thm.convex} establishes existence and efficient
computation of equilibria for many functions from class
$\mathcal{C}$. In particular, it shows existence for all convex
functions $f_e$ that are 0-valued when one of its arguments is 0, as
well as for many non-convex ones, such as the weighted product
function $f_e(x,y)=c_e \cdot (xy)$. In fact, when considering
deviations of arbitrary coalitions of players, then it is easy to
verify that the player of the coalition incident to the edge with
maximum possible reward (of all edges incident to the players in the
coalition) does not make a strict improvement in the deviation. Thus,
as a corollary we get existence of strong equilibria.

\begin{cor}
  \label{cor.convex.strong}
  A strong equilibrium always exists and can be computed efficiently
  when $f_e \in \mathcal{C}_0$ for all $e \in E$.
\end{cor}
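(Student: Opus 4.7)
The plan is to prove that the very state $s$ constructed in Theorem~\ref{thm.convex} — greedy matching $M$ in decreasing order of $c_e = f_e(B_u, B_v)$, each matched player placing its full budget on its matching edge — is already a strong equilibrium, so neither the construction nor its running time need to change. I will fix a coalition $C \subseteq V$ and suppose, for contradiction, that $s'_C$ is a deviation strictly improving every $w \in C$. The goal is to exhibit a single player $v \in C$ whose reward cannot have strictly increased.

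The key step is a one-shot cap on the deviation reward of any fixed player. Because each $f_e \in \mathcal{C}_0$ is coordinate-convex in its first argument with $f_e(0, y) = 0$, convexity on $[0, B_v]$ gives $f_e(x, y) \le (x/B_v)\, f_e(B_v, y)$. Writing $z(e)$ for the other endpoint's contribution to $e$ under $(s'_C, s_{-C})$, monotonicity of $f_e$ together with $v$'s budget constraint yields
\[
w_v(s'_C, s_{-C}) \;\le\; \sum_{e \in E_v} \frac{s'_v(e)}{B_v}\, f_e(B_v, z(e)) \;\le\; \max_{e \in E_v} c_e.
\]
It therefore suffices to choose $v \in C$ for which $w_v(s) \ge \max_{e \in E_v} c_e$ already. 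Let $E_C = \bigcup_{w \in C} E_w$; assuming $M \cap E_C \ne \emptyset$, I will take $e^* = (v, u)$ to be the edge of $M \cap E_C$ that greedy added first, with $v \in C$ as an endpoint. The same earliest-in-$M$ argument used inside Theorem~\ref{thm.convex} shows that every $e \in E_C \setminus M$ was blocked at its processing time by some earlier matched edge incident to one of its endpoints, so $c_{e^*} \ge c_e$ for every $e \in E_C$, hence in particular for every $e \in E_v$. Since $v$ is matched via $e^*$ in $s$, the reward $w_v(s) = c_{e^*}$ meets the upper bound above, contradicting strict improvement.

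The leftover case $M \cap E_C = \emptyset$ forces every $w \in C$ to be unmatched; moreover $C$ must then be independent, because any intra-$C$ edge would have been accepted by greedy since both endpoints remain forever unmatched. So for each $v \in C$ and each $e = (v, y) \in E_v$, the neighbor $y$ lies outside $C$ and is either matched through a different edge of $M$ (forcing $s_y(e) = 0$) or unmatched, which by the same greedy argument applied to $(v, y)$ is impossible. Either way $z(e) = 0$ and $f_e(s'_v(e), 0) = 0$, so $w_v(s'_C, s_{-C}) = 0 = w_v(s)$ for every $v \in C$. The one delicate point I expect to require care is arguing that the earliest-in-$M$ edge dominates the maximum of $c_e$ over all of $E_v$ rather than only over $M \cap E_v$; this is exactly the place where the $\mathcal{C}_0$ hypothesis $f_e(\cdot, 0) = 0$ is doing work, and once it is in hand efficient computability is inherited verbatim from Theorem~\ref{thm.convex}.
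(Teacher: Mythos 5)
Your overall plan is the right one and matches the spirit of the paper's (one-line) argument: keep the greedy-matching state of Theorem~\ref{thm.convex} unchanged, choose as witness the coalition player $v$ incident to the earliest-added edge $e^*$ of $M\cap E_C$, and cap deviation payoffs by the chord bound $f_e(x,y)\le (x/B_v)f_e(B_v,y)$ plus the budget constraint. However, the step you yourself flag as delicate is exactly where the written argument breaks: the claim that $c_{e^*}\ge c_e$ for every $e\in E_C$ (even just for every $e\in E_v$) is false. The edge that blocked some $e\in E_C\setminus M$ is only guaranteed to be incident to \emph{some} endpoint of $e$, and that endpoint may lie outside $C$; then the blocking edge is not in $M\cap E_C$ and yields no comparison with $c_{e^*}$. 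Concretely, take a path $a$--$b$--$c$--$d$ with $c_{(a,b)}=5$, $c_{(b,c)}=10$, $c_{(c,d)}=20$, and a coalition $C$ with $b\in C$, $c,d\notin C$ (say $C=\{a,b\}$). Greedy gives $M=\{(c,d),(a,b)\}$, so $e^*=(a,b)$ and $c_{e^*}=5$, while $(b,c)\in E_b\subseteq E_C$ has value $10$. Your cap $\max_{e\in E_v}c_e=10$ then strictly exceeds $w_v(s)=c_{e^*}=5$ for the witness $v=b$, and no contradiction follows; the $\mathcal{C}_0$ hypothesis does not restore the claimed dominance.

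The corollary is still true and the fix is local, but it uses $f_e(x,0)=0$ differently from what you suggest: it neutralizes the offending edges rather than bounding their $c_e$. For the witness $v$ (coalition endpoint of the earliest $e^*\in M\cap E_C$), split $E_v$. If $e=(v,y)$ has $c_e>c_{e^*}$, then $e$ was processed before $e^*$, at which time $v$ was still unmatched, so the block was that $y$ was already matched via an earlier edge; if $y\in C$ that earlier edge would lie in $M\cap E_C$ and precede $e^*$, a contradiction, so $y\notin C$ and $y$ is matched on a different edge, whence $s_y(e)=0$, $y$ does not deviate, and $f_e(s'_v(e),0)=0$: such edges contribute nothing to $v$'s deviation payoff. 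Every remaining edge has $c_e\le c_{e^*}$ (this covers $e^*$ itself, edges to coalition members, and edges to unmatched non-coalition neighbors, the last because $v$'s own matching blocked them), so your chord/budget bound gives $w_v(s'_C,s_{-C})\le \sum_{e}\bigl(s'_v(e)/B_v\bigr)c_{e^*}\le c_{e^*}=w_v(s)$, the desired non-improvement. With this refined cap, together with your (correct) treatment of the case $M\cap E_C=\emptyset$, the proof goes through, and efficiency is indeed inherited from Theorem~\ref{thm.convex}.
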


In general, we can show that deciding existence for pairwise
equilibria for a given game is \classNP-hard, even for very simple
reward functions from $\mathcal{C}$ such as $f_e(x,y) = c_e\cdot
(x+y)$ and $f(x,y) = c_e \cdot (xy)$ with constants $c_e > 0$.

\begin{theorem}
  \label{thm:generalHardness}
  It is \classNP-hard to decide if a network contribution game admits
  a pairwise equilibrium even if all functions are either $f_e(x,y) =
  c_e\cdot(x+y)$ or $f_e(x,y) = c_e\cdot(xy)$.
\end{theorem}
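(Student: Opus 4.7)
The plan is to give a polynomial-time reduction from a standard \classNP-complete problem (3-SAT, or alternatively an NP-hard graph problem such as 3-coloring) to the decision question \emph{``does this network contribution game admit a pairwise equilibrium?''}. The reduction will exploit the sharp contrast in behavior between the two function types: by Theorem~\ref{thm.convex}, any subgame consisting only of $f_e(x,y)=c_e\cdot xy$ edges admits a pairwise equilibrium obtained from a greedy matching, while Example~\ref{exm:noEq} shows that a triangle of $f_e(x,y)=c_e\cdot(x+y)$ edges with coefficients $(3,3,2)$ has \emph{no} pairwise equilibrium. I would use $(xy)$-edges as ``choice/commitment'' gadgets and $(x+y)$-triangles as ``failure detectors.''

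More concretely, I would construct a \textbf{variable gadget} for each Boolean variable $x_i$: a small cluster of nodes connected by $(xy)$ edges with coefficients chosen so that the only pairwise equilibrium profiles correspond to one of two canonical matchings, which we interpret as $x_i=\text{true}$ or $x_i=\text{false}$. For each clause $C_j$ I would construct a \textbf{clause gadget} that is essentially the bad triangle of Example~\ref{exm:noEq}, augmented with three ``escape'' $(xy)$ or $(x+y)$ edges leading out to the variable gadgets corresponding to the three literals of $C_j$. The coefficients should be tuned so that the triangle has a pairwise equilibrium precisely when at least one of its three vertices has an alternative partner available --- and such a partner is available exactly when the associated literal is set true by the variable gadget's chosen matching.

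If this can be made to work, then the global game has a pairwise equilibrium if and only if the 3-SAT instance is satisfiable: a satisfying assignment gives each clause-triangle a way to escape into a satisfied literal's variable gadget, while an unsatisfied clause leaves its triangle in the no-equilibrium configuration of Example~\ref{exm:noEq}, and this local instability propagates because of the matching-like rigidity enforced by the $(xy)$ variable gadgets. The same construction should work using only $(xy)$ edges everywhere except inside the clause triangles, or only $(x+y)$ edges everywhere plus the variable-commitment structure --- the theorem statement allows either pure type in the construction, so one can pick whichever is cleaner.

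The main obstacle I anticipate is step (b): arguing robustly that in \emph{every} candidate pairwise equilibrium the variable-gadget players actually commit their full budgets to one of the two intended matchings and do not split effort between the two sides, and that a node shared between a variable gadget and a clause triangle cannot evade the instability by putting small amounts of effort on many edges. Because $(xy)$ is coordinate-convex and $(x+y)$ is linear, a player with a mixture of the two edge types may in principle have fractional best responses; the coefficients must be chosen (e.g.\ with sufficient separations between ``matched-edge'' and ``clause-edge'' coefficients) so that the convexity argument used in the proof of Theorem~\ref{thm.convex} extends and rules out interior splits. Once each gadget is shown to behave as intended in isolation, and cross-gadget interactions are controlled by scale separation of the constants $c_e$, the equivalence between satisfiability and equilibrium existence should follow.
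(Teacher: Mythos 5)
Your outline follows the same high-level route as the paper: reduce from 3SAT, use the no-equilibrium $(x+y)$-triangle of Example~\ref{exm:noEq} as the clause gadget, use $(xy)$-edges to build a variable-commitment structure, and attach ``escape'' edges whose availability encodes whether a literal is satisfied. However, as written the proposal has genuine gaps, and you have flagged but not closed exactly the steps that carry the weight of the argument. First, your aside that ``one can pick whichever [pure type] is cleaner'' cannot work: if all edges are $c_e\cdot(xy)$ then a pairwise (indeed strong) equilibrium always exists by Theorem~\ref{thm.convex}, and if all edges are $c_e\cdot(x+y)$ then existence is decidable in polynomial time by Theorem~\ref{thm:convexExists}; so the hardness construction must genuinely mix the two types (the paper puts $(x+y)$ only inside the clause triangles and $(xy)$ everywhere else). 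Second, your variable gadget is specified only as ``a cluster of $(xy)$-edges whose only equilibria are two canonical matchings''; that property is both stronger than needed and not obviously achievable, since all-$(xy)$ subgames admit many matching-type equilibria. The paper instead uses a decision player with budget $1$ joined by $7xy$-edges to two assignment players of budget $kl$, and the forcing argument is a slope comparison: in any equilibrium the decision player spends its full budget, so at least one incident edge offers marginal reward at least $3.5x$ to an assignment player, beating the $3x$ slope of any clause edge, hence at least one assignment player is fully absorbed --- which is what encodes the truth value. Note this does not force exactly two configurations (both assignment players may commit, leaving the variable free), and that weaker conclusion suffices.

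The second unresolved step is the forward direction. Even granting a satisfying assignment, you must exhibit a global state and verify it is a pairwise equilibrium, and this is where the quantitative choices matter: one free assignment player may have to stabilize several clause triangles at once. The paper handles this with budgets $kl$, a maximum bipartite matching between free assignment players and triangle players $u_3$ (extended to a one-to-many matching), even splitting of each assignment player's budget so that every matching edge carries at least $k$ from the assignment side, and then slope/payoff comparisons (reward at least $3kl$ for assignment players, at least $9$ versus at most $8$ for $u_3$) plus maximality of the matching to rule out deviations of $u_3$ with unmatched assignment players. None of this is routine ``scale separation''; also note that in the paper all three escape edges of a clause attach to the single vertex $u_3$ of the triangle, not to three different triangle vertices as your sketch suggests, and the remaining triangle players $u_1,u_2$ are stabilized separately on their joint edge. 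Until you specify coefficients and budgets and carry out both the forcing argument and this equilibrium verification, the reduction is a plausible plan rather than a proof.
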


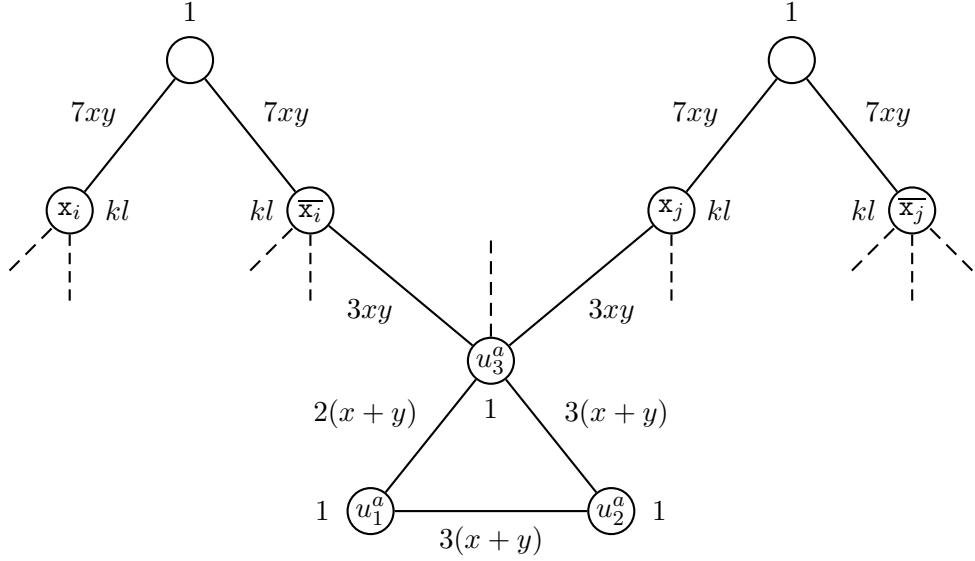
\begin{figure}
  \begin{center}
    \psset{unit=0.08}
    \begin{pspicture}(0,0)(200,90)

      \rput(30,60){\tt x$_i$}
      \cnode(30,60){4}{A}
      \rput(38,60){$kl$}

      \cnode(50,85){4}{B}
      \rput(50,93){$1$}

      \rput(70,60){$\overline{\text{\tt x$_i$}}$}
      \cnode(70,60){4}{C}
      \rput(62,60){$kl$}

      \rput(100,35){$u_3^a$}
      \cnode(100,35){4}{D}
      \rput(100,27){$1$}

      \rput(130,60){\tt x$_j$}
      \cnode(130,60){4}{E}
      \rput(138,60){$kl$}

      \cnode(150,85){4}{F}
      \rput(150,93){$1$}

      \rput(170,60){$\overline{\text{\tt x$_j$}}$}
      \cnode(170,60){4}{G}
      \rput(162,60){$kl$}

      \rput(80,10){$u_1^a$}
      \cnode(80,10){4}{H}
      \rput(72,10){$1$}

      \rput(120,10){$u_2^a$}
      \cnode(120,10){4}{I}
      \rput(128,10){$1$}

      \psline[linestyle=dashed](100,39)(100,55)
      \psline[linestyle=dashed](70,56)(70,45)
      \psline[linestyle=dashed](67,57)(60,50)

      \psline[linestyle=dashed](30,56)(30,45)
      \psline[linestyle=dashed](27,57)(20,50)
      \psline[linestyle=dashed](130,56)(130,45)
      \psline[linestyle=dashed](170,56)(170,45)
      \psline[linestyle=dashed](167,57)(160,50)
      \psline[linestyle=dashed](173,57)(180,50)

      \ncline[linestyle=solid]{A}{B}
      \naput{$7xy$}
      \ncline{B}{C}
      \naput{$7xy$}
      \ncline{C}{D}
      \nbput{$3xy$}
      \ncline{D}{E}
      \nbput{$3xy$}
      \ncline{E}{F}
      \naput{$7xy$}
      \ncline{F}{G}
      \naput{$7xy$}

      \ncline{D}{I}
      \naput{$3(x+y)$}
      \ncline{D}{H}
      \nbput{$2(x+y)$}
      \ncline{H}{I}
      \nbput{$3(x+y)$}
    \end{pspicture}
  \end{center}
  \caption{\label{fig:NPC} Construction for the \classNP-hardness
    proof. Labels inside vertices indicate role of the players, labels
    at vertices are budgets, and edge labels are reward
    functions. Vertices in the top layer are decision players with
    budget 1 corresponding to variables. For each decision player
    there are two adjacent assignment players (second layer) with
    budget $kl$ that indicate setting the decision variables. For each
    clause there is a triangle gadget (third and bottom layer) that by
    itself has no pairwise equilibrium. Connections between assignment
    players and triangle gadgets reflect the occurrences of variables
    in the clauses.}
\end{figure}

\begin{proof}
  We reduce from 3SAT as follows. We consider a 3SAT formula with $k$
  variables and $l$ clauses. For each clause we insert the game of
  Example~\ref{exm:noEq}. For each variable we introduce three players
  as follows. One is a \emph{decision player} that has budget 1. He is
  connected to two \emph{assignment players}, one true player and one
  false player. Both the true and the false player have a budget of $k
  \cdot l$. The edge between decision and assignment players has
  $f_e(x,y) = 7xy$. Finally, each assignment player is connected via
  an edge with $f_e(x,y) = 3xy$ to the player $u_3$ of every clause
  triangle, for which the corresponding clause has an occurrence of
  the corresponding variable in the corresponding form
  (non-negated/negated).

  Suppose the 3SAT instance has a satisfying assignment. We construct
  a pairwise equilibrium as follows. If the variable in the assignment
  is set true (false), we make the decision player contribute all his
  budget to the edge $e$ to the false (true) assignment player. This
  assignment player will contribute his full budget to $e$, because
  $7x$ has steeper slope than $3x$, which is the maximum slope
  attainable on the edges to the triangle gadgets. It is clear that
  none of these players has an incentive to deviate (alone or with a
  neighbor). The remaining set of assignment players $A$ can now
  contribute their complete budget towards the triangle gadgets. As
  the assignment is satisfying, every triangle player $u_3$ of the
  triangle gadgets has at least one neighboring assignment player in
  $A$. We now create a maximum bipartite matching between players in
  $A$ and the $u_3$ players of the triangles. We then extend this and
  connect the remaining (if any) triangle players arbitrarily to
  assignment players from $A$. This creates a one-to-many matching of
  triangle players to players in $A$, with every triangle player being
  matched to exactly one player in $A$, and some players in $A$ possibly unmatched. We set each triangle player to contribute all of his budget towards his edge in the matching. Each assignment player splits his effort evenly between the incident edges in the matching; if the assignment player is unmatched his strategy can be arbitrary. In this matching, each matched
  assignment player can get up to $l$ matching edges. As the triangle
  players contribute all their budget to their matching edge, then each edge in the matching yields a reward of $3x$, with $x$ being the contribution of the assignment player. By splitting his budget evenly, the assignment player
  contributes at least $k$ to each matching edge. Also he receives
  reward exactly $3kl$, which is the maximum achievable for a player
  in $A$ (given that the decision player does not contribute to the
  incident edge). Thus, every matched assignment player in $A$ is
  stable and will not join a bilateral deviation. Consider a triangle
  player $u_3$. As the assignment player he is matched to contributes
  at least $k$ (we assume w.l.o.g.\ $k \ge 3$), the reward function on
  the matching edge grows at least as quickly as $9y$, i.e., with a
  larger slope than the maximum slope achievable on the triangle
  edges. In addition, the reward for $u_3$ by contributing all budget to the matching edge is at least 9. Note that the maximum payoff that he can obtain by contributing only to triangle edges is 8, and therefore he has no
  incentive to join other triangle players in a deviation. Note that
  $u_3$ could potentially achieve higher revenue by deviating with a different assignment player in $A$. However, as noted above no
  matched assignment player has an incentive to deviate jointly with
  $u_3$. Hence, $u_3$ can only join an unmatched assignment
  player. This is only a profitable deviation if $u_3$ currently
  shares his current assignment player with at least one other
  triangle player. However, the possibility that $u_3$ could deviate
  to such an unmatched assignment player contradicts the fact that we
  created a maximum matching between assignment and triangle
  players. Thus, $u_3$ will also stick to his strategy
  choice, and has no incentive to participate in bilateral or unilateral deviations. We can stabilize the remaining pairs of triangle
  players by assigning an effort of 1 towards their joint edge. Finally, the unmatched assignment players $v$ in $A$ are stable since their reward is always 0: no player adjacent to $v$ puts any effort on edges incident to $v$, and no player adjacent to $v$ is willing to participate in a bilateral deviation due to the arguments above.

  Now suppose there is a pairwise equilibrium. Note first that the
  decision player will always contribute his full budget, and there is
  always a positive contribution of at least one assignment player
  towards the decision player edge -- otherwise there is a joint
  deviation that yields higher reward for both players. In particular,
  the decision player contributes only to edges, where the maximum
  contribution of the assignment players is located. As the decision
  player contributes his full budget, there is at least one incident
  edge that grows at least as quickly as $3.5x$ in the contribution $x$
  of the assignment player. Hence, at least one assignment player will
  be motivated to remove all contributions from the edges to the
  triangle players, as these edges grow at most by $3x$ in the
  contribution $x$ of the assignment player. He will instead invest
  all of his budget towards the decision player. This implies that
  every pairwise equilibrium must result in a decision for the
  variable, i.e., if the (false) true assignment players contributes
  all of his budget towards the decision player, the variable is set
  (true) false. If both players do this, the variable can be chosen
  freely. As there is a stable state, the contributions of the
  remaining assignment players must stabilize all triangle gadgets. In
  particular, this means that for each clause triangle there must be
  at least one neighboring assignment player that does not contribute
  all of his budget towards his decision player. This implies a
  satisfying assignment for the 3SAT instance.
\end{proof}

Finally, let us focus on an interesting special case. The hardness in
the previous theorem comes from the interplay of reward functions $xy$
that tend to a clustering of effort and $x+y$ that create cycles. We
observed above that if all functions are $c_e \cdot(xy)$, then
equilibria exist and can be computed efficiently. Here we show that
for the case that $f_e(x,y) = c_e\cdot(x+y)$ for all $e \in E$, we can
decide efficiently if a pairwise equilibrium exists. Furthermore, if
an equilibrium exists, we can compute it in polynomial time.

\begin{theorem}\label{thm:convexExists}
  There is an efficient algorithm to decide the existence of a
  pairwise equilibrium, and to compute one if one exists, when all
  reward functions are of the form $f_e(x,y) = c_e \cdot (x+y)$ for
  arbitrary constants $c_e > 0$. Moreover, the price of anarchy is 1
  in this case.
\end{theorem}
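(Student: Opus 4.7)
The argument will leverage the linearity of the reward $f_e(x,y) = c_e(x+y)$, which cleanly separates each player's own effort contribution from what it receives from neighbors. First I would note that $w_v(s) = \sum_{e\in E_v} c_e\,s_v(e) + (\text{terms independent of } s_v)$, so every best response places all of $v$'s budget on the set $M_v := \arg\max_{e\in E_v} c_e$ of maximum-coefficient incident edges. Hence in every Nash equilibrium the welfare equals $w(s) = 2\sum_v c_v^* B_v$, where $c_v^* := \max_{e\in E_v} c_e$, which also equals $\max_s w(s)$. Because every pairwise equilibrium is in particular a Nash equilibrium, this immediately gives the price of anarchy of $1$.

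For existence, I would fix a Nash equilibrium $s$ and consider a candidate bilateral deviation $(s'_u, s'_v)$ on the edge $e = (u,v)$. Since the Nash condition saturates each player's own utility at $c_v^* B_v$, the change for $u$ simplifies to $c_e(s'_v(e) - s_v(e))$ if $e\in M_u$, and to $c_e(s'_v(e) - s_v(e)) - (c_u^* - c_e)\,s'_u(e)$ if $e\notin M_u$ (using that $s_u(e) = 0$ in the latter case); a symmetric formula holds for $v$. A short calculation shows that any profitable deviation necessarily increases both $s_u(e)$ and $s_v(e)$, and splitting into four cases by whether $e\in M_u$ and $e\in M_v$ produces exactly three obstructions to pairwise stability: a \emph{structural} condition $c_e^2 \le (c_u^* - c_e)(c_v^* - c_e)$ for every edge $e=(u,v)$ with $c_e < \min(c_u^*, c_v^*)$ (obtained by multiplying the two per-player profitability inequalities); a \emph{forcing} condition $s_u(e) = B_u$ for every $e\in M_u\setminus M_v$; and a \emph{coverage} condition that for every shared-max edge $e\in M_u\cap M_v$, at least one of $s_u(e)=B_u$ or $s_v(e)=B_v$ holds.

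Turning this into an algorithm requires a final combinatorial step. For each vertex $v$ let $D_v := \{e\in M_v : c_{u(e)}^* > c_e\}$ where $u(e)$ denotes the other endpoint; the forcing condition becomes $|D_v|\le 1$, and if $|D_v|=1$ the strategy of $v$ is completely pinned to the unique edge of $D_v$. Call $v$ \emph{free} when $D_v = \emptyset$; one verifies that then $M_v$ consists exclusively of shared-max edges. Since a forced vertex commits its entire budget to a non-shared-max edge, every shared-max edge must be saturated (with $s = B$) at some free endpoint, and each free vertex can saturate at most one edge. The coverage condition is therefore equivalent to the existence of a matching in the bipartite graph between free vertices and shared-max edges (with incidence as adjacency) that saturates the shared-max side. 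The whole procedure is then: verify the structural condition for every edge $e$ with $c_e < \min(c_u^*, c_v^*)$; verify $|D_v|\le 1$ for every $v$; compute the bipartite matching; and if all checks succeed, output the induced strategy profile. Every step runs in polynomial time, and the main technical subtlety is the case analysis that isolates the structural inequality $c_e^2 \le (c_u^* - c_e)(c_v^* - c_e)$ and recognizes that the residual constraints reduce cleanly to bipartite matching.
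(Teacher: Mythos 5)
Your proposal is correct and follows essentially the same route as the paper's proof: Nash equilibria put all effort on maximum-coefficient incident edges (giving price of anarchy 1), Type-1/forced edges pin a vertex's whole budget (so at most one per vertex), shared-max edges need one saturated endpoint found via a matching/flow computation among unforced vertices, and the remaining edges are handled by exactly the inequality $(c_u^*-c_e)(c_v^*-c_e)\geq c_e^2$. The only differences are presentational (your explicit $D_v$ bookkeeping and bipartite-matching formulation versus the paper's "flow or matching arguments"), so nothing further is needed.
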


\begin{proof}
  Let $S^*$ be the set of socially optimum solutions. These are
  exactly the solutions where every player $v$ puts effort only on
  edges with maximum $c_e$. $S^*$ are exactly the solutions that are
  stable against unilateral deviations, which immediately tells us
  that if a pairwise equilibrium exists, then the price of anarchy is
  1.

  Not all solutions in $S^*$ are stable against bilateral deviations,
  however. Denote $c_v = \max_{e \in E_v} c_e$. Let $E_v^*$ be the set
  of edges incident to $v$ with value $c_v$. In any unilaterally
  stable solution, a node $v$ must put all of its effort on edges in
  $E_v^*$. We first show how to determine if a pairwise stable
  solution exists if the only edges in the graph are $\cup_v E_v^*$.

  Consider an edge $e=(u,v)$ such that $e\in E_u^*$ but $e\not\in
  E_v^*$ (call such an edge "Type 1"). Then in any pairwise stable
  solution, the node $u$ must contribute {\em all} of its effort to
  edge $e$. Otherwise $u$ and $v$ could deviate by $v$ adding some
  amount $\e>0$ to $e$, and $u$ adding $>\e(c_v-c_u)/c_u$ to $e$. This
  is possible for small enough $\e$, and would improve the reward for
  both $u$ (by $\e c_u$) and $v$ (by
  $>\e(c_v-c_u)-\e(c_v-c_u)=0$). Therefore, for every edge $e=(u,v)$
  of this type, we can fix the contributions of node $u$, since they
  will be the same in any stable solution. If the same node $u$ has
  two or more such incident edges $e$, then by the above argument we
  immediately know that there does not exist any pairwise equilibrium.

  Now consider edges $e=(u,v)$ which are in $E_u^*\cap E_v^*$, which
  implies that $c_u=c_v$. For any such edge, either $c_u(e)=B_u$ or
  $c_v(e)=B_v$ in any pairwise equilibrium. If this were not the case,
  then both $u$ and $v$ could add some $\e$ amount of effort to $e$
  and benefit from this deviation by $2\e c_u$ amount. Consider a
  connected component consisting of such edges. We can use simple flow
  or matching arguments to find if there exists an assignment of nodes
  to edges such that every edge has at least one adjacent node
  assigned to it. We then set $c_u(e)=B_u$ if node $u$ is assigned to
  edge $e$. We also make sure not to assign a node that already used
  its budget on a Type 1 edge to {\em any} edge in this phase. As
  argued above, if such an assignment does not exist, then there is no
  pairwise equilibrium. Conversely, any such assignment yields a
  pairwise equilibrium, since for every edge, at least one of the
  endpoints of this edge is using all of its effort on this edge. Thus
  we are able to determine exactly when pairwise equilibria exist on
  the set of edges $\cup_v E_v^*$. Call the set of such solutions $S$.

  All that is left to check is if one of these solutions is stable
  with respect to bilateral deviations on edges $e\not\in \cup_v
  E_v^*$. If one solution in $S$ is a pairwise equilibrium in the
  entire graph, then all of them are, since when moving effort onto an
  edge $e$, a node does not care which edge it removes the effort
  from: all the edges with positive effort have the same slope. To
  verify that a pairwise equilibrium exists, we simply consider every
  edge $e=(u,v)\not\in \cup_v E_v^*$ with reward function $c_e\cdot
  (x+y)$, and check if $(c_u-c_e)(c_v-c_e)\geq c_e^2$. We claim that a
  pairwise equilibrium exists iff this is true for all edges.

  Consider a bilateral deviation onto edge $e$ where $u$ contributes
  $\e_1$ effort and $v$ contributes $\e_2$ effort. This would be an
  improving deviation exactly when $c_e\e_2>(c_u-c_e)\e_1$ and
  $c_e\e_1>(c_v-c_e)\e_2$. Fix $\e_1>0$ to be some arbitrarily small
  value; then there exists $\e_2$ satisfying the above conditions
  exactly when $(c_u-c_e)/c_e<c_e/(c_v-c_e)$, which is true exactly
  when $(c_u-c_e)(c_v-c_e)< c_e^2$, as desired.
\end{proof}


\subsection{Price of Anarchy}
\label{sec:convexPoA}

This section is devoted to proving the following theorem.

\begin{theorem}
  \label{thm.classC}
  For the class of network contribution games with reward functions
  $f_e \in \mathcal{C}$ for all $e \in E$ that have a pairwise
  equilibrium, the prices of anarchy and stability for pairwise
  equilibria are exactly 2.
\end{theorem}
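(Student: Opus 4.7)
The plan is to prove the upper bound of $2$ on the price of anarchy (which automatically bounds the price of stability) via a matching-style charging argument rooted in the pairwise-stability condition, and then to exhibit tight instances. Two local inequalities do the heavy lifting. For any edge $e=(u,v)$, the bilateral deviation in which $u$ and $v$ both commit their entire budgets to $e$ is feasible and yields each of them the value $M_e := f_e(B_u,B_v)$, so pairwise stability gives
\[
M_e \;\le\; \max\bigl(w_u(s),\,w_v(s)\bigr).
\]
The unilateral Nash deviation of $v$ to ``all effort on $e$'' similarly gives $w_v(s) \ge f_e(B_v,s_u(e)) \ge f_e(B_v,0)$ by monotonicity.

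To apply these inequalities cleanly, I would perform two structural reductions. First, replacing each $f_e$ by $\tilde f_e := f_e - f_e(0,0)$ preserves coordinate-convexity, monotonicity, symmetry, and nonnegativity, and shifts every player's utility (and the social welfare) by an additive constant depending only on the graph; this leaves the sets of pairwise equilibria and of optima unchanged. Because the welfare shift is nonnegative, a PoA bound of $2$ in the shifted game implies the same bound in the original, so I may assume $f_e(0,0)=0$. Second, since each $f_e$ is convex in each variable, the welfare is a convex function of every player's effort vector and attains its maximum at a vertex of the budget simplex; hence I may take $s^*$ to be a vertex-strategy profile in which each player $v$ places its entire budget on a single edge $e^*(v)$. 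The opt-active edges then partition into a matching $T_2^*$ of edges chosen by both endpoints (per-edge welfare contribution $M_e$) and a set $T_1^*$ of edges chosen by exactly one ``owner'' endpoint $v$ (per-edge contribution $f_e(B_v,0)$), while all remaining edges contribute $f_e(0,0)=0$.

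The charging step orients each $e\in T_2^*$ toward an endpoint realizing the max in the pairwise-stability inequality and each $e=(v,u)\in T_1^*$ toward its owner $v$. At every vertex $v$, exactly one of ``$e^*(v)\in T_2^*$'' or ``$e^*(v)\in T_1^*$'' occurs, and these are mutually exclusive: in the first case $v$ is the owner of no $T_1^*$-edge, and in the second $v$ is not an endpoint of any $T_2^*$-edge (as that would require $v$ to have chosen it). Hence each vertex is the target of at most one inbound arrow, and each charge is at most $w_{\text{target}}(s)$ by the two key inequalities. Summing,
\[
\sum_{e\in T_2^*} M_e \;+\; \sum_{e\in T_1^*} f_e(B_v,0) \;\le\; \sum_{v\in V} w_v(s) \;=\; w(s),
\]
so $w(s^*) = 2\bigl(\sum_{T_2^*}M_e + \sum_{T_1^*}f_e(B_v,0)\bigr) \le 2\,w(s)$.

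For tightness I would adapt the standard correlated stable matching example: take four vertices $a,b,c,d$ with unit budgets, edges $(a,b),(b,c),(a,d)$ and $f_e(x,y)=xy$. The profile in which only $a,b$ contribute on $(a,b)$ is a pairwise equilibrium of welfare $2$, while the profile pairing $b,c$ on $(b,c)$ and $a,d$ on $(a,d)$ achieves welfare $4$; this witnesses PoA $=2$, and a slightly richer construction (where the optimum is not itself pairwise-stable) witnesses PoS $=2$. The main obstacle is the in-degree bound in the charging step: getting the tight factor $2$ requires \emph{both} reductions (otherwise the residual $f_e(0,0)$ terms cost an extra additive $w(s)/2$, giving only PoA $\le 3$), and the orientation of $T_1^*$-edges to their owners is only valid because of the unilateral (Nash) best-response condition rather than merely the bilateral one.
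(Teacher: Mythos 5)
Your upper-bound argument is correct and is essentially the paper's own proof: the bilateral deviation ``both endpoints dump their full budgets on $e$'' gives $f_e(B_u,B_v)\le\max(w_u(s),w_v(s))$, the unilateral deviation handles edges used by only one endpoint in the optimum, and the fact that each vertex spends its whole budget on a single optimal edge means each vertex is charged at most once, so $\sum_e w_e(s^*)\le w(s)$ and $w(s^*)=2\sum_e w_e(s^*)\le 2w(s)$. Your route to the structured optimum (welfare is a separable sum of coordinate-convex terms in each player's effort vector, hence maximized player-by-player at a vertex of the budget simplex) is a mild variant of the paper's exchange argument (Claim on tight optima); note that you need the player-by-player iteration, since welfare is convex in each player's vector separately but not jointly. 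Your normalization $\tilde f_e=f_e-f_e(0,0)$ is a legitimate extra step that handles edges receiving no effort in $s^*$, and your observation that the $T_1^*$ charging genuinely needs the Nash (unilateral) part of pairwise stability matches the paper.

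The one genuine gap is the lower bound. The theorem asserts that \emph{both} the price of anarchy and the price of stability equal $2$, so you must exhibit an instance in which the \emph{best} pairwise equilibrium has welfare about half of optimal. Your four-vertex example with $f_e(x,y)=xy$ on edges $(a,b),(b,c),(a,d)$ only shows PoA $\ge 2$: the optimum profile (pair $b$ with $c$ and $a$ with $d$) is itself a pairwise equilibrium there, so that instance has price of stability $1$, and the ``slightly richer construction'' you invoke for PoS is never specified. The paper's construction does this explicitly: a path $u$--$v$--$w$--$z$ with unit budgets, $f_{e_1}=f_{e_3}=xy$ and $f_{e_2}=(1+\varepsilon)xy$ on the middle edge. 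In every pairwise equilibrium the joint deviation of $v,w$ onto $e_2$ forces both of them to put their full budgets on $e_2$ (otherwise at least one of them strictly gains), so every equilibrium has welfare $2+2\varepsilon$ while the optimum achieves $4$, giving PoS $\to 2$. Adding such an instance (or an equivalent one where all equilibria are forced to be bad) is needed to complete the ``exactly $2$'' claim; everything else in your proposal stands.
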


We will refer to an edge $e=(u,v)$ as being {\em slack} if
$B_u>s_u(e)>0$ and $B_v>s_v(e)>0$, {\em half-slack} if $B_u>s_u(e)>0$
but $s_v(e)\in\{B_v,0\}$, and {\em tight} if $s_u(e)\in\{B_u,0\}$ and
$s_v(e)\in\{B_v,0\}$. We will call a solution {\em tight} if it has
only tight edges.


\begin{claim}
  \label{claim.optIntegral}
  If all reward functions belong to class $\mathcal{C}$, then there
  always exists a tight optimum solution. If all reward functions
  belong to class $\mathcal{C}'$, then all optimum solutions are
  tight.
\end{claim}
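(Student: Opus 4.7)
My plan is to exploit the separability of the social welfare with respect to any single player's strategy, together with the fact that maxima of convex functions on polytopes lie at vertices.

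First, I would rewrite the social welfare in a convenient form. Using symmetry of $f_e$, we have
\[
w(s) \;=\; \sum_{v \in V} \sum_{e=(v,u) \in E_v} f_e(s_v(e), s_u(e)) \;=\; 2 \sum_{e=(u,v) \in E} f_e(s_u(e), s_v(e)).
\]
Fix a player $v$ and freeze the strategies of all other players. Then as a function of $s_v$ alone, the social welfare equals $\sum_{e=(v,u) \in E_v} 2 f_e(s_v(e), s_u(e))$ plus a constant. Since each $f_e$ is coordinate-convex, each summand $x \mapsto f_e(x, s_u(e))$ is convex, so the welfare is a sum of separable convex functions of the coordinates of $s_v$. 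In particular, it is a convex function of $s_v$ on the feasibility polytope $P_v = \{s_v \in \R_{\geq 0}^{E_v} : \sum_{e \in E_v} s_v(e) \leq B_v\}$. If all $f_e \in \mathcal{C}'$, the same argument with strict second derivatives shows the welfare is strictly convex in $s_v$.

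The vertices of $P_v$ are precisely the points where either $s_v \equiv 0$ or $s_v(e) = B_v$ for exactly one $e \in E_v$ (and zero on the others). Observe that the state $s$ has all edges tight if and only if every player's strategy lies at a vertex of its polytope: tightness of every edge $e=(u,v)$ forces $s_v(e) \in \{0, B_v\}$ for every $e$, which combined with the budget constraint means $v$'s strategy is a vertex of $P_v$; the converse is immediate.

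Now, for the $\mathcal{C}$ case, start with any social optimum $s^*$. Process the players one at a time: for player $v$, since $w(\cdot, s^*_{-v})$ is convex on $P_v$, its maximum on $P_v$ is attained at some vertex $\tilde{s}_v$ of $P_v$. Replace $s^*_v$ by $\tilde{s}_v$; the welfare cannot decrease, so (since $s^*$ was optimal) we still have a social optimum, and $v$ is now at a vertex of $P_v$. Since this modification does not touch any other player's strategy, previously processed players remain at vertices of their polytopes. After processing all players we obtain an optimum solution in which every edge is tight. For the $\mathcal{C}'$ case, strict convexity of $w(\cdot, s^*_{-v})$ on $P_v$ rules out any maximizer on $P_v$ that is not a vertex, because any non-vertex point lies in the relative interior of some segment between two distinct points of $P_v$, along which the value strictly exceeds its endpoint average. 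Hence in any optimum solution every player must already be at a vertex, which is exactly the statement that every edge is tight.

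There is no real obstacle here beyond bookkeeping; the only thing worth double-checking is the characterization of tightness via vertices of $P_v$, and the observation that replacing one player's strategy does not move any other player away from its vertex. Both are immediate from the definitions.
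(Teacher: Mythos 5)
Your proof is correct, and it takes a genuinely different route from the paper's. The paper argues locally: if in an optimum a node $v$ puts nonzero effort on two incident edges, then the fact that shifting $\varepsilon$ effort in \emph{either} direction cannot increase welfare, combined with coordinate-convexity, forces the two marginal changes to be equal; for $\mathcal{C}'$ this is a contradiction, and for $\mathcal{C}$ it lets $v$ consolidate its effort onto one edge at no loss, repeating until no slack or half-slack edges remain. You instead freeze $s_{-v}$ and observe that the welfare is a separable sum of coordinate-convex terms in the coordinates of $s_v$, hence convex (strictly convex for $\mathcal{C}'$) on the budget polytope $P_v$, whose vertices are exactly the tight allocations for $v$; the vertex-maximum property of convex functions on a polytope then produces a tight optimum in a single player-by-player pass, and strict convexity shows every optimum already sits at vertices. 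Your version buys brevity and structural clarity (tightness is literally ``every player at a vertex of $P_v$,'' and termination after $n$ steps is automatic, whereas the paper must note that each consolidation step reduces the number of non-tight edges), while the paper's finite-difference argument never needs the restricted welfare as a multivariate convex function and carries over essentially unchanged without smoothness. One wording slip to fix: for a strictly convex function the value at an interior point of a segment is strictly \emph{below} the average of the endpoint values (you state the opposite direction), but the conclusion you draw from it --- that a non-vertex point cannot be a maximizer --- is the correct one.
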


\begin{proof}
  Let $s$ be a solution with maximum social welfare, and let node $v$
  be a node that uses non-zero effort on two adjacent edges: $e=(u,v)$
  and $e'=(w,v)$. For simplicity, we will denote $f_e$ by $f$ and
  $f_{e'}$ by $g$. Furthermore, we denote $s_u(e)$ by $\alpha_u$,
  $s_v(e)$ by $\alpha_v$, $s_v(e')$ by $\beta_v$, $s_w(e')$ by
  $\beta_w$. The fact that $v$ switching an $\varepsilon$ amount of
  effort from $e$ to $e'$ or from $e'$ to $e$ does not increase the
  social welfare means that:

  \begin{equation}
    f(\alpha_v+\varepsilon,\alpha_u)-f(\alpha_v,\alpha_u) \leq g(\beta_v,\beta_w)- g(\beta_v-\varepsilon,\beta_w)
  \end{equation}

  and that

  \begin{equation}
   g(\beta_v+\varepsilon,\beta_w) - g(\beta_v,\beta_w) \leq f(\alpha_v,\alpha_u) - f(\alpha_v-\varepsilon,\alpha_u).
  \end{equation}

  We know from coordinate-convexity that $g(\beta_v,\beta_w) -
  g(\beta_v-\varepsilon,\beta_w) \leq g(\beta_v+\varepsilon,\beta_w) -
  g(\beta_v,\beta_w)$. Therefore, we have that

  \begin{equation}
    f(\alpha_v+\varepsilon,\alpha_u) - f(\alpha_v,\alpha_u) \leq f(\alpha_v,\alpha_u) - f(\alpha_v-\varepsilon,\alpha_u).
  \end{equation}

  This is not possible if $f$ is in $\mathcal{C}'$, giving us a
  contradiction, and completing the proof for $f\in\mathcal{C}'$. For
  $f\in\mathcal{C}$, this tells us that $v$ moving its effort from one
  of these edges to the other will not change the social welfare, and
  so we can create an optimum solution with one less half-slack edge
  by setting $\varepsilon=\min(\alpha_v,\beta_v)$. We can continue
  this process to end up with a tight optimum solution, as desired.
\end{proof}


\begin{extraproof}{Theorem~\ref{thm.classC}} Let $s$ be a pairwise
  stable solution, and $s^*$ an optimum solution. By
  Claim~\ref{claim.optIntegral} we can assume that $s^*$ is tight.


  Define $w_e(s)$ to be the reward of edge $e$ in $s$, and $w_e(s^*)$
  to be the reward of $e$ in $s^*$. Recall that for a node $v$, the
  utility of $v$ is $w_v(s)=\sum_{e\in E_v}
  w_e(s)$.
  %
  %
  Let $e=(u,v)$ be an arbitrary tight edge in $s^*$. If $s^*_u(e)=B_u$
  and $s^*_v(e)=B_v$, then consider the bilateral deviation from $s$
  where both $u$ and $v$ put all their effort on edge $e$. Since $s$
  is pairwise stable, there must be some node (wlog node $u$) such
  that $w_u(s)\geq w_e(s^*)=f_e(B_u,B_v)$. Make this node $u$ a
  witness for edge $e$. If instead $s^*_u(e)=B_u$ and $s^*_v(e)=0$,
  then consider the unilateral deviation from $s$ where $u$ puts all
  its effort on edge $e$. Since $s$ is stable against unilateral
  deviation, then $w_u(s)\geq w_e(s^*)=f_e(B_u,0)$. Make this node $u$
  a witness for edge $e$.

  Notice that every node can be a witness for at most one edge, since
  for a node $u$ to be a witness to edge $e$, it must be that
  $s^*_u(e)=B_u$.  Therefore, we know that $\sum_v w_v(s)\geq \sum_e
  w_e(s^*)$. Since the total social welfare in $s^*$ is exactly
  $2\sum_e w_e(s^*)$, we know that the price of anarchy for pairwise
  equilibria is at most 2.

  Finally, let us establish tightness of this bound. Consider a path
  of four nodes with uniform budgets and edges $e_1 = (u,v)$,$e_2 =
  (v,w)$ and $e_3 = (w,z)$. The reward functions are $f_{e_1}(x,y) =
  f_{e_3}(x,y) = xy$ and $f_{e_2}(x,y) = (1+\varepsilon)xy$. $v$ and
  $w$ achieve their maximum reward by contributing their full budget
  to $e_2$, hence they will apply this strategy in every pairwise
  equilibrium. This leaves no reward for $u$ and $z$ and gives a total
  welfare of $2+2\varepsilon$. If the players contribute only to $e_1$
  and $e_3$, the total welfare is $4$. Hence, the price of stability
  for pairwise equilibria is at least 2, which matches the upper bound
  on the price of anarchy.
\end{extraproof}

For completeness, we also present a result similar to Claim \ref{claim.optIntegral} for pairwise equilibrium solutions.

\begin{claim}
  \label{claim.stableIntegral}
  If all reward functions belong to class $\mathcal{C}$, with every reward function $f_e$ having the property that $\frac{\partial^2 f_e}{\partial x\partial y}\geq 0$, then for every pairwise equilibrium, there exists a pairwise equilibrium of the same welfare without slack edges.
\end{claim}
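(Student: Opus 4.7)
The proof mirrors the structure of Claim~\ref{claim.optIntegral} by substituting pairwise stability for optimality. Starting from a pairwise equilibrium $s$ with a slack edge $e=(u,v)$, the plan is to exhibit a welfare-preserving modification that makes $e$ tight and remains a pairwise equilibrium, and then iterate over the remaining slack edges.

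The first ingredient is a marginal-equality condition extracted from unilateral Nash stability. By considering infinitesimal transfers of $u$'s effort between pairs of positive-effort edges and invoking coordinate-convexity in $u$'s coordinate, each reward function $f_{e'}$ incident to $u$ must be affine in $u$'s coordinate on a neighborhood of $s_u(e')$, with a common slope $\lambda_u$; if $u$ does not saturate its budget then monotonicity forces $\lambda_u=0$. Invoking Nash stability against the full budget-concentration deviation (moving everything onto $e$, or off of $e$) extends the affinity to the whole interval $[0,B_u]$ on the relevant edges. The analogous statements hold for $v$ with value $\lambda_v$.

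The second ingredient uses bilateral stability together with $\partial^2 f_e/\partial x\,\partial y\ge 0$. Consider the joint deviation in which $u$ and $v$ simultaneously move small amounts $\epsilon$ and $\mu$ onto $e$ from other positive-effort edges. Expanding $f_e(s_u(e)+\epsilon,s_v(e)+\mu)$ as an increment in each coordinate, using coordinate-convexity in each argument together with the nonnegative cross partial, the net payoff change is at least $\mu\lambda_v$ for $u$ and at least $\epsilon\lambda_u$ for $v$. Pairwise stability forbids both strict gains, so $\lambda_u\lambda_v=0$; assume without loss of generality $\lambda_v=0$. Coordinate-convexity and monotonicity now force $f_e(x,s_u(e))$ to be constant in $x$ on $[0,s_v(e)]$, and the budget-concentration NE condition extends the constancy to $x\in[0,B_v]$. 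Finally, the mixed-partial hypothesis propagates $\partial_1 f_e(x,s_u(e))=0$ downward in $y$, so that $f_e(x,y)=f_e(0,y)$ on the rectangle $[0,B_v]\times[0,s_u(e)]$, and by symmetry $f_e(x,y)=f_e(x,0)$ on $[0,s_u(e)]\times[0,B_v]$.

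In the third step I construct $s'$ by transferring all of $u$'s effort on $e$ to an edge $e'$ where $u$ already places positive effort (or simply removing it if $u$ is unsaturated), and then setting $s'_v(e)=0$, either redistributing $v$'s freed budget among edges whose marginals are already zero or leaving it unused. The affine slope $\lambda_u$ on $e'$ and the matching slope of $f_e$ in $u$'s coordinate make the welfare change from the first move cancel exactly, and the rectangle of constancy from step two implies that zeroing out $s_v(e)$ changes no reward on $e$; hence welfare is preserved and $e$ becomes tight. The main obstacle is verifying that $s'$ is still pairwise stable. Unilateral stability reduces to noting that $u$'s new marginal on $e$ lies in the flat rectangle and hence equals $0\le\lambda_u$, while $v$'s marginals remain zero throughout. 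Bilateral stability requires ruling out that some pair of players can re-exploit the vacated $e$; here the key point is that inside the flat rectangle the cross partial $\partial^2 f_e/\partial x\,\partial y$ vanishes, so any coordinated increase on $e$ yields only first-order gains separately bounded by $\lambda_u$ and $\lambda_v=0$, matching exactly the constraints that held in the original $s$. Iterating on the remaining slack edges then produces the desired pairwise equilibrium of the same welfare with no slack edges.
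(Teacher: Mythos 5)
Your first two steps are sound and closely parallel the paper's: unilateral stability plus coordinate-convexity forces a common slope $\lambda_u$ (affine, by the budget-concentration argument) on $u$'s positive-effort edges, bilateral stability on $e$ forces $\lambda_u\lambda_v=0$, and the mixed-partial hypothesis then gives the flat rectangle in $v$'s coordinate. The genuine divergence, and the genuine gap, is in your third step. The paper makes the slack edge tight by having $v$ \emph{add} effort to $e$ (taken from $e_2$); this changes no edge reward and no player's utility, so only $v$'s strategy moves and the stability re-verification is contained to a handful of cases (deviations through $v$, $u$ adding/removing on $e$, and $w_2$). You instead \emph{empty} $e$, shifting $u$'s effort $s_u(e)$ onto another edge $e'=(u,w_1)$. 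When $\lambda_u>0$ this changes edge rewards and utilities: $e$'s reward drops by $s_u(e)\lambda_u$, $e'$'s rises by the same amount, so $v$'s utility strictly decreases and $w_1$'s strictly increases. Welfare is indeed preserved, but your stability argument only discusses players ``re-exploiting the vacated $e$'' and the marginals of $u$ and $v$, and this does not cover the new obligations your move creates.

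Concretely: because $\partial^2 f_{e'}/\partial x\,\partial y\ge 0$ and $u$'s effort on $e'$ went up, $w_1$'s marginal on $e'$ weakly increases, so a unilateral (or bilateral, with a partner of $w_1$) deviation that adds effort to $e'$ can be strictly improving in $s'$ even though it was not in $s$; ruling this out requires an extra argument that converts such a deviation into a bilateral deviation of $(u,w_1)$ from $s$ (analogous to how the paper handles $u$ adding $\delta$ to $e$ in its construction), and your proposal never mentions $w_1$ at all. Similarly, since $v$'s baseline utility strictly dropped, deviations of $v$ with a neighbor $z$ cannot simply be ``the same constraints as in $s$''; one must check that the drop in $v$'s baseline is exactly matched by the drop in $v$'s payoff on $e$ under any such deviation (which does follow from the flat rectangle, but needs to be said). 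Finally, your remark that ``inside the flat rectangle the cross partial vanishes'' does not control coordinated deviations that push $u$'s contribution on $e$ above $s_u(e)$, i.e.\ outside the rectangle. So the construction can likely be repaired, but as written the verification that $s'$ is still a pairwise equilibrium is incomplete, whereas the paper's choice of direction (filling $e$ rather than emptying it) is precisely what keeps all rewards and utilities fixed and makes that verification go through.
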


\begin{proof}
  Let $s$ be a pairwise stable solution, and suppose it contains a
  slack edge $e=(u,v)$.


  This means that nodes $u$ and $v$ have other adjacent edges where
  they are contributing non-zero effort. Let those edges be
  $e_1=(u,w_1)$ and $e_2=(v,w_2)$. For simplicity, we will denote
  $f_e$, by $f$, $f_{e_1}$ by $f_1$, and $f_{e_2}$ by
  $f_2$. Furthermore, we denote $s_v(e)$ by $\alpha_v$, $s_u(e)$ by
  $\alpha_u$, $s_u(e_1)$ by $\beta_u$, $s_v(e_2)$ by $\gamma_v$ and
  for $w_1$ and $w_2$ accordingly.

  For any value $\e\leq\min\{\alpha_u,\beta_u\}$, it must be that $u$
  cannot unilaterally deviate by moving $\e$ effort from $e$ to $e_1$,
  or from $e_1$ to $e$. Therefore, we know that
  $f(\alpha_u+\e,\alpha_v) - f(\alpha_u,\alpha_v)\leq
  f_1(\beta_u,\beta_{w_1})- f_1(\beta_u-\e,\beta_{w_1})$, and that
  $f_1(\beta_u+\e,\beta_{w_1}) - f_1(\beta_u,\beta_{w_1})\leq
  f(\alpha_u,\alpha_v) - f(\alpha_u-\e,\alpha_v)$. Since $f$ and $f_1$
  are coordinate-convex, however, we know that $f(\alpha_u,\alpha_v) -
  f(\alpha_u-\e,\alpha_v)\leq f(\alpha_u+\e,\alpha_v) -
  f(\alpha_u,\alpha_v)$ (and similarly for $f_1$), which implies that
  the above inequalities hold with equality.  Specifically, it implies
  that for both $f$ and $f_1$, increasing $u$'s effort by $\e$ causes
  the same difference in utility as decreasing it by $\e$. This simply
  quantifies the fact that for a node to put effort on more than one
  edge in a stable solution, it should be indifferent between those
  two edges.

  For any $\e\leq\min\{\alpha_v,\alpha_u,\beta_u,\gamma_v\}$, consider
  the pairwise deviation where $u$ and $v$ both move $\e$ amount of
  effort to $e$ from $e_1$ and $e_2$. Suppose w.l.o.g.\ that node $u$
  is not willing to deviate in this manner because it does not
  increase its utility. This means that

  \begin{equation}\label{stableIntegral:eqn1}
    f(\alpha_u+\e, \alpha_v+\e) - f(\alpha_u,\alpha_v) \leq
    f_1(\beta_u,\beta_{w_1}) - f_1(\beta_u-\e,\beta_{w_1}).
  \end{equation}

  By the above argument about $u$'s unilateral deviations, we know
  that $f_1(\beta_u,\beta_{w_1}) - f_1(\beta_u-\e,\beta_{w_1}) =
  f(\alpha_u+\e,\alpha_v) - f(\alpha_u,\alpha_v)$, and so we have that
  $f(\alpha_u+\e, \alpha_v+\e) = f(\alpha_u+\e, \alpha_v)$. Since $\frac{\partial^2 f}{\partial x\partial y}\geq 0$, this also implies that
  $f(\alpha_u, \alpha_v+\e) = f(\alpha_u, \alpha_v)$.

  We now create solution $s'$ by having node $v$ move $\e$ effort from
  edge $e_2$ to $e$. We will prove below that $s'$ is also a pairwise
  stable solution of the same welfare as $s$. However, it has strictly
  more effort on edge $e$. We can continue applying the same arguments
  for edge $e$ until $e$ is no longer a slack edge. This process only
  decreases the number of slack edges, since we only remove effort
  from edges that are slack or half-slack, and in the latter case we
  remove the effort from the "slack" direction, so the edge remains
  half-slack afterwards.

  All that is left to prove is that $s'$ is pairwise stable of the
  same welfare as $s$. To see that the welfare is the same, notice
  that we can use for node $v$ the same arguments for unilateral
  deviations that we applied to $u$.  Therefore, we know that
  \begin{eqnarray*}
    f(\alpha_u,\alpha_v+\e) - f(\alpha_u,\alpha_v)=
    f_2(\gamma_{w_2},\gamma_v) - f_2(\gamma_{w_2},\gamma_v-\e) = 0.
  \end{eqnarray*}
  Therefore, the reward of all edges in $s'$, and thus the utility of all nodes, is the same as in $s$, so the social welfare of both is the same.

  We must now prove that $s'$ is pairwise stable. Any possibly
  improving deviation would have to include one of the edges $e=(u,v)$
  or $e_2=(v,w_2)$, since for all other edges the effort levels are
  the same in $s$ and $s'$. First consider deviations (unilateral or
  bilateral) including node $v$. Any such deviation would also be a
  valid deviation in $s$, since only the strategy of node $v$ has
  changed. Therefore, all such deviations cannot be improving
  deviations. Next consider any unilateral deviation by $u$ where $u$
  adds some $\delta$ effort to edge $e$. For this to be a strictly
  improving deviation, it must be that $f(\alpha_u+\delta,\alpha_v+\e)
  - f(\alpha_u,\alpha_v+\e)>0$.  Consider instead a bilateral
  deviation from $s$ where $u$ plays the new strategy (i.e., adds
  $\delta$ to $e$), while $v$ deviates by moving $\e$ from $e_2$ to
  $e$. In this deviation, $u$ strictly benefits since it ends in the
  same configuration as above, and $v$ strictly benefits since it
  loses $f_2(\gamma_{w_2},\gamma_v) - f_2(\gamma_{w_2},\gamma_v-\e) =
  0$ utility and gains $f(\alpha_u+\delta,\alpha_v+\e) -
  f(\alpha_u,\alpha_v)>0$ utility.  Therefore, this contradicts $s$
  being pairwise stable.

  Next consider a deviation from $s'$ where $u$ removes some amount
  $\delta$ from $e$. For this deviation to be profitable in $s'$, but
  not profitable in $s$, it must be that $f(\alpha_u,\alpha_v+\e) -
  f(\alpha_u-\delta,\alpha_v+\e) < f(\alpha_u,\alpha_v) -
  f(\alpha_u-\delta,\alpha_v).$ This contradicts the fact that $\frac{\partial^2 f}{\partial x\partial y}\geq 0$.

  Finally, consider deviations by node $w_2$. If $w_2$ deviates and
  removes some amount $\delta$ from $e_2$, then this is profitable
  only if $f_2(\gamma_{w_2},\gamma_v-\e) -
  f_2(\gamma_{w_2}-\delta,\gamma_v-\e) < f_2(\gamma_{w_2},\gamma_v) -
  f_2(\gamma_{w_2}-\delta,\gamma_v).$ Recall, however, that
  $f_2(\gamma_{w_2},\gamma_v) = f_2(\gamma_{w_2},\gamma_v-\e),$ so the
  above implies that $f_2(\gamma_{w_2}-\delta,\gamma_v-\e) >
  f_2(\gamma_{w_2}-\delta,\gamma_v)$, which is impossible since $f_2$
  is nondecreasing in both its arguments. If $w_2$ adds effort to
  $e_2$ in its deviation, then it cannot possibly be more profitable
  than the same deviation in $s$, since $v$ is using less effort on
  $e_2$ in $s'$ than in $s$, and the utility of edge $e_1$ is the same
  in both. This finishes the proof.
\end{proof}

\begin{cor}
  If all reward functions belong to class $\mathcal{C}'$, then all
  pairwise equilibria have only tight edges.
\end{cor}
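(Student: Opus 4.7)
The plan is to show, by contradiction, that if $s$ is a pairwise equilibrium with all reward functions in $\mathcal{C}'$ and some edge $e=(u,v)$ of $s$ is not tight, then one of its endpoints already has a strictly improving \emph{unilateral} deviation; no appeal to bilateral deviations is required. The strictness $\partial_{xx} f_e > 0$ built into $\mathcal{C}'$ is exactly what promotes the weak inequalities used in the proofs of Claims~\ref{claim.optIntegral} and~\ref{claim.stableIntegral} to strict ones. Without loss of generality I would assume $0 < s_u(e) < B_u$ and split on how $u$ spends the rest of its budget.

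\textbf{Case 1:} $u$ also puts positive effort on some other incident edge $e' = (u, w)$. Here I would reuse the sandwich argument from Claim~\ref{claim.optIntegral}, but driven by unilateral (Nash) stability of $s$ rather than by social optimality: the two ``no improvement from shifting $\e$ of effort between $e$ and $e'$'' inequalities, chained through the coordinate-convexity of $f_{e'}$, collapse into
\[
f_e(s_u(e)+\e, s_v(e)) - f_e(s_u(e), s_v(e)) \ \leq\ f_e(s_u(e), s_v(e)) - f_e(s_u(e)-\e, s_v(e)),
\]
which directly contradicts the strict coordinate-convexity of $f_e \in \mathcal{C}'$ for any sufficiently small $\e > 0$.

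\textbf{Case 2:} $u$ spends no effort on any edge other than $e$, so a slack $\delta = B_u - s_u(e) > 0$ is uncommitted. Consider the unilateral deviation that simply loads $\delta$ onto $e$; the utility change is $f_e(B_u, s_v(e)) - f_e(s_u(e), s_v(e))$. The key observation is that $\partial_{xx} f_e > 0$ together with monotonicity $\partial_x f_e \geq 0$ makes $\partial_x f_e(\cdot, y)$ a strictly increasing, nonnegative function for every $y \geq 0$, and hence strictly positive for every $x > 0$. Integrating from $s_u(e)$ to $B_u$ then yields a strict gain for $u$, contradicting stability.

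The main obstacle -- and the reason the corollary needs $\mathcal{C}'$ rather than $\mathcal{C}$ -- lies in Case 2, specifically in ruling out the degenerate possibility that $s_v(e) = 0$ and $f_e(\cdot,0)$ is flat, which would leave $u$ with leftover budget but no strict incentive to commit it. Weak coordinate-convexity tolerates a vanishing slope along the axis, but strict coordinate-convexity does not, and this is precisely what rescues Case 2. Once both cases are handled, no edge can have a partially-spent endpoint, so every edge is tight.
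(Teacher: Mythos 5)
Your proposal is correct, and Case 1 is essentially the paper's own proof: the paper derives the same chained increment inequality from the unilateral-deviation analysis in Claim~\ref{claim.stableIntegral} and contradicts strict coordinate-convexity. Your Case 2 goes slightly beyond the paper's one-line argument, which only rules out a node spreading positive effort over two edges and leaves implicit the possibility of a node putting partial effort on a single edge while withholding the rest of its budget; your observation that $\partial_{xx} f_e > 0$ together with $\partial_x f_e \geq 0$ forces $\partial_x f_e(x,y) > 0$ for all $x>0$, so that loading the leftover budget onto $e$ is a strictly improving unilateral deviation, closes that case cleanly and is exactly where $\mathcal{C}'$ (as opposed to $\mathcal{C}$) is needed.
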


\begin{proof}
  In the proof of Claim \ref{claim.stableIntegral}, we saw that if a
  node is putting non-zero effort on two edges, then it must be that
  for some edge $e$ and values $x,y$, we have that
  $f_e(x+\e,y)-f_e(x,y)=f_e(x,y)-f_e(x-\e,y)$. This is not possible
  for $f_e\in\mathcal{C'}$, since $f_e$ is strictly convex in each of
  its arguments.
\end{proof}

\begin{cor}
  If all reward functions belong to class $\mathcal{C}$, then all
  strict pairwise equilibria (where every player has a unique
  unilateral best response) have only tight edges.
\end{cor}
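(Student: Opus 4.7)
The plan is a proof by contradiction that mirrors the opening step of the proof of Claim~\ref{claim.stableIntegral}, using the strictness of best responses in place of strict coordinate-convexity. Suppose $s$ is a strict pairwise equilibrium containing an edge $e=(u,v)$ that is not tight. Then at least one endpoint---WLOG $u$---satisfies $0<s_u(e)<B_u$, so $u$'s remaining budget $B_u - s_u(e)$ is either placed on another incident edge or left unused.

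In the main case, $u$ also puts positive effort $\beta_u > 0$ on a second incident edge $e_1 = (u, w_1)$. Here I would invoke verbatim the first unilateral-deviation analysis from the proof of Claim~\ref{claim.stableIntegral}: for every sufficiently small $\e > 0$, unilateral stability of $s$ at $u$ yields
\begin{equation*}
  f_e(\alpha_u+\e, \alpha_v) - f_e(\alpha_u, \alpha_v) \leq f_{e_1}(\beta_u, \beta_{w_1}) - f_{e_1}(\beta_u - \e, \beta_{w_1})
\end{equation*}
and the symmetric inequality with the roles of $e$ and $e_1$ swapped. Coordinate-convexity of $f_e$ and $f_{e_1}$ provides the reverse comparison on one side of each of these inequalities, forcing both to hold with equality. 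Consequently, the perturbed strategy obtained from $s_u$ by shifting an $\e$ amount of effort between $e$ and $e_1$ yields exactly the same utility as $s_u$ itself, exhibiting a second unilateral best response for $u$---directly contradicting the assumption that $u$'s best response is unique.

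In the remaining case, $u$ has no other incident edge with positive effort but still $s_u(e) < B_u$, i.e., $u$ leaves part of its budget unused. Since $f_e$ is nondecreasing, reallocating a small $\e$ of the unused budget onto $e$ weakly increases $u$'s utility. If the gain is strict, $s_u$ is not even a best response (so $s$ is not a Nash equilibrium); otherwise the perturbation gives a distinct strategy of the same utility, again contradicting uniqueness. I do not anticipate a significant obstacle here: the whole argument is essentially a quotation of Claim~\ref{claim.stableIntegral} with \emph{strict} coordinate-convexity replaced by \emph{strictness} of the best response, so the only care needed is the brief case split between a node that splits its positive effort across two incident edges and a node that leaves budget unused.
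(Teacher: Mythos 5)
Your proof is correct and follows essentially the same route as the paper: it reuses the unilateral-deviation indifference argument from Claim~\ref{claim.stableIntegral} (equality forced by coordinate-convexity) to exhibit a second unilateral best response, contradicting strictness. Your explicit treatment of the unused-budget case is a harmless addition that the paper leaves implicit.
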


\begin{proof}
  In the proof of Claim \ref{claim.stableIntegral}, we saw that if a
  node is putting non-zero effort on two edges, then its utility does
  not change by moving some amount of effort from one of these edges
  to the other. This is not possible in a strict pairwise equilibrium,
  since then this node would have a deviation that does not change its
  utility.
\end{proof}


\section{Concave Reward Functions}
\label{sec:concave}

In this section we consider the case when reward functions $f_e(x,y)$
are concave. It is simple to observe that a pairwise equilibrium may
not exist. Consider a triangle graph with three players, uniform
budgets, and $f_e(x,y) = \sqrt{xy}$ for all edges. Every player has an
incentive to invest his full budget due to monotonic increasing
functions. Due to concavity each player will even out the
contributions according to the derivatives. Thus, the only candidate
for a pairwise equilibrium is when all players put 0.5 on each
incident edge. It is, however, easy to see that this state is no
pairwise equilibrium. Although we might have no pairwise equilibrium,
we obtain the following general result for games with concave rewards
that have a pairwise equilibrium.

\begin{theorem}
  \label{thm:PoAconcave}
  For the class of network contribution games with concave reward
  functions for all $e \in E$ that have a pairwise equilibrium, the
  price of anarchy for pairwise equilibria is at most 2.
\end{theorem}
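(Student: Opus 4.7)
The plan is to emulate the proof of Theorem~\ref{thm.classC} as closely as possible. Let $s$ be a pairwise equilibrium and $s^*$ a social optimum. My target inequality is $\sum_e w_e(s^*) \le \sum_v w_v(s)$, which immediately gives the factor $2$ since $w(s^*)=2\sum_e w_e(s^*)$ while $w(s)=\sum_v w_v(s)$.

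The first step is identical to the convex case and uses only monotonicity plus pairwise stability. For every edge $e=(u,v)$ with $w_e(s^*)>0$, consider the bilateral deviation in which $u$ and $v$ both reallocate their full budgets to $e$. Each endpoint would then obtain utility at least $f_e(B_u,B_v)\ge f_e(s^*_u(e),s^*_v(e))=w_e(s^*)$ (monotonicity). Since $s$ is pairwise stable, at least one of the two endpoints is not strictly improved, so some $z(e)\in\{u,v\}$ satisfies $w_{z(e)}(s)\ge f_e(B_u,B_v)\ge w_e(s^*)$. Call such a $z(e)$ a \emph{witness} for $e$.

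The crucial and harder step is to make these witnesses into an injection from the edges $E^*=\{e:w_e(s^*)>0\}$ into $V$, so that summing $w_{z(e)}(s)$ over $e$ is bounded by $\sum_v w_v(s)$. In the convex case this was automatic, because Claim~\ref{claim.optIntegral} gives a tight $s^*$ and hence any witness $z(e)$ has its full budget on $e$ in the optimum, so each node witnesses at most one edge. For concave $f_e$ this fails -- $s^*$ may spread a single node's budget across several incident edges -- and this is the main obstacle. My plan is to phrase the witness choice as a bipartite matching problem: the left vertices are the edges in $E^*$, the right vertices are the players, and $(e,z)$ is an edge of the auxiliary graph whenever $z$ is an endpoint of $e$ and $w_z(s)\ge w_e(s^*)$. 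By the first step every left vertex has at least one neighbor, and it suffices to find a matching saturating $E^*$. I would verify Hall's condition by contradiction using pairwise stability: if some set $S\subseteq E^*$ had strictly fewer eligible witnesses than $|S|$, then a ``crowded'' witness $z$ is already carrying several of these edges' rewards in $s$, while the non-witness endpoints of $S$ have low utility in $s$; using the slack of these low-utility endpoints I would construct a profitable bilateral deviation among a pair of them (moving effort onto an $S$-edge or swapping along a short alternating cycle in $S$), contradicting that $s$ is pairwise stable.

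Once the matching is in hand, summing $w_e(s^*)\le w_{z(e)}(s)$ over the matching gives $\sum_e w_e(s^*)\le\sum_v w_v(s)$ and therefore $w(s^*)\le 2w(s)$. For tightness of the bound, I would adapt the four-node path example from the end of the proof of Theorem~\ref{thm.classC}, replacing the convex products by a concave reward such as $c_e\min(x,y)$ on the edges. I expect the Hall-style verification to be the only nontrivial part of the argument, since the witness construction and the subsequent counting are straightforward once a saturating matching is available.
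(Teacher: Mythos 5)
There is a genuine gap: the key lemma you plan to prove is false. Your argument hinges on finding an \emph{injective} witness assignment, i.e.\ a matching in the auxiliary bipartite graph that saturates all edges $e$ with $w_e(s^*)>0$, where $z$ is eligible for $e$ only if $z$ is an endpoint of $e$ with $w_z(s)\ge w_e(s^*)$. Such a matching need not exist at a pairwise equilibrium. Consider a star with center $c$ and leaves $v_1,\dots,v_k$ ($k\ge 3$), all budgets equal to $1$, and $f_e(x,y)=\min(x,y)$ on every edge (concave). The state $s$ in which $c$ and $v_1$ put their full budgets on the edge $(c,v_1)$ is a pairwise equilibrium: $w_c(s)=1$ is the maximum utility $c$ can ever obtain, so $c$ never strictly improves in any unilateral or bilateral deviation, and no leaf can improve without $c$. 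The optimum $s^*$ splits $c$'s budget as $1/k$ per edge, so every edge has $w_e(s^*)=1/k>0$, yet for $i\ge 2$ the leaf $v_i$ has $w_{v_i}(s)=0<1/k$. Hence the $k-1$ edges $(c,v_i)$, $i\ge 2$, have only the single eligible witness $c$, Hall's condition fails, and no contradiction with pairwise stability can be derived, because $s$ \emph{is} stable (indeed the welfare ratio here is $1$). So the step you flagged as ``the only nontrivial part'' cannot be carried out: in the concave setting a single node must in general cover the optimal reward of \emph{many} incident edges, and the statement to prove is that its equilibrium utility covers their \emph{sum}, not each one separately via distinct witnesses.

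This is exactly where the paper's proof diverges from the convex case, and why it does not use full-budget deviations at all. Instead, for each edge $e\in O$ (higher reward in $s^*$ than in $s$) it considers the deviation raising the endpoints' contributions on $e$ only up to their $s^*$-levels, assigns $e$ to an endpoint $u$ for which no such deviation is profitable, and converts unprofitability into the inequality $\chi_u(\delta_u(e))\ge w_e(s^*)-w_e(s)$, where $\chi_u(\delta)$ is the minimal loss from removing $\delta$ effort from the edges in $S^u$. Concavity enters crucially through the superadditivity $\sum_e\chi_v(\delta_v(e))\le\chi_v\bigl(\sum_e\delta_v(e)\bigr)$, which is what lets one node absorb the charge of several assigned edges; a further decomposition into the edge classes $O_1,O_2,S_1,S_2$ controls the double counting and yields $w(s^*)\le 2w(s)$. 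Your first step (monotonicity plus stability) and the final accounting are fine, and your tightness example is acceptable for the matching lower bound, but without replacing the Hall/matching step by a charging argument of this fractional kind, the proof does not go through.
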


\begin{proof}
  Consider a social optimum $s^*$, and the effort $s_v^*(e)$ used by
  node $v$ on edge $e$ in this solution. Let $s$ be a pairwise
  equilibrium, with $s_v(e)$ the effort used by $v$ on edge $e$ in
  $s$. For an edge $e=(u,v)$, let $w_e(s)=f_e(s_u(e),s_v(e))$ be its
  reward in $s$, and $w_e(s^*)=f_e(s^*_u(e),s^*_v(e))$ be its reward in
  $s^*$. We will now attempt to charge $w(s^*)$ to $w(s)$.

  For any node $v$, define $O^v$ to be the set of edges incident to $v$
  where $v$ contributes strictly more in $s^*$ than in $s$, i.e., where
  $s_v^*(e)> s_v(e)$. Similarly, define $S^v$ to be the set of edges
  $e$ where $s_v^*(e)\leq s_v(e)$.

  Let $O$ be the set of edges $e$ with strictly higher reward in $s^*$ than in
  $s$ ($w_e(s^*)> w_e(s)$), and $S$ be the rest of the edges in the
  graph, with reward in $s$ at least as high as in $s^*$. Furthermore, define $O_1=\{e=(u,v)|e\in O^u\cap S^v\cap O\}$
  and $O_2=\{e=(u,v)|e\in O^u\cap O^v\cap O\}$. In other words, $O_2$
  is the set of edges with higher reward in $s^*$ where both players
  contribute more in $s^*$ than in $s$, and $O_1$ is the set of edges
  where only one player does so. Similarly, define $S_1=\{e=(u,v)|e\in
  O^u\cap S^v\cap S\}$ and $S_2=\{e=(u,v)|e\in S^u\cap S^v\cap
  S\}$. Since reward functions are monotone, every edge must appear in
  exactly one of $O_1$, $O_2$, $S_1$, or $S_2$.

  In the following proof, we will first show that any edge $e=(u,v)$
  in $O$ can be {\em assigned} to one of its endpoints (say $u$) such
  that $u$ would never gain in deviating from $s$ by removing effort
  from edges in $S^u$ and making its contribution to $e$ equal to
  $s^*_u(e)$, even if $v$ did the same. This means that the utility
  node $u$ looses from setting its contribution to $s_u^*(e)$ instead
  of $s_u(e)$ on all edges of $S^u$ is at least as much as the
  difference in utility in $s^*$ versus in $s$ on all the edges of $O$
  assigned to $u$. We then sum up these inequalities, which lets us
  bound the reward on edges where $s^*$ is better than $s$ by the
  reward on edges where $s$ is better than $s^*$. We now proceed with
  the proof as described. 

  Let $e=(u,v)$ be an arbitrary edge of $O$, so $w_e(s^*)>w_e(s)$.
  Since $f_e$ is nondecreasing, this implies that $s_v(e) < s_v^*(e)$
  or $s_u(e) < s_u^*(e)$, i.e., at least one of $u$ or $v$ has strictly lower
  effort on $e$ in $s$ than in $s^*$. Consider the deviation
  from $s$ to another state $s'$ where $u$ and $v$ increase their
  contributions to $e$ to the same level as in $s^*$, i.e., a state
  $s'$ yields $s_u'(e)=\max\{s_u(e),s^*_u(e)\}$ and
  $s_v'(e)=\max\{s_v(e),s^*_v(e)\}$. This may be either a bilateral or a unilateral deviation, depending on whether one of $s_u'(e)=s_u(e)$ or $s_v'(e)=s_v(e)$ holds. Note that there is actually an
  entire set of such states $s'$, as we did not specify from where
  players $u$ and $v$ potentially remove effort to be able to achieve
  the increase. Observe, however, that no other player changes his
  strategy, i.e., $s'_{-u,v} = s_{-u,v}$. Since $s$ is an equilibrium,
  it must be that for at least one of $u$ or $v$ the deviation to
  \emph{every possible} such state $s'$ is unprofitable. Without loss
  of generality, say that this player is $u$, so $w_u(s') \leq w_u(s)$
  for every state $s'$, and we say that we \emph{assign} edge $e$ to
  node $u$. Note that this implies that $\delta_u(e)=s^*_u(e)-s_u(e)>0$, i.e.,
  $e\in O^u$, since otherwise all edges incident to $u$ would have the same reward in every $s'$ as in $s$, except for the edge $e$ which would have reward $w_e(s^*)$ in $s'$, strictly greater than $w_e(s)$.

  Since $f_e(s_u'(e),s_v'(e)) \geq w_e(s^*)$, then there is an increase in
  $w_u$ due to $e$ of at least $w_e(s^*)-w_e(s)$. However, as every
  deviation to a state $s'$ is unprofitable for $u$, it must be that
  removing $\delta_u(e)$ effort in \emph{any arbitrary way} from other
  edges incident to $u$ and adding it to $e$ would not increase $u$'s
  utility $w_u$. Therefore, we know that, in particular, removing
  $\delta_u(e)$ effort from edges $S^u$ decreases the reward of those
  edges by at least $w_e(s^*)-w_e(s)$. Denote by
  $\chi_u(\delta_u(e))$ this amount, i.e, $\chi_u(\delta)$ is the
  minimum amount that $w_u$ would decrease if in state $s$ player $u$
  removed any $\delta$ amount of effort from edges in $S^u$.

  We have now proven that for any $e=(u,v)\in O$, we can assign it to one of its endpoints (say $u$), such that $\chi_u(\delta_u(e))\geq
  w_e(s^*)-w_e(s)$. We can now sum these inequalities for every
  edge $e\in O$. Consider the sum of just the inequalities
  corresponding to the edges assigned to a fixed node $v$ (call this
  set of edges $A(v)$). Then we have that
  \[ \sum_{e\in A(v)}[w_e(s^*)-w_e(s)] \leq
  \sum_{e\in A(v)}\chi_v(\delta_v(e))\enspace.\]
  How does $\chi_v(\delta_1)+\chi_v(\delta_2)$ compare to
  $\chi_v(\delta_1+\delta_2)$?  Since all the functions $f_e$ are
  concave, it is easy to see that removing $\delta_1+\delta_2$ effort from
  edges $S_v$ will decrease the reward of these edges by at least as
  much as the sum of $\chi_v(\delta_1)$ and
  $\chi_v(\delta_2)$. Therefore, we know that
  \[ \sum_{e\in A(v)}\chi_v(\delta_v(e)) \leq \chi_v\left(\sum_{e\in
      A(v)}\delta_v(e)\right)\enspace. \]
  Since $\delta_v(e)$ is the extra effort of $v$ on edge $e$ in $s^*$
  compared to $s$, the sum of $\delta_v(e)$ for the edges $O^v$ equals
  $\Delta = \sum_{e\in S^v}s_v(e)-s_v^*(e)$. Thus, $\chi_v(\Delta)$
  is at most the utility lost by $v$ if, starting at state $s$,
  $v$ would set its contribution to $s_v^*(e)$ instead of $s_v(e)$ on all edges of $S^v$.
  For an edge $e\in S_2$, this is at most $w_e(s)-w_e(s^*)$, since even
  after lowering $v$'s contribution to $s_v^*(e)$, the reward of this edge
  is at least $w_e(s^*)$. For an edge $e\in S_1$ or $e\in O_1$, this is still at
  most $w_e(s)$. Noticing that an edge of $S^v$ cannot be in $O_2$, we now
  have that
  \[\chi_v(\Delta) \leq \sum_{e\in S^v\cap S_2}[w_e(s)-w_e(s^*)] +
  \sum_{e\in S^v\cap(S_1\cup O_1)}w_e(s)\enspace.\]
  Putting this all together, we obtain that
  \[ \sum_{e\in A(v)}[w_e(s^*)-w_e(s)] \leq
  \sum_{e\in S^v\cap S_2}[w_e(s)-w_e(s^*)] +
  \sum_{e\in S^v\cap(S_1\cup O_1)}w_e(s)\enspace.\]
  Summing up these inequalities for all nodes $v$, we obtain a way to
  bound the reward on edges where $s^*$ is better than $s$ by the
  reward on edges where $s$ is better than $s^*$. Since the same edge
  $e=(u,v)$ could be in both $S^u$ and $S^v$, it may be used in the
  above sum twice. Notice, however, that any edge in $S_1$ or $O_1$ will
  only appear in this sum once, since it will belong to $S^v$ of exactly
  one node. Thus, we obtain that
  \[ \sum_{e\in O} [w_e(s^*)-w_e(s)] \leq2\sum_{e\in S_2}[w_e(s)-w_e(s^*)] +
  \sum_{e\in S_1\cup O_1}w_e(s)\enspace.\]
  Adding in the edges of $S_1$, and recalling that all edges are in exactly one of $O_1$, $O_2$, $S_1$, or $S_2$,
   gives us the desired bound:
  \[ w(s^*)\leq 2\sum_{e\in S_2}w_e(s) +
  \sum_{e\in S_1\cup O_1}w_e(s) +\sum_{e\in O}w_e(s)+\sum_{e\in S_1}w_e(s)\leq 2w(s)\enspace.\]
\end{proof}
Looking carefully at the proof of the previous theorem yields the
following result (c.f.~Definition~\ref{def.coordinate}).
\begin{cor}
  \label{cor:PoAconcave}
  For the class of network contribution games with coordinate-concave
  reward functions for all $e \in E$ that have a pairwise equilibrium,
  the price of anarchy for pairwise equilibria is at most 2.
\end{cor}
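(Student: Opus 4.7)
The plan is to observe that the proof of Theorem~\ref{thm:PoAconcave} goes through essentially verbatim under the weaker hypothesis of coordinate-concavity, and to isolate the one place where the full concavity assumption is invoked so that we can check it can be replaced. I would therefore walk through the prior proof step by step, flagging which properties of $f_e$ each step actually uses.

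First I would verify that the bulk of the argument uses only monotonicity and the equilibrium property. The partition of the edges into $O,S,O_1,O_2,S_1,S_2$ depends only on the comparison of $w_e(s)$ and $w_e(s^*)$ and on the sign of $s_v(e)-s_v^*(e)$, so monotonicity is enough. The assignment of each edge $e=(u,v)\in O$ to an endpoint, say $u$, via the unprofitable bilateral (or unilateral) deviation to $(s_u'(e),s_v'(e))=(\max\{s_u(e),s^*_u(e)\},\max\{s_v(e),s^*_v(e)\})$, and the resulting inequality $\chi_u(\delta_u(e))\geq w_e(s^*)-w_e(s)$, uses only that $s$ is a pairwise equilibrium and that $f_e$ is nondecreasing. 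Similarly, the final accounting that aggregates these inequalities over $v$ and bounds them against edges in $S_2\cup S_1\cup O_1$ is purely combinatorial.

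The only genuine use of concavity in the proof is the superadditivity claim $\sum_{e\in A(v)} \chi_v(\delta_v(e)) \leq \chi_v\bigl(\sum_{e\in A(v)}\delta_v(e)\bigr)$. Here $\chi_v(\delta)$ is the minimum decrease in $w_v$ when $v$ removes, in total, $\delta$ units of effort from its edges in $S^v$. Since $v$ only modifies its own coordinate on each edge $e=(u,v)\in S^v$, and the opposing effort $s_u(e)$ is held fixed, the relevant single-variable functions are $y\mapsto f_e(y,s_u(e))$. Coordinate-concavity of $f_e$ is precisely the statement that each such function is concave; thus the marginal cost of withdrawing effort from a single edge is nondecreasing in the amount withdrawn, and distributing the withdrawal across several edges optimally makes $\chi_v$ a convex function of $\delta$ with $\chi_v(0)=0$, hence superadditive. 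This is exactly what the proof needs.

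The potentially delicate point—the one I would guard against—is that coordinate-concavity says nothing about the cross-partial $\partial^2 f_e/\partial x\,\partial y$, which could in principle interfere. I would address this by pointing out that $\chi_v$ is defined with the opposite endpoint's effort frozen at $s_u(e)$, so cross-effects never appear: the construction is entirely within slices of $f_e$ along $v$'s axis. Once this observation is made, every step of the proof of Theorem~\ref{thm:PoAconcave} carries through unchanged, which gives the desired bound of $2$ on the price of anarchy and completes the corollary.
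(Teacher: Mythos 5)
Your proposal is correct and matches the paper's own treatment: the corollary is obtained precisely by re-reading the proof of Theorem~\ref{thm:PoAconcave} and noting that the only place concavity enters is the superadditivity bound on $\chi_v$, which involves only the one-variable slices $y\mapsto f_e(y,s_u(e))$ with the opposite endpoint's effort held fixed, so coordinate-concavity suffices. Your explicit remark that cross-partials never enter because $\chi_v$ is defined along $v$'s axis only is exactly the right justification.
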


\section{Minimum Effort Games}
\label{sec:min}

In this section we consider the interesting case (studied for example
in~\cite{Huyck90,Anderson01,Fatas06,Riechmann08,Dufwenberg05,Chaudhuri08})
when all reward functions are of the form $f_e(x,y)=h_e(\min(x,y))$. In
other words, the reward of an edge depends only on the minimum effort of
its two endpoints. In our treatment we again distinguish between the case
of increasing marginal returns (convex functions $h_e$) and diminishing
marginal returns (concave functions $h_e$). Note that in this case
bilateral deviations are in many ways essential to make the game
meaningful, as there is almost always an infinite number of Nash
equilibria.\footnote{In particular, due to
  monotonic increasing functions any state in which for each edge the
  contributions of incident players are the same is a Nash
  equilibrium.} In addition, we can assume w.l.o.g.\ that in every
pairwise equilibrium $s$ there is a unique value $s_e$ for each $e =
(u,v) \in E$ such that $s_v(e) = s_u(e) = s_e$. The same can be
assumed for optima $s^*$.

%
%
%

We begin by showing a simple yet elegant proof based on linear programming
duality, that shows a price of anarchy of 2 when all functions $h_e(x) =
c_e \cdot x$ are linear with slope $c_e > 0$. We include this proof to
highlight that duality is also used in
Theorem~\ref{thm:PoAconvexMinUniform} for convex functions and uniform
budgets.

\begin{theorem}
  \label{thm:MinLinear}
  The prices of anarchy and stability for pairwise equilibria in games
  with all functions of the form $f_e(x,y) = c_e \cdot \min(x,y)$ are
  exactly 2.
\end{theorem}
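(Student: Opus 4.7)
The plan is to establish the upper bound $\mathrm{PoA}\le 2$ via LP duality and to exhibit a small example achieving $\mathrm{PoS}\ge 2$. First, using the standing WLOG reduction $s_u(e)=s_v(e)=:s_e$, the social welfare of any state can be written as $w(s)=2\sum_e c_e s_e$ subject to the vertex-budget constraints $\sum_{e\in E_v}s_e\le B_v$. Hence the socially optimal welfare equals $2\cdot \mathrm{OPT}_{\mathrm{LP}}$, where $\mathrm{OPT}_{\mathrm{LP}}$ is the value of the LP
\[
\max\ \sum_e c_e z_e \quad\text{s.t.}\quad \sum_{e\in E_v} z_e \le B_v\ \text{for every}\ v,\quad z_e\ge 0,
\]
whose LP dual is the weighted fractional vertex-cover program
\[
\min\ \sum_v B_v y_v \quad\text{s.t.}\quad y_u+y_v\ge c_e\ \text{for every edge}\ e=(u,v),\quad y_v\ge 0.
\]

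Given a pairwise equilibrium $s$, I would set $y_v := w_v(s)/B_v$ (with the trivial convention $y_v=0$ if $B_v=0$), so that $\sum_v B_v y_v = w(s)$ by construction. If this $y$ is dual feasible, then weak LP duality yields $w(s)\ge \mathrm{OPT}_{\mathrm{LP}} = \mathrm{OPT}/2$, and hence $\mathrm{PoA}\le 2$. The main obstacle is verifying dual feasibility: $w_u(s)/B_u + w_v(s)/B_v\ge c_e$ for every edge $e=(u,v)$. I would argue by contradiction, assuming $w_u<c_eB_u$ and $w_v<c_eB_v$, with WLOG $B_u\le B_v$. Since $w_u$ already contains the contribution $c_e s_e$ from edge $e$, the assumption $w_u<c_e B_u$ forces $s_e<B_u$, so there is headroom to push more effort onto $e$. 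Consider the bilateral deviation in which $u$ and $v$ each shift an additional $\varepsilon>0$ onto $e$. Depending on whether each player has slack (unused budget), the added effort is either free (from slack) or must be pulled from that player's lowest-coefficient currently-used edge; call that minimum coefficient $c^*_u$ (respectively $c^*_v$). A short case analysis over the four slack/no-slack configurations shows that to avoid a strictly improving bilateral deviation, PE forces $c_e\le c^*_u$ or $c_e\le c^*_v$; the configurations where one player has slack while the other is at full budget give an immediate strict improvement for both sides and thus contradict PE outright. When $u$ has no slack, $c_e\le c^*_u$ implies $w_u\ge c^*_u B_u\ge c_e B_u$, contradicting $w_u<c_eB_u$; the symmetric conclusion $w_v\ge c_e B_v$ handles the case with $v$.

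For the matching lower bound, I would use the path $a{-}b{-}c{-}d$ with uniform budgets $1$ and uniform coefficients $c_e=1$. The social optimum places one unit of effort on each outer edge, giving welfare $4$. The state in which $b$ and $c$ each devote their full budget to the middle edge (with $a$ and $d$ contributing arbitrarily) is a pairwise equilibrium of welfare $2$: the only candidate bilateral deviation involves one of $a,d$ together with $b$ or $c$ on an outer edge, but for $b$ (symmetrically $c$) to contribute to the outer edge it must divert effort from the middle edge, and its gain on the outer edge exactly cancels its loss on the middle edge, so $b$ (resp.\ $c$) does not strictly improve. This yields $\mathrm{PoS}\ge 2$, matching the upper bound, and hence $\mathrm{PoA}=\mathrm{PoS}=2$.
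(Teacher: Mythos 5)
Your upper bound is essentially the paper's own argument: the same primal LP (up to the harmless factor of 2 in the objective), the same dual candidate $y_v = w_v(s)/B_v = \sum_{e \ni v} c_e s_e / B_v$, and the same stability argument showing that every edge has an endpoint with $y \ge c_e$, followed by weak duality. One local misstatement: in the mixed configuration (one player with slack, the other at full budget) the joint shift of $\varepsilon$ onto $e$ is \emph{not} automatically a strict improvement for both sides -- the full-budget player gains only if $c_e$ exceeds the smallest coefficient $c^*$ among its used edges. This does not break the proof, since your very next sentence supplies the correct conclusion (pairwise stability forces $c_e \le c^*$ for the no-slack player, which contradicts $w < c_e B$), but the claim that the mixed cases ``contradict PE outright'' should be dropped. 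Also, the convention $y_v=0$ when $B_v=0$ can violate the dual constraint on (worthless) edges incident to such a player; set $y_v$ arbitrarily large instead, at zero dual cost.

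The genuine gap is in the lower bound. The theorem asserts that the price of \emph{stability} is 2, so you need a game in which even the \emph{best} pairwise equilibrium is (arbitrarily close to) a factor 2 below optimum. In your path $a\mbox{--}b\mbox{--}c\mbox{--}d$ with all coefficients equal to $1$, the social optimum (one unit on each outer edge, nothing on the middle edge) is itself a pairwise equilibrium: if $b$ and $c$ jointly shift $\delta$ onto the middle edge, each gains $\delta$ there and loses exactly $\delta$ on its outer edge (the outer neighbors keep contributing $1$), so neither strictly improves, and unilateral deviations are likewise non-improving. Hence your game has price of stability $1$, and your construction only certifies a price-of-anarchy lower bound of $2$. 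The repair is the one the paper uses: give the middle edge coefficient $1+\varepsilon$ and the outer edges coefficient $1$. Then whenever $b$ and $c$ both have effort available off the middle edge (slack or outer-edge effort), jointly moving $\delta$ onto the middle edge gains $(1+\varepsilon)\delta$ and loses at most $\delta$ for each, a strict improvement for both; so every pairwise equilibrium saturates the middle edge, has welfare $2(1+\varepsilon)$ versus the optimum $4$, and the price of stability tends to $2$ as $\varepsilon \to 0$.
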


\begin{proof}
  We use linear programming duality to obtain the result. Consider an
  arbitrary pairwise equilibrium $s$ and an optimum $s^*$. Note that
  the problem of finding $s^*$ can be formulated as the following
  linear program, with variables $x_e$ representing the minimum
  contribution to edge $e$:
  \newcommand{\D}{\displaystyle}
  \begin{equation}
    \begin{array}{lll}
      \mbox{Max } & \D \sum_{e \in E} 2 c_e x_e \\
      \mbox{s.t. } & \D \sum_{e : e=(u,v)} x_e \le B_u & \mbox{ for all } u \in V \\
      & x_e \ge 0\enspace.
    \end{array}
  \end{equation}
  The LP-dual of this program is
  \begin{equation}
    \begin{array}{lll}
      \mbox{Min } & \D \sum_{u \in V} B_u y_u \\
      \mbox{s.t. } & \D y_u + y_v \ge 2c_e & \mbox{ for all } e \in E \\
      & y_u \ge 0\enspace.
    \end{array}
  \end{equation}
  Now consider the pairwise equilibrium $s$ and a candidate dual
  solution $y$ composed of
  \[
  y_u = \sum_{e : e=(u,v)} \frac{c_e s_e}{B_u}.
  \]
  If a player contributes all of his budget in $s$, this is the
  average payoff per unit of effort. Note that $\{s_e\}$ is a feasible
  primal, and $\sum_e 2c_e s_e = \sum_u y_u B_u$, but $\{y_u\}$ is not
  a feasible dual solution. Now suppose that for an edge $e$ both
  incident players $u$ and $v$ have $y_u, y_v < c_e$. Then both
  incident players can either move effort from an edge with
  below-average payoff to $e$, or invest some of their remaining
  budget on $e$. This increases both their payoffs and contradicts
  that $s$ is stable. Thus, for every edge $e$ there is a player $u$
  with $y_u \ge c_e$. Thus, by setting $y'_u = 2y_u$ we obtain a
  feasible dual solution with profit of twice the profit of $s$. The
  upper bound follows by standard duality arguments. It is
  straightforward to derive a tight lower bound on the price of
  stability using a path of length 3 and functions $h_e(x) = x$ and
  $h_e(x) = (1+\varepsilon)x$ in a similar fashion as presented in
  Theorem~\ref{thm.classC} previously.
\end{proof}

\subsection{Convex Functions in Minimum Effort Games}

In this section we consider reward functions $f_e(x,y) =
h_e(\min(x,y))$ with convex functions $h_e(x)$. This case bears some
similarities with our treatment of the class $\mathcal{C}$ in
Section~\ref{sec:convex}. In fact, we can show existence of pairwise
equilibria in games with uniform budgets. We call an equilibrium $s$
\emph{integral} if $s_e \in \{0,1\}$ for all $e \in E$.

\begin{theorem}
  \label{thm:convexMinExists}
  A pairwise equilibrium always exists in games with uniform budgets
  and $f_e(x,y) = h_e(\min(x,y))$ when all $h_e$ are convex. If all
  $h_e$ are strictly convex, all pairwise equilibria are integral.
\end{theorem}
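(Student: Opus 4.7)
The plan for existence is to adapt the greedy-matching construction from the proof of Theorem~\ref{thm.convex} to the minimum-effort setting. Since budgets are uniform and normalized to $1$, the maximum reward obtainable on an edge $e$ is $h_e(1)$, so I would sort the edges in decreasing order of $h_e(1)$ and build a greedy matching $M$ (adding an edge to $M$ iff both its endpoints are still unmatched). I would then assign $s_v(e)=1$ to every matched player $v$ on its match edge $e\in M$ and $s_v\equiv 0$ elsewhere, and $s_v\equiv 0$ for every unmatched player. Shifting each $h_e$ by $h_e(0)$ does not change best responses or equilibria, so I may assume $h_e(0)=0$; thus every non-matching edge contributes $0$ reward. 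Nash stability is immediate: a matched player's only positive-reward edge is its match edge (every other neighbor contributes $0$ on the shared edge), and an unmatched player has utility $0$ with no way to improve unilaterally.

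To rule out improving bilateral deviations between two players $u$ and $v$, note that the only new source of reward in the deviation is the edge $(u,v)$ itself, since every other neighbor contributes $0$ on their shared edges. By the greedy order, at least one endpoint, say $u$, was already matched when $(u,v)$ was considered, so its match edge $e_u$ satisfies $h_{e_u}(1)\geq h_{(u,v)}(1)$. Convexity with $h(0)=0$ gives $h(x)\leq x\cdot h(1)$ for $x\in[0,1]$, which bounds $u$'s post-deviation utility by $(s'_u(e_u)+s'_u((u,v)))\cdot h_{e_u}(1)\leq h_{e_u}(1)$, exactly $u$'s original reward; hence the deviation is not strictly improving for $u$.

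For integrality under strict convexity, I would proceed by contradiction from a pairwise equilibrium $s$ in the WLOG symmetric form $s_u(e)=s_v(e)=s_e$ noted at the start of this section. Suppose some $s_{e_0}\in(0,1)$, and pick $e^{*}=(u^{*},v^{*})$ attaining $\max h_e'(s_e)$ over all edges with $s_e\in(0,1)$. I would then exhibit a profitable bilateral deviation on $e^{*}$ in which $u^{*}$ and $v^{*}$ each shift a small $\delta>0$ of effort onto $e^{*}$, drawing it either from their budget slack or, if tight, from some alternative positive-effort edge $e'\neq e^{*}$. The key observation is that if $u^{*}$'s budget is tight, any such $e'$ must itself have $s_{e'}\in(0,1)$, since a single edge with $s_{e'}=1$ would exhaust the budget and force $s_{e^{*}}=0$; hence by maximality of $h_{e^{*}}'(s_{e^{*}})$ we get $h_{e'}'(s_{e'})\leq h_{e^{*}}'(s_{e^{*}})$. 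A Taylor expansion of $u^{*}$'s utility change yields
\[
\delta\bigl[h_{e^{*}}'(s_{e^{*}})-h_{e'}'(s_{e'})\bigr] + \tfrac{\delta^{2}}{2}\bigl[h_{e^{*}}''(\xi_{1})+h_{e'}''(\xi_{2})\bigr] + O(\delta^{3}),
\]
whose first-order term is nonnegative and whose second-order term is strictly positive by strict convexity; the argument for $v^{*}$ is symmetric, and together they contradict pairwise stability.

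The main obstacle is the choice of edge in the integrality argument: picking an arbitrary fractional edge may fail, since the player could only free up budget from an edge of strictly larger marginal value, making the first-order change negative and swamping the second-order gain. Maximizing $h_e'(s_e)$ over the fractional edges sidesteps this by ensuring every alternative budget source has no-larger marginal, so the strict convexity term decisively closes the argument. The existence portion, by contrast, is largely a routine adaptation of the greedy-matching template of Theorem~\ref{thm.convex}.
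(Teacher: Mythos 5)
Your proposal is correct, and its two halves relate differently to the paper's proof. The existence half is essentially the paper's argument: the paper also builds a greedy matching by repeatedly pairing the two still-unmatched players whose joint edge maximizes $h_e(1)$, has matched players spend their full budget on their match edge, and kills every unilateral or bilateral deviation via $h_{e_u}(1)\geq h_{(u,v)}(1)$ for the earlier-matched endpoint together with the chord bound $h(x)\leq x\,h(1)$; your explicit normalization $h_e(0)=0$ is a point the paper leaves implicit (its inequalities need it), and your single bound $\bigl(s'_u(e_u)+s'_u((u,v))\bigr)h_{e_u}(1)\leq h_{e_u}(1)$ handles arbitrary joint reallocations at once, which is slightly cleaner than the paper's reduction to shifting one amount $x$. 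The integrality half takes a genuinely different route: the paper picks the \emph{fractional edge maximizing $h_e(1)$}, notes that all positive-effort edges at its endpoints are then fractional, and uses $h(x)<x\,h(1)$ to show both endpoints strictly gain by moving \emph{all} of their budget onto that edge -- one large deviation, no derivatives needed; you instead pick the fractional edge maximizing the current marginal $h'_e(s_e)$ and perform an infinitesimal joint shift, with maximality ensuring the first-order loss does not exceed the first-order gain and strict convexity supplying the strict improvement. Both arguments work, but your write-up leans on second derivatives (Taylor with $h''$), which the model does not grant -- the paper assumes only continuity and differentiability -- and ``the second-order term is strictly positive by strict convexity'' is not a valid pointwise inference about $h''$. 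This is cosmetic: replace the expansion by the strict gradient inequalities $h_{e^*}(s_{e^*}+\delta)-h_{e^*}(s_{e^*})>h'_{e^*}(s_{e^*})\,\delta$ and $h_{e'}(s_{e'})-h_{e'}(s_{e'}-\delta)<h'_{e'}(s_{e'})\,\delta$, which combined with $h'_{e^*}(s_{e^*})\geq h'_{e'}(s_{e'})$ give the strict gain for both players directly, and your proof is complete. The paper's version buys slightly more robustness (it uses only function values), while yours makes transparent why strict convexity forces integrality locally.
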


\begin{proof}
  We first show how to construct a pairwise equilibrium. The proof is
  basically again an adaptation of the ``greedy matching'' argument
  that was used to show existence for general convex functions in
  Theorem~\ref{thm:convexExists}. In the beginning all players are asleep. We
  iteratively wake up the pair of sleeping players that achieves the
  highest revenue on a joint edge and assign them to contribute their
  total budget towards this edge. The algorithm stops when there is no
  pair of incident sleeping players.

  Suppose for contradiction that the resulting assignment is not a
  pairwise equilibrium. First consider a bilateral deviation, where
  a pair of players can profit from
  re-assigning some budget to an edge $e'$. By our algorithm at least
  one of the players incident to $e'$ is awake. Consider the incident
  player $u$ that was woken up earlier. If it is profitable for him to
  remove some portion $x$ of effort from an edge $e$ to $e'$, this
  implies
  \[
  h_e(1) < h_e(1-x) + h_{e'}(x)
  \]
  However, our choices imply $h_e(1) \ge h_{e'}(1)$. Convexity yields
  $h_{e'}(x) \le x h_{e'}(1) $ and $h_e(1-x) \le (1-x)h_e(1)$ and
  results in a contradiction
  \begin{eqnarray*}
  h_e(1) & < & h_e(1-x) + h_{e'}(x)\\
  &\le& (1-x)h_e(1) + xh_{e'}(1)\\
  &\le& (1-x)h_e(1) + xh_e(1) \\
  & = & h_e(1)\enspace.
  \end{eqnarray*}
  This implies that the algorithm computes a stable state with respect
  to bilateral deviations. As for unilateral deviations, no player
  would ever add any effort to an edge where the other endpoint is
  putting in zero effort. However, if a player $u$ unilaterally
  re-assigns some $x$ budget to an edge $e'=(u,v)$ from edge $e$ with
  $v$ still being asleep at the end of the algorithm, then this
  implies that $h_e(1)<h_e(1-x)+h_{e'}(x)$ and that $h_e(1)\geq
  h_{e'}(1)$. This gives a contradiction by the same argument as
  above.

  If all functions are strictly convex, then $h_e(x) < x h_e(1)$ for
  all $x \in (0,1)$. In this case we show that every stable state $s$
  is integral, i.e., we have $s_e \in \{0,1\}$. Suppose to the
  contrary that there is an equilibrium $s$ with $s_e \in (0,1)$ for
  $e=(u,v)$. Let $e$ be an edge with the largest value $h_e(1)$ such
  that $s_e\in (0,1)$. For player $u$, let $e_i$ with $i=1,...$ be
  other incident edges of $u$ such that $s_{e_i} \in (0,1)$. Then,
  because of strict convexity, we have
  \[
  h_e(s_e) + \sum_i h_{e_i}(s_{e_i}) < s_e h_e(1) + \sum_i s_{e_i}
  h_{e_i}(1) \le h_e(1) \enspace.
  \]
  This means $u$ has an incentive to move all of his effort to $e$ if
  $v$ does the same. By the same argument, $v$ also has an incentive
  to move all its effort to $e$. Thus, the bilateral deviation of $u$
  and $v$ moving their effort to $e$ is an improving deviation for
  both $u$ and $v$, so we have a contradiction to $s$ being stable.
\end{proof}


\begin{theorem}\label{thm:PoAconvexMinUniform}
  The prices of anarchy and stability for pairwise equilibria in
  network contribution games are exactly 2 when all reward functions
  $f_e(x,y) = h_e(\min(x,y))$ with convex $h_e$, and budgets are
  uniform.
\end{theorem}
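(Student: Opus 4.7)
The plan is to extend the LP-duality argument of Theorem~\ref{thm:MinLinear} from linear to convex reward functions. I would first normalize so that $h_e(0)=0$ for all $e$: replacing $h_e$ by $g_e(x)=h_e(x)-h_e(0)$ shifts the welfare of every state by the same constant $2\sum_e h_e(0) \ge 0$, so the ratio $w(s^*)/w(s)$ can only worsen under this replacement, and the set of pairwise equilibria and of social optima is unchanged because only an additive (strategy-independent) constant is removed from each player's utility.

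With uniform budgets equal to $1$, convexity of $h_e$ combined with $h_e(0)=0$ yields the pointwise bound $h_e(x) \le x\cdot h_e(1)$ for every $x\in[0,1]$. Writing $c_e := h_e(1)$, this gives $w(s^*) = 2\sum_e h_e(s^*_e) \le 2\sum_e c_e s^*_e$. Since $\sum_{e\in E_v} s^*_e \le 1$ for each $v$, the numbers $\{s^*_e\}$ form a feasible primal for exactly the linear program appearing in the proof of Theorem~\ref{thm:MinLinear}, so $w(s^*)$ is upper bounded by the LP optimum.

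The heart of the proof is constructing a feasible dual from any pairwise equilibrium $s$. Take $y_v := 2 w_v(s)$. For each edge $e=(u,v)$, consider the bilateral deviation in which both endpoints move their entire budget onto $e$; this yields each endpoint utility exactly $h_e(1) = c_e$. Pairwise stability forces $\max\{w_u(s), w_v(s)\} \ge c_e$, since otherwise both $u$ and $v$ strictly improve by this joint deviation. In particular $w_u(s)+w_v(s) \ge c_e$, so $y_u + y_v \ge 2c_e$ and $y$ is dual-feasible. By LP duality, $w(s^*) \le \text{LP opt} \le \sum_v y_v = 2 w(s)$, giving $\mathrm{PoA} \le 2$.

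For the matching lower bound of $2$ on both $\mathrm{PoA}$ and $\mathrm{PoS}$, I would use the four-node path $u$-$v$-$w$-$z$ with unit budgets and linear (hence convex) functions $h_{e_1}(x)=h_{e_3}(x)=x$ and $h_{e_2}(x) = (1+\varepsilon)x$, mirroring the construction at the end of Theorem~\ref{thm.classC}. Any state with $s_{e_2} < 1$ admits an improving bilateral deviation by $v$ and $w$ onto $e_2$, so the unique pairwise equilibrium has $v,w$ concentrating on $e_2$ with welfare $2+2\varepsilon$, while the social optimum puts full effort on $e_1$ and $e_3$ for welfare $4$; the ratio tends to $2$ as $\varepsilon\to 0$. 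The main technical subtlety is verifying that the ``both endpoints move all budget to $e$'' deviation together with pairwise (not strong) stability is enough to extract $\max\{w_u, w_v\} \ge c_e$, and that the whole dual argument goes through for possibly non-integral equilibria $s$, since only the aggregate utilities $w_v(s)$ enter the dual solution.
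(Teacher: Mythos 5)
Your proposal is correct and follows essentially the same route as the paper: the same convexity bound $h_e(s^*_e)\le s^*_e\,h_e(1)$, the same key stability inequality $\max\{w_u(s),w_v(s)\}\ge h_e(1)$ extracted from the ``both endpoints move all budget to $e$'' bilateral deviation, and the same path-of-length-3 tightness example; the paper merely writes the duality as a direct summation rather than an explicit primal/dual pair. Your explicit normalization to $h_e(0)=0$ is a small extra detail that the paper's argument implicitly assumes, and it is handled correctly.
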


\begin{proof}
  Consider a stable solution $s$ and an optimum solution $s^*$. For a
  vertex $u$ we consider the profit and denote this by $y_u = \sum_{e
    : e=(u,v)} f_e(s_e).$ For every edge $e=(u,v)$, consider the case
  when both players invest the full effort. Due to convexity $s_e^*
  \cdot f_e(1) \ge f_e(s_e^*)$. Suppose for both players $y_u, y_v <
  f_e(1)$. Then there is a profitable switch by allocating all effort
  to $e$. This implies that $\max\{y_u, y_v\} \ge f_e(1)$ and thus,
  \[ (y_u + y_v) \cdot s_e^* \ge s_e^* f_e(1) \ge f_e(s_e^*) \enspace. \]
  Thus, we can bound
  \[
  \sum_{e \in E} (y_u + y_v)\cdot s_e^* \ge \sum_{e \in E} f_e(s_e^*)
  = w(s^*)/2 \enspace. \]
  On the other hand
  \[
  \sum_{e \in E} (y_u + y_v) \cdot s_e^* = \sum_{u \in V} \sum_{e :
    e=(u,v)} y_u \cdot s_e^* \le \sum_{u \in V} y_u = 2 \sum_{e \in E}
  f_e(s_e) = w(s) \enspace.
  \]
  Hence, $w(s) \ge w(s^*)/2$ and the price of anarchy is 2.

  Note that this is tight for functions $f$ that are arbitrarily
  convex. The example is a path of length 3 similar to
  Theorem~\ref{thm.classC} and Theorem~\ref{thm:MinLinear}. We use
  $f_e(1) = 1+\varepsilon$ for the inner edge and $f_e(1) = 1$ for the
  outer edges, and the price of stability becomes arbitrarily close to
  2.
\end{proof}

For the case of arbitrary budgets and convex functions, however, we
can again find an example that does not allow a pairwise equilibrium.

\begin{example}
  \label{exm:minNoEq} \rm
  Our example game consists of a path of length 3. We denote the
  vertices along this path with $u$, $v$, $w$, $z$. All players have
  budget 2, except for player $z$ that has budget 1. The profit
  functions are $h_{u,v}(x) = 2x^2$, $h_{v,w}(x) = 5x$, and
  $h_{w,z}(x) = 6x$. Observe that this game allows no pairwise
  equilibrium: If $2 \ge s_{v,w} > 1$, then player $w$ has an
  incentive to increase the effort towards $z$. If $1 \ge s_{v,w} >
  0$, then player $v$ has an incentive to increase effort towards
  $u$. If $s_{v,w} = 0$, both $v$ and $w$ can jointly increase their
  profits by contributing $2$ on $(v,w)$.
\end{example}

Using this example we can construct games in which deciding existence of
pairwise equilibria is hard.

\begin{theorem}
  \label{thm:minHardness}
  It is \classNP-hard to decide if a network contribution game admits
  a pairwise equilibrium if budgets are arbitrary and all
  functions are $f_e(x,y) = h_e(\min(x,y))$ with convex $h_e$.
\end{theorem}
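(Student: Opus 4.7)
The plan is to reduce from 3SAT, following the same overall blueprint as in Theorem~\ref{thm:generalHardness} but using only min-type convex reward functions. The core obstruction to equilibrium will be provided by copies of the gadget of Example~\ref{exm:minNoEq}, one per clause, and the job of the satisfying assignment is exactly to ``stabilize'' each clause-gadget by supplying extra effort from an adjacent assignment player.

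More concretely, I would build the instance in three layers. First, for each variable $x_i$ introduce a decision player $d_i$ with a small budget, together with two assignment players $t_i$ (true) and $f_i$ (false) with large budgets. Connect $d_i$ to both $t_i$ and $f_i$ with a min-convex edge of very steep reward, calibrated so that (i) in any pairwise equilibrium $d_i$ must invest its full budget on exactly one of those two edges, and (ii) the assignment player on that edge must match this effort, thereby exhausting some fixed amount of its budget. This forces every pairwise equilibrium to encode a Boolean assignment: if $d_i$ couples with $t_i$ then $x_i$ is set to false (so that the complementary player $f_i$ has budget free for the clause side), and vice versa. Second, for each clause $C_j$ install an isolated copy of the path gadget $(u,v,w,z)$ from Example~\ref{exm:minNoEq}, which by itself admits no pairwise equilibrium. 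Third, connect the gadget of $C_j$ via a min-convex edge to each assignment player that corresponds to a literal appearing in $C_j$ (e.g.\ to $t_i$ if $x_i$ appears positively in $C_j$).

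The crucial gadget-design step is choosing the reward function on the new ``literal edges'' from assignment players into the clause gadget so that the effect of a satisfying assignment is precisely to remove the obstruction in Example~\ref{exm:minNoEq}. The idea is: if assignment player $p$ still has free budget (i.e.\ the corresponding literal satisfies $C_j$), it is willing to match a large contribution from the endpoint of the path gadget (say $v$) on the literal edge, giving $v$ a reward that dominates what $v$ could gain by participating in the destabilizing deviation inside the gadget. If no literal of $C_j$ is satisfied, every neighboring assignment player has its budget fully absorbed by the decision layer, so no such matching contribution is available, and the clause gadget is forced back into the non-equilibrium situation of Example~\ref{exm:minNoEq}. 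Concretely, this requires tuning the slopes of the min-convex functions so that (a) using full budget on a decision edge strictly beats splitting with a literal edge, (b) if the decision edge is already resolved elsewhere, an assignment player strictly prefers to pour the remaining budget into a literal edge over any alternative, and (c) a single satisfied literal is enough to pin down the gadget endpoint and kill the Example~\ref{exm:minNoEq} instability.

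I would then verify the bi-implication. The forward direction (satisfying assignment $\Rightarrow$ pairwise equilibrium) is a direct construction: route decision budgets according to the assignment, saturate literal edges into each clause gadget using one satisfying assignment player per clause (picked via a matching argument when several literals satisfy the clause, as in the proof of Theorem~\ref{thm:generalHardness}), and set the internal gadget efforts accordingly; then check by cases that no unilateral or bilateral deviation strictly improves any pair. The reverse direction (equilibrium $\Rightarrow$ satisfying assignment) follows by showing that in any pairwise equilibrium the decision layer must be resolved as above (otherwise a bilateral deviation between $d_i$ and an assignment player is improving), and that each clause gadget is stabilized only when at least one adjacent literal edge receives a sufficiently large contribution, which requires the corresponding assignment player to have unused budget, i.e., the clause is satisfied. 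The main obstacle will be calibrating the convex $h_e$'s (and the budgets) so that all of (a)--(c) hold simultaneously while still being convex and purely a function of $\min(x,y)$; every potentially profitable pairwise deviation across layers (decision/assignment/gadget) must be ruled out by a single consistent choice of slopes and thresholds, which is exactly where the bookkeeping of Theorem~\ref{thm:generalHardness} has to be redone under the more restrictive min-functional form.
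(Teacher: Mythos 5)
Your overall plan coincides with the paper's: reduce from 3SAT, use one copy of the path gadget of Example~\ref{exm:minNoEq} per clause, a decision player plus two assignment players per variable, and literal edges from assignment players into the clause gadgets, so that a satisfying assignment is exactly what is needed to defuse the gadgets. However, there is a concrete flaw in your variable-gadget calibration that breaks the soundness direction. You give the decision player $d_i$ a \emph{small} budget and the assignment players large budgets, so coupling with $d_i$ only ``exhausts some fixed amount'' of the coupled assignment player's budget. Since rewards are nondecreasing and effort is constrained only by the total budget, the coupled assignment player can then pour its large leftover budget into its literal edges at no cost to its decision-edge reward. Consequently every clause gadget adjacent to \emph{any} literal can be stabilized in equilibrium regardless of the truth assignment, and ``equilibrium exists'' no longer implies satisfiability; your own later step even assumes the opposite of your design (``every neighboring assignment player has its budget fully absorbed by the decision layer''), which cannot happen when $d_i$'s budget is small. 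The small decision budget is the right choice in Theorem~\ref{thm:generalHardness}, where the product form $7xy$ makes the decision edge's attractiveness scale with the assignment player's own (large) contribution; under the $\min$ form a small decision budget caps the coupled player's useful effort at that small amount and frees the rest. The paper's reduction fixes exactly this by giving the decision player the \emph{same large budget} $k\cdot l$ as the assignment players, with decision edges $h_e(x)=10x^2$ and literal edges $h_e(x)=7x$: then matching the decision player consumes the coupled player's entire budget, so only the uncoupled assignment players can feed the clause gadgets, and a stable state forces a satisfying assignment.

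A second, smaller point: the paper attaches the literal edges to the budget-$1$ endpoint $z$ of the clause path, and this choice makes the ``one satisfied literal kills the instability'' step clean: a free assignment player contributes at least $1$ per matched clause (splitting a budget of $kl$ over at most $l$ clause edges), so for $z$ the literal edge strictly beats $(w,z)$ (reward $7$ versus $6$), $z$ withdraws from $(w,z)$, and $v,w$ then stabilize at effort $2$ on $(v,w)$. You instead attach (tentatively) at $v$ and leave all slopes unspecified; that may well be salvageable, but conditions (a)--(c) are precisely the bookkeeping you have deferred, and as stated the budget calibration above is the step that actually fails.
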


\begin{proof}
  We reduce from 3SAT and use a similar reduction to the one given in
  Theorem \ref{thm:generalHardness}.  An instance of 3SAT is given by
  $k$ variables and $l$ clauses. For each clause we construct a simple
  game of Example~\ref{exm:minNoEq} that has no stable state. For each
  variable we introduce three players as follows. One is a
  \emph{decision player} that has budget $k \cdot l$.  He is connected
  to two \emph{assignment players}, one true player and one false
  player. Both these players have also a budget of $k \cdot l$. The
  edge between decision and assignment players has $h_e(x) =
  10x^2$. Finally, each assignment player is connected via an edge
  with $h_e(x) = 7x$ to the node $z$ of every clause path, for which
  the corresponding clause has an occurrence of the corresponding
  variable in the corresponding form (non-negated/negated). Note that
  the connecting player $z$ is the only player with budget 1 in the
  clause path.

  Suppose the 3SAT instance has a satisfying assignment. We construct
  a stable state as follows. If the variable is set true (false), we
  make the decision player contribute all his budget to the edge $e$
  to the false (true) assignment player. Both assignment player and
  decision player are motivated to contribute their full budget to
  $e$, because $10(kl)^2$ is the maximum profit that they will ever be
  able to obtain. Clearly, none of these players has an incentive to
  deviate (alone or with a neighbor). The remaining set of assignment
  players $A$ can now contribute their complete budget towards the
  clause gadgets. As the assignment is satisfying, every node $z$ of
  the clause gadgets has at least one neighboring assignment player in
  $A$. We create a maximum bipartite matching of clause players $z$ to
  players in $A$ and match the remaining clause players $z$ (if any)
  to players from $A$ arbitrarily. Each clause player $z$ contributes
  all of his budget towards his edge in this one-to-many
  matching. Each assignment player splits his effort evenly between
  the incident edges in the matching. Note that the players $z$ from
  the clause gadgets now receive profit 7, which is the maximum
  achievable. Thus, they have no incentive to deviate. Hence, no
  player in $A$ has a profitable unilateral or a possible bilateral
  deviation. Finally, we obtain a stable state in the clause gadgets
  by assigning all players $v$ and $w$ to contribute 2 to $(v,w)$.

  Now suppose there is a stable state. Note first that the decision
  player and one incident assignment player can and will obtain their
  maximum profit by contributing their full budgets towards a joint
  edge -- otherwise there is a joint deviation that yields higher
  profit for both players. Hence, this assignment player will not
  contribute to edges to the clause gadget players. This implies a
  decision for the variable, i.e., if the (false) true assignment
  players contributes all of his budget towards the decision player,
  the variable is set (true) false. As there is a stable state, the
  contributions of the remaining assignment players must stabilize all
  clause gadgets. In particular, this means that for each clause
  triangle there must be at least one neighboring assignment player
  that does not contribute towards his decision player. This implies
  that the assignment decisions made by the decision players must be
  satisfying for the 3SAT instance.
\end{proof}

The construction of Example~\ref{exm:minNoEq} and the previous proof
can be extended to show hardness for games with uniform budgets
in which functions are either concave or convex.

\begin{cor}
  In games with uniform budgets and functions $f_e(x,y) = h_e(\min(x,y))$ with
  monotonic increasing $h_e$ it is \classNP-hard to determine if a
  pairwise equilibrium exists.
\end{cor}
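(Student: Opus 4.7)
The plan is to adapt both Example~\ref{exm:minNoEq} and the reduction of Theorem~\ref{thm:minHardness} so that every player shares a common budget, using a mixture of convex and concave monotone reward functions. The key idea is that the role previously played by $z$ having budget $1$ (while others had budget $2$) can instead be played by a concave reward function that plateaus at $1$, so that effort above $1$ on the relevant edge is useless.

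My first step is to produce a uniform-budget version of the unstable gadget. I keep the path $u$-$v$-$w$-$z$ but give every player the same budget $B$. I retain the convex function $h_{u,v}(x) = 2x^2$ and the linear $h_{v,w}(x) = 5x$, and replace $h_{w,z}(x) = 6x$ by the concave monotone function $h_{w,z}(x) = 6\min(x,1)$. Since $z$ has no other incident gadget edge and any effort beyond $1$ on $(w,z)$ contributes nothing, the effective behavior on $(w,z)$ is as if $z$ were a budget-$1$ player. I then redo the case analysis of Example~\ref{exm:minNoEq} on the common value $s_{(v,w)}$: for $s_{(v,w)} > 1$ player $w$ has a profitable unilateral shift from $(v,w)$ into the still-growing portion of $(w,z)$; for $s_{(v,w)} \in (0,1]$ player $v$ strictly prefers the convex move of dumping all of its budget on $(u,v)$ (since $g(a) = 2(B-a)^2 + 5a$ is strictly smaller at any $a \in (0,1]$ than at $a=0$); for $s_{(v,w)} = 0$ the pair $(v,w)$ has a bilateral deviation in which both commit their full budget to $(v,w)$. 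Thus the modified gadget admits no pairwise equilibrium in isolation, and every reward function used is either convex, linear, or concave monotone.

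Next, I plug this gadget into the reduction of Theorem~\ref{thm:minHardness}. The decision and assignment players there already share the uniform budget $kl$, so choosing $B = kl$ throughout makes the entire instance uniform. I keep the convex decision-to-assignment edges $h(x) = 10 x^2$ and the assignment-to-$z$ edges of slope comparable to $h(x) = 7x$; if $z$'s enlarged budget requires it, the latter can be replaced by concave caps $h(x) = 7\min(x, t)$ to reproduce the matching-style external stabilization used in the original proof. All reward functions then fall into the "concave or convex" class, as claimed. The two directions of the reduction mirror those of Theorem~\ref{thm:minHardness}: a satisfying 3SAT assignment yields a pairwise equilibrium in which every clause gadget is externally stabilized by some true-literal assignment player drawing $z$'s budget away from $(w,z)$; conversely every pairwise equilibrium forces each clause gadget to be externally stabilized, which recovers a satisfying assignment.

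The main obstacle is verifying that $z$'s now-larger budget does not create new joint deviations that accidentally stabilize a gadget that should be unstable. Concretely, with uniform budget $B$ and the cap at $1$ on $(w,z)$, I must show that at most the first unit of $z$'s effort on $(w,z)$ ever counts, so stability of a gadget still requires $z$ to have enough external reward (via assignment-player edges) to prefer contributing zero to $(w,z)$. I would check this by comparing marginal returns on $z$'s edges: the external edges give marginal $7$ while $(w,z)$ gives marginal at most $6$, so $z$ will saturate its assignment-player edges before touching $(w,z)$. It remains to verify that, whenever the clause is unsatisfied and $z$ is therefore forced to put positive effort on $(w,z)$, at least one of the profitable deviations identified in the isolated-gadget analysis above (including the bilateral $(v,w)$ deviation at $s_{(v,w)}=0$ and the unilateral $w$-shift at $s_{(v,w)} \ge 1$) survives. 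Once this local check is done, the 3SAT reduction from Theorem~\ref{thm:minHardness} goes through unchanged and establishes the corollary.
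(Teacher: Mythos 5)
Your reduction has a genuine gap at its central step: the claim that the modified gadget is unstable under a \emph{uniform} budget $B=kl$. Replacing $h_{w,z}(x)=6x$ by the plateaued function $6\min(x,1)$ does not emulate the budget-$1$ constraint on $z$ in Example~\ref{exm:minNoEq}, because the instability there comes from \emph{scarcity}: with budgets $(2,2,2,1)$ player $w$ cannot simultaneously serve $(v,w)$ and $(w,z)$, and player $v$ cannot simultaneously serve $(u,v)$ and $(v,w)$. With a common large budget $B=kl$ this tension disappears and your case analysis breaks in two places. First, for $s_{(v,w)}\in(1,B-1]$ player $w$ can hold $s_{(v,w)}>1$ on $(v,w)$ \emph{and} at least $1$ on $(w,z)$, so the plateau is already saturated and the ``still-growing portion'' you invoke does not exist; $w$ has no profitable shift. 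Second, and decisively, at $s_{(v,w)}=0$ the bilateral deviation of $v$ and $w$ onto $(v,w)$ is \emph{not} profitable for $v$: starting from $v$ putting all of $B$ on the convex edge $(u,v)$, a joint move to common value $t$ on $(v,w)$ gives $v$ reward $2(B-t)^2+5t$, and $2(B-t)^2+5t-2B^2=t(2t-4B+5)<0$ for every $t\in(0,B]$ once $B\ge 3$. Hence the state where $u,v$ put their full budgets on $(u,v)$ and $w,z$ put at least $1$ on $(w,z)$ is a pairwise equilibrium of your gadget in isolation, and the same configuration stabilizes every clause gadget in the full instance regardless of satisfiability, so the reduction is not sound. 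Capping the other gadget functions as well (e.g.\ $5\min(x,2)$, $2\min(x,2)^2$) only makes things worse: with budget $kl$ every gadget player can then saturate all of its incident edges simultaneously, so an equilibrium trivially exists.

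The paper's proof keeps the scarcity rather than trying to encode it in the reward functions: it gives all players budget $kl$ and attaches to each clause-gadget player $u,v,w,z$ a private sink neighbor $u',v',w',z'$ whose edge carries a steep concave function that is capped at $kl-2$ (respectively $kl-1$ for $z$), e.g.\ $h_e(x)=10(kl-2)^{1.5}\sqrt{x}$ for $x\le kl-2$ and constant afterwards. In every pairwise equilibrium each such pair sinks exactly $kl-2$ (resp.\ $kl-1$) units into its private edge, leaving effective residual budgets $(2,2,2,1)$ inside the gadget, after which the analysis of Example~\ref{exm:minNoEq} and the reduction of Theorem~\ref{thm:minHardness} apply verbatim. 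If you want to salvage your approach, you need some analogous mechanism that provably eats up all but a constant amount of each gadget player's budget in every equilibrium; a plateau on the gadget edge itself cannot do this.
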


\begin{proof}
  We use the same approach as in the previous proof, however, we
  assign each player a budget of $B_u = k \cdot l$. For each of the
  players $u$, $v$, $w$, and $z$ in a clause gadget we introduce
  players $u'$, $v'$, $w'$ and $z'$. $u'$ is only connected to $u$,
  $v'$ only to $v$, and similar for $w'$ and $z'$. The edges $(u,u')$,
  $(v,v')$ and $(w,w')$ have profit function $h_e(x) =
  10\cdot(kl-2)^{1.5} \cdot \sqrt{x}$ for $x \le kl-2$ and $h_e(x) =
  10(kl-2)^2$ otherwise. Similarly, we use $h_e(x) =
  10\cdot(kl-1)^{1.5} \cdot\sqrt{x}$ for $x \le kl-1$ and $h_e(x) =
  10(kl-1)^2$ otherwise for $(z,z')$. It is easy to observe that in
  every pairwise equilibrium players $u$ and $u'$ will contribute
  $kl-2$ towards their joint edge. This holds accordingly for every
  other pair of players $(v,v')$, $(w,w')$ and $(z,z')$. The remaining
  budgets of the players are the budgets used in
  Example~\ref{exm:minNoEq} above and lead to the same arguments in
  the above outlined reduction.
\end{proof}
Finally, we observe that the existence result in
Theorem~\ref{thm:convexMinExists} extends to strong equilibria. In
particular, whenever we consider a deviation from a coalition of
players, the reward of players incident to the highest reward edge do
not strictly improve by the deviation. In addition, the prices of
anarchy and stability are 2 because our lower bound examples continue
to hold for strong equilibria, while the upper bounds follow by
restriction.
\begin{cor}
  \label{cor:convexMinStrong}
  A strong equilibrium always exists in games with uniform budgets and
  $f_e(x,y) = h_e(\min(x,y))$ when all $h_e$ are convex. If all $h_e$
  are strictly convex, all strong equilibria are integral. The prices
  of anarchy and stability for strong equilibria in these games are
  exactly 2.
\end{cor}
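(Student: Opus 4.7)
The plan is to show that the greedy matching construction from the proof of Theorem~\ref{thm:convexMinExists} already yields a strong equilibrium, and then to transfer the bounds from Theorem~\ref{thm:PoAconvexMinUniform} by restriction. This mirrors the situation for Corollary~\ref{cor.convex.strong}, and the key ingredient is again the observation that within any deviating coalition, the player sitting on the ``heaviest'' edge has nothing to gain.

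Concretely, suppose for contradiction that a coalition $C$ has an improving deviation $s'$ from the greedy state $s$. I would pick $u \in C$ whose matching edge $e_u$ maximizes $h_{e_u}(1)$ among all edges with an endpoint in $C$, taking this value to be $0$ for an unmatched player. Because of the greedy ordering, every edge $e'=(u,w)$ incident to $u$ satisfies $h_{e'}(1) \le h_{e_u}(1)$: otherwise $e'$ would have been considered strictly earlier and, with $u$ still asleep at that moment, would have pre-empted $e_u$. Convexity together with $h_{e'}(0)=0$ then gives $h_{e'}(\min(s'_u(e'),s'_w(e'))) \le s'_u(e')\cdot h_{e'}(1)$, and summing over $e'$ incident to $u$ and using $\sum_{e'} s'_u(e') \le 1$ yields $w_u(s') \le h_{e_u}(1) = w_u(s)$, contradicting that $u$ strictly improves. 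The only delicate point is the case where every member of $C$ is unmatched (so $h_{e_u}(1)=0$); but the greedy algorithm terminates only when no two sleeping players share an edge, and matched players outside $C$ keep their entire budget on their matching edges, so every edge from $u$ leads to a neighbor contributing zero and $w_u(s')=0=w_u(s)$ too. I expect this unmatched-players case to be the main obstacle, but it collapses once one invokes the termination condition of the greedy algorithm.

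For the remaining claims, integrality under strictly convex $h_e$ is immediate, since every strong equilibrium is a pairwise equilibrium and Theorem~\ref{thm:convexMinExists} already forces all pairwise equilibria to be integral. The upper bound of $2$ on the price of anarchy follows by restriction from Theorem~\ref{thm:PoAconvexMinUniform}. For tightness of the price of stability I would revisit the length-$3$ path example used there (outer edges with $h(1)=1$, inner edge with $h(1)=1+\varepsilon$) and verify that the inefficient state in which the two inner players dump their full budget onto the middle edge is in fact strongly stable: any coalition that abandons the middle edge costs at least one of the two inner players its current reward of $1+\varepsilon$, while the two endpoints of the path cannot compensate since their only neighbors are the inner players. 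Thus the price of stability for strong equilibria is also exactly $2$.
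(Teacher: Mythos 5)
Your overall strategy --- show that the greedy state of Theorem~\ref{thm:convexMinExists} is already a strong equilibrium, then transfer integrality and the bounds of Theorem~\ref{thm:PoAconvexMinUniform} by restriction --- is exactly the paper's, and the integrality and upper-bound parts are fine (for the price of stability you also need that no \emph{better} strong equilibrium exists in the path instance, but that follows since strong equilibria are pairwise equilibria and the example pins those down). The existence argument, however, has a genuine gap: the claim that every edge $e'=(u,w)$ incident to your chosen player $u$ satisfies $h_{e'}(1)\le h_{e_u}(1)$ is false, and the justification misreads the greedy procedure. The algorithm only ever matches pairs of \emph{sleeping} players, so a higher-value edge $e'=(u,w)$ is skipped whenever $w$ is already awake; it is not ``pre-empted'' on account of $u$. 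Concretely, take a path with edges $(v,u),(u,w),(w,z)$ and $h_{(u,v)}(1)=1$, $h_{(u,w)}(1)=9$, $h_{(w,z)}(1)=10$: greedy matches $(w,z)$ and then $(u,v)$, so $u$'s matching edge has value $1$ while its incident edge $(u,w)$ has value $9$. Moreover, for $C=\{u,v\}$ the literal reading of your selection rule (the matching edge attaining the maximum over \emph{all} edges meeting $C$) picks nobody, since that maximum edge $(u,w)$ is not a matching edge; and by duplicating this pattern one can arrange that \emph{no} member of $C$ has all its incident edges dominated by its own matching edge, so no cleverer choice of $u$ rescues the inequality, and your chain $w_u(s')\le\sum_{e'}s'_u(e')h_{e'}(1)\le h_{e_u}(1)$ breaks down.

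The missing ingredient is the observation the paper already uses in the pairwise proof: edges from $u$ to matched players outside $C$ can be discarded outright, because such a player keeps its entire budget on its own matching edge and contributes $0$ to $(u,w)$, so that edge yields $h_{e'}(\min(s'_u(e'),0))=h_{e'}(0)=0$ no matter what $u$ does. The domination $h_{e'}(1)\le h_{e_u}(1)$ is then needed only for edges whose other endpoint lies in $C$ or is still asleep at termination, and it does hold if you choose $u$ to be the member of $C$ that was woken \emph{earliest}: at the moment $u$ was woken, every such endpoint was still asleep, so the corresponding pair was a candidate for the greedy rule and lost to $e_u$. With this case split your convexity-plus-budget computation goes through and gives $w_u(s')\le h_{e_u}(1)=w_u(s)$, contradicting that $u$ strictly improves; the all-unmatched case you flagged as the main obstacle is in fact the easy one and is handled exactly as you say.
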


\subsection{Concave Functions in Minimum Effort Games}

In this section we consider the case of diminishing returns, i.e., when
all $h_e$ are concave functions. Note that in this case the function
$f_e = h_e(\min(x,y))$ is coordinate-concave. Therefore, the results
from Section~\ref{sec:concave} show that the price of anarchy is at
most 2. However, for general coordinate-concave functions it is not
possible to establish the existence of pairwise equilibria, which we
do for concave $h_e$ below. In fact, if the functions $h_e$ are
strictly concave, we can show that the equilibrium is unique.
\begin{theorem}
  \label{thm:concaveMinExists}
  A pairwise equilibrium always exists in games with $f_e(x,y) =
  h_e(\min(x,y))$ when all $h_e$ are continuous, piecewise
  differentiable, and concave. It is possible to compute pairwise
  equilibria efficiently within any desired precision. Moreover, if
  all $h_e$ are strictly concave, then this equilibrium is unique.
\end{theorem}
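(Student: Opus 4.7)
The plan is to construct a pairwise equilibrium directly via a continuous water-filling process, and to establish uniqueness in the strictly concave case by a convexity argument. I will use the observation from earlier in this section that in any pairwise equilibrium one may WLOG specify a single value $s_e$ per edge representing the common contribution of the two endpoints, so the problem reduces to finding $(s_e)_{e\in E}$ with $\sum_{e \in E_u} s_e \le B_u$ for every $u$ such that no adjacent pair can jointly adjust their contributions to strictly improve both of their utilities.

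I would first characterize pairwise equilibria through per-node shadow prices $\mu_u \ge 0$, defined so that $\mu_u = \min\{h_e'(s_e) : e \in E_u,\, s_e > 0\}$ when $u$'s budget is tight and $\mu_u = 0$ otherwise. Using concavity of the $h_e$'s, a short calculation shows that a small joint increase of $s_{uv}$ strictly benefits both endpoints iff $h_{uv}'(s_{uv}) > \mu_u$ and $h_{uv}'(s_{uv}) > \mu_v$, with the symmetric statement for a joint decrease. Hence $s$ is a pairwise equilibrium exactly when $\min(\mu_u,\mu_v) \le h_e'(s_e) \le \max(\mu_u,\mu_v)$ for every edge $e=(u,v)$ with $s_e>0$, together with the natural one-sided analog at $s_e=0$. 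Note that unilateral stability is automatic in the reduced formulation, since a player capped at $s_u(e)\le s_v(e)=s_e$ has nothing to gain from reallocating.

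To produce such an $s$ I would run the following ascending-price / water-filling procedure. Parameterize by a global price $\tau$ starting very large and decreasing continuously toward $0$. At each $\tau$, for every edge $e$ whose two endpoints are both not yet frozen, maintain $s_e$ at the level with $h_e'(s_e)=\tau$; when a node $u$'s running budget $\sum_{e\in E_u} s_e$ first hits $B_u$, declare $u$ frozen, record $\mu_u=\tau$, and freeze all of $u$'s currently unfrozen incident edges at their current $s_e$. Nodes never saturated receive $\mu_u=0$. Each node freezes at most once so there are at most $n$ event times; between events the trajectory is an explicit monotone function of $\tau$, so the procedure can be implemented in polynomial time to any desired precision. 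By construction, at termination the shadow-price characterization is satisfied on every edge, so the output is a pairwise equilibrium; and a sanity check on the earlier $P_3$-with-$\sqrt{x}$ example recovers the symmetric $s_e=1/2$ equilibrium rather than the asymmetric social optimum.

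For uniqueness in the strictly concave case, suppose $s \ne s'$ are two pairwise equilibria, and let $\tilde s = \tfrac12(s+s')$, which remains budget-feasible. Strict concavity yields $\sum_e h_e(\tilde s_e) > \tfrac12\sum_e h_e(s_e) + \tfrac12 \sum_e h_e(s'_e)$ whenever some $s_e \ne s'_e$. Combining this strict gain with the shadow-price characterization at $s$ and at $s'$ (at each equilibrium the derivatives $h_e'(s_e)$ are pinned between $\mu_u$ and $\mu_v$) forces the existence of a pair $(u,v)$ whose stability condition at one of $s, s'$ fails against a profitable joint move toward $\tilde s$, a contradiction. The main obstacle will be making the water-filling process well-defined at simultaneous saturation events (multiple nodes or edges saturating at the same $\tau$, requiring a consistent tie-breaking rule for the order in which edges are frozen) and extending the argument from differentiable to merely piecewise-differentiable $h_e$ by replacing $h_e'$ with one-sided derivatives and invoking continuity of the trajectory.
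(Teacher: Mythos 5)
Your existence construction is, at heart, the paper's own argument in a different dress: letting a global price $\tau$ descend and freezing a node the moment its budget saturates is the same as the paper's rule of repeatedly waking the sleeping node with the largest minimum derivative $h'_v$ of its best response, and your ``earlier-frozen endpoint'' plays exactly the role of the paper's witness (the node woken first, which cannot strictly gain from any joint increase on the shared edge). The issues you flag at the end are not cosmetic, though: simultaneous saturation and flat/linear pieces of $h_e$ (where $h_e'(s_e)=\tau$ has no solution or a whole interval of solutions, so the trajectory jumps) are precisely where the paper needs its tie-breaking rule and its two lemmas (that ties force linearity, and that a sleeping node always has a best response matching the awake contributions), and your shadow-price certificate must be stated with one-sided derivatives and with care that the witness's comparison is only non-strict (which suffices, since an improving bilateral deviation must be strict for both). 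Also, your claimed ``iff'' characterization has a superfluous lower bound $\min(\mu_u,\mu_v)\le h_e'(s_e)$: a joint decrease of $s_e$ never helps either player, because the freed effort cannot raise the minimum on any other edge whose far endpoint is not deviating. This is harmless for your construction but signals that the characterization should be rechecked before being used as a black box.

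The genuine gap is the uniqueness argument. Pairwise stability is \emph{not} equivalent to local maximality of $\Phi(s)=\sum_e h_e(s_e)$ under pairwise moves: a bilateral deviation on $e=(u,v)$ is improving when the gain on $e$ exceeds $u$'s own loss \emph{and} exceeds $v$'s own loss, which does not imply it exceeds the sum of the two losses, and conversely the existence of a nearby (or faraway) feasible point of strictly larger $\Phi$ does not hand any single pair a strictly improving joint move -- indeed the price-of-anarchy bound of $2$ in this very section exhibits pairwise equilibria whose aggregate reward is only half that of another feasible state. So from $\Phi(\tfrac12(s+s'))>\min\{\Phi(s),\Phi(s')\}$ you cannot conclude that ``some pair's stability condition fails against a move toward $\tilde s$''; the move toward $\tilde s$ generally requires coordinated changes by many nodes, and no pair need benefit. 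The paper proves uniqueness for strictly concave $h_e$ quite differently: by induction along the wake-up order, showing that the strategy the algorithm assigns to the $i$-th woken node is forced in \emph{every} pairwise equilibrium (if an equilibrium put less on some edge of that node, either a unilateral deviation of that node or a bilateral deviation with the still-sleeping neighbor would be improving). To repair your proof you would need an exchange/induction argument of this kind -- e.g.\ show that the first node to saturate in your descending-price process must have exactly its frozen contributions in any equilibrium, then recurse -- rather than the strict-concavity-of-the-potential midpoint argument.
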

\begin{proof}
  First, notice that we can assume without loss of generality that for
  every edge $e=(u,v)$, the function $h_e$ is constant for values
  greater than $\min(B_u,B_v)$. This is because it will never be able
  to reach those values in any solution.

  We create a pairwise equilibrium in an iterative manner. For any
  solution and set of nodes $S$, define $BR_v(S)$ as the set of best
  responses for node $v$ {\em if it can control the strategies of
    nodes $S$.} We begin by computing $BR_v(V)$ independently for each
  player $v$ ($V$ is the set of all nodes). In particular, this
  simulates that $v$ is the player that always creates the minimum of
  every edge, and we pick $s_v$ such that it maximizes $\sum_{e =
    (u,v)} h_e(s_v(e))$. This is a concave maximization problem (or
  equivalently a convex minimization problem), for which it is
  possible to find a solution by standard methods in time polynomial
  in the size of $G$, the encoding of the budgets $B_v$ and the number
  of bits of precision desired for representing the solution. For
  background on efficient algorithms for convex minimization see,
  e.g.,~\cite{Nesterov94}.

  Let $h_e^+(x)$ be the derivative of $h_e(x)$ in the positive
  direction, and $h_e^-(x)$ be the derivative of $h_e(x)$ in the
  negative direction. We have the property that for $s_v$ calculated
  as above, for every edge $e$ with $s_v(e) > 0$ it holds that
  $h_e^-(s_v(e))\ge h_{e'}^+(s_v(e'))$ for every edge $e'$ incident to
  $v$. Define $h_v'$ as the minimum value of $h_e^-(s_v(e))$ for all
  edges $e$ incident to $v$ with $s_v(e) > 0$.

  Our algorithm proceeds as follows. At the start all players are
  asleep, and in each iteration we pick one player to wake up. Let
  $S_i$ denote the set of sleeping players in iteration $i$, and $A_i
  = V - S_i$ the set of awake players; in the beginning $S_1=V$. We
  will call edges with both endpoints asleep {\em sleeping} edges, and
  all other edges {\em awake} edges.

  In each iteration $i$, we pick one player to wake up, and fix its
  contributions on all of its adjacent edges. In particular, we choose
  a node $v \in S_i$ with the currently highest derivative value
  $h'_v$ (see below for tie-breaking rule). We set $v$'s contribution
  to an edge $e=(u,v)$ to $s_v(e)$, where $s_v\in BR_v(S_i)$. Define
  $\overline{BR}_v(S_i)$ as the set of best responses in $BR_v(S_i)$
  for which $s_v(e) = s_u(e)$ for all awake edges $e = (u,v)$. For
  $s_v \in \overline{BR}_v(S_i)$ player $v$ exactly matches the
  contributions of the awake nodes $A_i$ on all awake edges between
  $v$ and $A_i$. By Lemma~\ref{lem:matchingContribution} below,
  $\overline{BR}_v(S_i)$ is non-empty, and our algorithm sets the
  contributions of $v$ to $s_v\in \overline{BR}_v(S_i)$. Moreover, we
  set the contribution of other sleeping players $u \in S_i$ to be
  $s_u(e) = s_v(e)$ on the sleeping edges, so we assume $u$ fully
  matches $v$'s contribution on edge $e$. By
  Lemma~\ref{lem:matchingContribution}, $u$ will not change its
  contributions on these edges when it is woken up. Thus, in the final
  solution output by the algorithm $v$ will receive exactly the reward
  of $\overline{BR}_v(S_i)$. Now that node $v$ is awake, we compute
  $\overline{BR}_u(S_i-\{v\})$ for all sleeping $u$, as well as new
  values $h_u'$ and iterate. Note that values $h_u'$ in later
  iterations are defined as the minimum derivative values on all the
  {\em sleeping} edges neighboring $u$, not on all edges. To
  summarize, each iteration $i$ of the algorithm proceeds as follows:

  \begin{itemize}
  \item For every $u\in S_i$, compute $s_u\in \overline{BR}_u(S_i)$.
  \item For every $u\in S_i$, set $h_u'$ to be the minimum value of $h_e^-(s_u(e))$ for all {\em sleeping} edges $e$ incident to $u$ with $s_u(e) > 0$.
  \item Choose a node $v$ with maximum $h'_v$ (using tie-breaking rule below), fix $v$'s strategy to be $s_v$, and set $S_{i+1}=S_i-\{v\}$.
  \end{itemize}

  To fully specify the algorithm, we need to define a tie-breaking
  rule for choosing a node to wake up when there are several nodes
  with equal values $h_v'$. Let $s_v \in \overline{BR}_v(S_i)$ that we
  compute. Our goal is that for every edge $e=(u,v)$ with $h_u' =
  h_v'$, we choose node $u$ such that $s_u(e)\leq s_v(e)$. We claim
  that we can always find a node $u$ such that this is true with
  respect to all its neighbors. Suppose a node $u$ has two edges
  $e=(u,v)$ and $e'=(u,w)$ with $h_u'=h_v'=h_w'$ and $s_u(e)>s_v(e)$
  but $s_u(e')<s_w(e')$. Lemma~\ref{lem:sameDerivative} below implies
  that the functions on $(u,v)$ and $(u,w)$ are linear in this
  range. Specifically, Lemma \ref{lem:sameDerivative} implies that
  $h_{e'}^+(s_u(e'))=h_e^-(s_u(e))$ because
  $h_e^-(s_u(e))=h_u'=h_w'=h_{e'}^-(s_w(e'))\leq h_{e'}^+(s_u(e'))$,
  with the inequality being true because $h_{e'}$ is concave. Hence,
  $u$ can move some amount of effort from $e$ to $e'$ and still form a
  best response. Continuing in this manner, we can find another best
  response in $\overline{BR}_u(S_i)$ for $u$ such that $u$ has
  contributions that are either more than both its neighbors, or less
  than both its neighbors. This implies that there exists $u \in S_i$
  with $s_u \in \overline{BR}_u(S_i)$ such that $s_u(e)\leq s_v(e)$
  for all neighbors $v$, and therefore our tie-breaking is possible.

  \begin{lemma}\label{lem:sameDerivative}
    Consider two nodes $u$ and $v$ and an edge $e=(u,v)$, and let
    $s_u\in \overline{BR}_u(S_i)$ and $s_v\in \overline{BR}_v(S_i)$ be
    the best responses computed in our algorithm. Suppose that
    $s_v(e)>s_u(e)$. Then it must be that either $h_u'>h_v'$, or
    $h_u'=h_e^-(s_v(e))=h_v'$.
  \end{lemma}

  \begin{proof}
    If edge $e$ is the edge which achieves the minimum value $h_u'$,
    then we are done, since then $h_u'=h^-_e(s_u(e))\geq
    h^-_e(s_v(e))\geq h_v'$.  Therefore, we can assume that another
    edge $e'=(u,w)$ with $s_u(e')>0$ achieves this value, so $h_u' =
    h^-_{e'}(s_u(e'))$.

    The fact that we cannot increase $u$'s reward by assigning more
    effort to edge $e$ means that $h^-_{e'}(s_u(e'))\geq
    h^+_e(s_u(e))$. Since $h_e$ is concave, we know that $h^+_e(s_u(e))\geq
    h^-_e(s_v(e))$, which is at least $h_v'$ by its definition. This
    proves that $h_u'\geq h_v'$. If this is a strict inequality, then
    we are done. The only possible way that $h_u'= h_v'$ is if $h_u'=
    h^-_{e'}(s_u(e'))= h^-_e(s_v(e)) = h_v'$, as desired.
  \end{proof}

  First we will prove that our algorithm forms a feasible solution,
  i.e., that the budget constraints are never violated. To do this, we
  must show that when the $i$'th node $v$ is woken up and sets its
  contribution $s_v(e)$ on a newly awake edge $e=(u,v)$, the other
  sleeping player $u$ must have enough available budget to match
  $s_v(e)$. In $s_u \in \overline{BR}_u(S_i)$ that our algorithm
  computes, let $\overline{B_u}$ be the available budget of node $u$,
  that is,
  \[
  \overline{B_u} = B_u - \sum_{e=(u,w), w \in A_i}
  s_w(e)\enspace,
  \]
  the budget minus requested contributions on awake edges. This is the
  maximum amount that node $u$ could assign to $e$.

  For contradiction, assume that $s_v(e) > \overline{B_u}$, so our
  assignment is infeasible. Then it must be that $h_e^-(s_u(e))\geq
  h_e^-(\overline{B_u})\geq h_e^-(s_v(e))$, since $h_e$ is concave. By
  definition of $h_v'$, we know that $h_e^-(s_v(e))\geq h_v'$, and so
  $h_e^-(s_u(e))\geq h_v'$. Now let $e'=(u,w)$ be the edge that
  achieves the value $h_u'$, i.e., $h_{e'}^-(s_u(e'))=h_u'$. If
  $h_{e'}^-(s_u(e'))<h_v'$, then
  $h_{e'}^-(s_u(e'))<h_e^-(\overline{B_u})\leq h_e^+(s_u(e))$, so $s_u$
  cannot be a best response, since $u$ could earn more reward by
  switching some amount of effort from $e'$ to $e$. Therefore, we know
  that $h_u'\geq h_v'$. If this is a strict inequality, then we have a
  contradiction, since $u$ would have been woken up before
  $v$. Therefore, it must be that $h_u'=h_v'$. But this contradicts
  our tie-breaking rule -- we would choose $u$ before $v$ because it
  puts less effort onto edge $e$ in our choice from
  $\overline{BR}_u(S_i)$ than $v$ does in
  $\overline{BR}_v(S_i)$. Therefore, our algorithm creates a
  feasible solution.

  \begin{lemma}\label{lem:matchingContribution}
    For every node $u$ and all $S_i$ until node $u$ is woken up, there
    is a best response in $BR_u(S_i)$ that exactly matches the
    contributions of the awake nodes $A_i$. In other words,
    $\overline{BR}_u(S_i)$ is non-empty.
  \end{lemma}

  \begin{proof}
    We prove this by induction on $i$; this is trivially true for
    $S_1$. Suppose this is true for $S_{i-1}$, and let $v$ be the
    node that is woken up in the $i$'th iteration, with an existing
    edge $e=(u,v)$, so that $S_i=S_{i-1}-\{v\}$. Let
    $s_u\in \overline{BR}_u(S_{i-1})$ be $u$'s best response which
    exists by the inductive hypothesis. First, we claim that
    $s_u(e)\geq s_v(e)$. To see this, notice that if $s_u(e)< s_v(e)$,
    then by Lemma~\ref{lem:sameDerivative}, we know that $h_u'\geq h_v'$.
    If this is a strict inequality, then we immediately get a contradiction, since
    we picked $v$ to wake up because it had the highest $h_v'$
    value. If $h_u'=h_v'$, this contradicts our tie-breaking rule,
    since $u$ would be woken up first for contributing less to edge
    $e$.

    Consider the computation of $BR_u(S_{i-1})$ from $u$'s point of view. $u$ is deciding how to allocate
    its budget $B_u$ among incident edges, in order to maximize its
    reward. By putting $x$ effort onto an edge $e'=(u,w)$ with $w\in S_{i-1}$,
    $u$ will obtain $h_{e'}(x)$ reward, since $u$ can control the strategy
    of $w$, and so will make it match the contribution of $u$ on edge
    $e'$. If instead $w\in A_{i-1}$, then by putting $x$ effort onto $e'$, $u$
    will only obtain $h_{e'}(\min(x,s_w(e')))$ reward, since the strategy
    of $w$ is already fixed, and $u$ cannot change it. Then, $BR_u(S_{i-1})$
    is simply the set of budget allocations of $u$ that maximizes the sum
    of the above reward functions. Now consider the computation of
    $BR_u(S_i)$ and compare it to $BR_u(S_{i-1})$. The only difference is
    that $u$ cannot control the node $v$ when computing $BR_u(S_i)$, i.e.,
    by putting $x$ effort onto edge $e$, node $u$ will only obtain
    $h_e(\min(x,s_v(e)))$ reward, instead of $h_e(x)$.


    If $s_u(e)=s_v(e)$, then $s_u\in BR_u(S_i)$ as well as in $BR_u(S_{i-1})$,
    since the computations of $BR_u(S_i)$ and $BR_u(S_{i-1})$ only
    differ in the reward function of edge $e$, and $u$ cannot gain any utility by
    putting more than $s_u(e)$ effort onto edge $e$ in $BR_u(S_i)$. $s_u$
    matches all the contributions of nodes in $A_i$ (including $v$), and
    so $\overline{BR}_u(S_i)$ is non-empty.

    Suppose instead that $s_u(e)>s_v(e)$. Now, let $s'_u$ be a strategy of
    $u$ created from $s_u$ as follows. Remove effort from edge $e$ by setting
    $s'_u(e)=s_v(e)$, and add $s_u(e)-s_v(e)$ effort to the other
    edges of $u$ in the optimum way to maximize $u$'s utility in $BR_u(S_i)$.
    It is easy to see that this is a best response in $BR_u(S_i)$, since a
    best response in $BR_u(S_i)$ is simply obtained by repeatedly adding
    effort to the edges with highest derivative. Moreover, $s'_u$
    matches the contributions on all edges to $A_i$, so once again we know
    that $\overline{BR}_u(S_i)$ is non-empty.
  \end{proof}


  Re-number the nodes $v_1,v_2,\ldots,v_n$ in the order that we wake
  them. We need to prove that the algorithm computes a pairwise
  equilibrium. By Lemma~\ref{lem:matchingContribution}, we know that
  all the contributions in the final solution are symmetric, and that
  node $v_i$ gets exactly the reward $\overline{BR}_{v_i}(S_i)$ in the
  final solution.

  To prove that the above algorithm computes a pairwise equilibrium,
  we show by induction on $i$ that node $v_i$ will never have
  incentive to deviate, either unilaterally or bilaterally. This is
  clearly true for $v_1$, since it obtains the maximum possible reward
  that it could have in {\em any} solution, which proves the base
  case. We now assume that this is true for all nodes earlier than
  $v_i$, and prove it for $v_i$ as well. It is clear that $v_i$ would
  not deviate unilaterally, since it is getting the reward of
  $BR_{v_i}(S_i)$. This is at least as good as any best response when
  it cannot control the strategies of any nodes except itself. By the
  inductive hypothesis, $v_i$ would not deviate bilaterally with a
  node $v_j$ such that $j<i$. $v_i$ would also not deviate bilaterally
  with a node $v_j$ such that $j>i$, since when forming
  $\overline{BR}_{v_i}(S_i)$ node $v_i$ can set the strategy of node
  $v_j$. So in $\overline{BR}_{v_i}(S_i)$ node $v_i$ achieves a reward
  better than any deviation possible with nodes from $S_i$. This
  completes the proof that our algorithm always finds a pairwise
  equilibrium.

  Now we will consider the case when all $h_e$ are {\em strictly}
  concave, and prove that there is a unique pairwise
  equilibrium. Consider the algorithm described above. It is greatly
  simplified for this case: since all $h_e$ are strictly concave, then
  $BR_v(S_i)$ consists of only a single strategy, and by
  Lemma~\ref{lem:matchingContribution}, this strategy is also in
  $\overline{BR}_v(S_i)$. We claim that when this algorithm assigns a
  strategy $s_v$ to a node $v$, then $v$ must have this strategy in
  {\em every} pairwise equilibrium. We will prove this by induction,
  so suppose this is true for all nodes earlier than $v=v_i$, but
  there is some pairwise equilibrium $s'$ where $v$ does not use the
  strategy $s_v\in
  \overline{BR}_v(S_i)$. 
  Since $s'_v\neq s_v$, then there must be some edge $e=(v,u)$ such
  that $s'_v(e)<s_v(e)$. If $u$ is a node considered earlier than $v$,
  then by the inductive hypothesis, we know that
  $s_u(e)=s_u'(e)$. $s_v$ is the unique best response of $v$ if it
  were able to control the strategies of nodes in $S_i$. This means
  that the gain that $v$ could obtain by moving some small amount of
  effort to edge $e$ is greater than the loss that it would obtain
  from removing effort from any edge to a node of $S_i$, and so $v$
  would have a unilateral deviation in $s'$. If instead $u\in S_i$,
  then the only way that it would not benefit $v$ to move some effort
  onto $e$ is if $s_u'(e)=s'_v(e)$. Since $v$ was chosen by the
  algorithm before $u$, we know that it would always benefit $u$ to
  move some effort onto edge $e$ in this case, since the derivative it
  would encounter there is higher than $u$ encounters on any other
  edge. Thus, there exists a bilateral deviation where both $u$ and
  $v$ move some effort onto edge $e$.
\end{proof}

For the case of strong equilibria, we observe that the arguments for
existence can be adapted, while the upper bounds of 2 on the price of
anarchy translate by restriction. In particular, consider a pairwise
equilibrium as described in the proof of
Theorem~\ref{thm:concaveMinExists}. Resilience to coalitional
deviations can be established in exactly the same way as above, i.e.,
the player from the coalition that was the first to be woken up has no
incentive to deviate.
\begin{cor}
  \label{cor:concaveMinStrong}
  A strong equilibrium always exists in games with $f_e(x,y) =
  h_e(\min(x,y))$ when all $h_e$ are continuous, piecewise
  differentiable, and concave. It is possible to compute strong
  equilibria efficiently within any desired precision. Moreover, if
  all $h_e$ are strictly concave, then this equilibrium is unique.
\end{cor}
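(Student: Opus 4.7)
The plan is to reuse the greedy wake-up construction from the proof of Theorem~\ref{thm:concaveMinExists} and argue that the state it produces is already resilient against coalitional deviations of arbitrary size, not only against pairs. The running-time and precision guarantees then transfer directly, and uniqueness in the strictly concave case will follow immediately because every strong equilibrium is in particular a pairwise equilibrium, and Theorem~\ref{thm:concaveMinExists} already gave uniqueness of the latter.

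The main step is to verify that no coalition $C \subseteq V$ admits a strictly improving deviation $s'_C$ from the constructed state $s$. I would focus on the node $v_i \in C$ that was woken up earliest by the algorithm, so that every other member of $C$ satisfies $v_j \in C$ with $j > i$, and hence $C \setminus \{v_i\} \subseteq S_i$. In the deviated state $s'$, only the strategies of nodes in $C$ change, so in particular the contributions of the already-awake nodes $v_1,\ldots,v_{i-1}$ are unaffected. Therefore $v_i$'s reward in $s'$ is bounded above by the optimum it could achieve if it jointly chose strategies for itself and the subset $C \setminus \{v_i\}$ of $S_i$, with all nodes outside $C$ playing as in $s$.

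Now when $v_i$ was woken up, the algorithm fixed its strategy to $s_{v_i} \in \overline{BR}_{v_i}(S_i)$, which is by definition a best response under the assumption that $v_i$ has full control over every sleeping node in $S_i$; and Theorem~\ref{thm:concaveMinExists} already shows that $v_i$ actually receives this best-response reward in the final state $s$. Since allowing $v_i$ to dictate the moves of a \emph{superset} of the coalition can only weakly raise the optimum, $v_i$'s reward in $s'$ is at most $w_{v_i}(s)$, so the deviation is not strictly improving for $v_i$, and $s'_C$ is not an improving coalitional deviation.

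The place where one needs to be careful is precisely this last comparison: the coalition controls only $C \setminus \{v_i\} \subseteq S_i$, whereas $\overline{BR}_{v_i}(S_i)$ represents control over all of $S_i$, so the inequality goes in the right direction and the ``earliest-woken member'' is pinned down by its own best-response computation. Once this observation is in place, existence and efficient computability of strong equilibria follow from the identical algorithm, and uniqueness under strict concavity reduces to the uniqueness of pairwise equilibria already established in Theorem~\ref{thm:concaveMinExists}.
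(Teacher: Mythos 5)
Your proposal is correct and follows essentially the same route as the paper, which argues (in the remark preceding the corollary) that the wake-up construction of Theorem~\ref{thm:concaveMinExists} is already resilient to coalitional deviations because the earliest-woken member $v_i$ of any coalition $C$ receives the reward of $\overline{BR}_{v_i}(S_i)$ in the final state, an upper bound on anything achievable when only nodes of $S_i$ (a superset of $C\setminus\{v_i\}$, with the awake nodes' strategies unchanged) alter their play. Your reductions of the uniqueness and efficiency claims to Theorem~\ref{thm:concaveMinExists} are likewise the intended ones.
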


\section{Maximum Effort Games}
\label{sec:max}

In this section we briefly consider $f_e(x,y) = h_e(\max(x,y))$ for
arbitrary monotonic increasing functions. Our results rely on the
following structural observation.

\begin{lemma}
  \label{lem:maxLemma}
  If there is a bilateral deviation that is strictly profitable for
  both players, then there is at least one player that has a
  profitable unilateral deviation.
\end{lemma}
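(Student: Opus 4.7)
The plan is a proof by contradiction that exploits the special structure of $\max$: on any edge $e$, the reward depends only on the larger of the two endpoint contributions, so whenever one endpoint already contributes more, the other endpoint's contribution is irrelevant on $e$.

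First I would dispose of the easy case where the two deviating players $u$ and $v$ are not adjacent. Since no edge is incident to both of them, $u$'s utility is a function of $s_u$ alone (with $s_{-u,v}$ held fixed) and $v$'s utility is a function of $s_v$ alone, so the bilateral deviation $(s'_u, s'_v)$ separates into two independent unilateral deviations, both of which must be strictly profitable.

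For the remaining case, let $e=(u,v)$ be the unique shared edge, and write $w_u(s) = \sum_{e' = (u,w) \in E_u, e'\neq e} f_{e'}(s_u(e'), s_w(e')) + h_e(\max(s_u(e),s_v(e)))$, and likewise for $w_v$. Suppose for contradiction that the bilateral deviation to $(s'_u, s'_v)$ is strictly profitable for both players but that neither the unilateral deviation $(s'_u, s_v)$ profits $u$ nor the unilateral deviation $(s_u, s'_v)$ profits $v$. Subtracting the (non-strict) unilateral inequality for $u$ from the strict bilateral inequality for $u$, every non-$e$ term in $w_u$ cancels (each such term depends only on $s'_u$, not on $s_v$ vs.\ $s'_v$), leaving
\[
h_e(\max(s'_u(e), s'_v(e))) > h_e(\max(s'_u(e), s_v(e))).
\]
Since $h_e$ is nondecreasing, this forces $\max(s'_u(e), s'_v(e)) > \max(s'_u(e), s_v(e))$, and hence $s'_v(e) > s'_u(e)$. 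The symmetric subtraction for $v$ yields $s'_u(e) > s'_v(e)$, a contradiction.

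The only delicate step is making sure the non-$e$ terms really do cancel exactly: this works because in max-effort games the reward on any edge $e' \neq e$ incident to $u$ depends only on $s_u(e')$ and $s_w(e')$ for the other endpoint $w \neq v$, and that other endpoint's strategy is the same in both the unilateral and bilateral post-deviation states. So the bookkeeping is clean, and the entire argument reduces to a single $\max$-monotonicity observation on edge $e$.
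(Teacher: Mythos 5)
Your proof is correct. It rests on the same underlying observation as the paper's proof -- on the shared edge only the larger contribution matters, so the player whose new contribution sets the maximum does not need the partner's cooperation -- but you package it differently. The paper argues constructively with a case split on whether the maximum effort on the joint edge goes down (then both players' gains come entirely from other edges, and each can realize them alone) or up (then the player setting the new maximum can perform the switch unilaterally). You instead run a single contradiction argument: assuming the unilateral restrictions $(s'_u,s_v)$ and $(s_u,s'_v)$ are both unprofitable, the non-$e$ terms cancel against the strict bilateral inequalities, and monotonicity of $h_e$ forces $s'_v(e) > s'_u(e)$ and $s'_u(e) > s'_v(e)$ simultaneously. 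This buys you a cleaner, case-free argument, and you are also more careful on two points the paper glosses over: the trivial non-adjacent case (pairwise deviations in this paper are not restricted to adjacent players), and the reliance on the graph being simple so that $e$ is the unique shared edge. That caveat is genuinely needed -- the paper's own Prisoner's-Dilemma example in Section 9 (two parallel edges between $v$ and $w$) shows the lemma fails for multigraphs, and your cancellation step would break there exactly because more than one term involving both players would remain.
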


\begin{proof}
  Suppose the bilateral deviation decreases the maximum effort on the
  joint edge. In this case, both players must receive more profit from
  other edges. This increase, however, can obviously also be realized
  by each player himself.

  Suppose the bilateral deviation increases the maximum effort on the
  joint edge. Then the player setting the maximum effort on the edge
  can obviously also do the corresponding strategy switch by himself,
  which yields the same outcome for him.
\end{proof}

It follows that a stable solution always exists, because the absence
of unilateral deviations implies that the state is also a pairwise
equilibrium. Furthermore, the total profit of all players is a
potential function of the game with respect to unilateral better
responses.\footnote{Note that the social welfare is not a potential
  function for bilateral deviations. Consider a path of length 3 with
  $h_e(x) = 2x$ on the outer edges and $h_e(x)=3x$ on the inner
  edge. The inner players have budget 1, the outer players budget
  0. If both inner players contribute to the outer edges, their
  utility is 2. If they both move all their effort to the inner edge,
  their utility becomes 3. Note, however, that the social welfare
  decreases from 8 to 6.} This implies that the social optimum is a
stable state and the price of stability is 1.

\begin{theorem}
  \label{thm:maxExist}
  A pairwise equilibrium always exists in games with $f_e(x,y) =
  h_e(\max(x,y))$ and arbitrary monotonic increasing functions $h_e$.
  The price of stability for pairwise equilibria is 1.
\end{theorem}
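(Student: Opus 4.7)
The plan is to leverage Lemma~\ref{lem:maxLemma} to reduce pairwise stability to Nash equilibrium, and then observe that the social optimum is itself a Nash equilibrium, which immediately yields both existence and a price of stability of 1.

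First, by Lemma~\ref{lem:maxLemma}, every Nash equilibrium is automatically a pairwise equilibrium: any improving bilateral deviation would give rise to an improving unilateral deviation for one of the two players, contradicting the Nash property. So it suffices to exhibit a Nash equilibrium.

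Next, I would show that the social welfare
\[
w(s) \;=\; 2 \sum_{e=(u,v) \in E} h_e(\max(s_u(e),s_v(e)))
\]
is a potential function (with factor $2$) for unilateral deviations. The key observation specific to max-effort games is that for each edge $e=(u,v)$, both endpoints receive the same per-edge reward $h_e(\max(s_u(e),s_v(e)))$. Hence if $v$ switches unilaterally from $s_v$ to $s_v'$ and gains $\Delta w_v > 0$, the change on every incident edge $e$ is felt identically by the other endpoint $u$, so $w$ changes by exactly $2\Delta w_v > 0$. No other player's strategy moves, so no other edge contributes to the change. This makes $w/2$ an exact potential for unilateral better responses.

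Finally, the strategy space is a product of budget simplices (compact) and the $h_e$ are continuous and monotone, so a social optimum $s^*$ exists. If some player $v$ had a profitable unilateral deviation from $s^*$, the potential-function identity would imply $w$ strictly increases, contradicting the optimality of $s^*$. Therefore $s^*$ is a Nash equilibrium, and by the reduction above it is a pairwise equilibrium. This gives existence, and since the optimum itself is an equilibrium, the price of stability is exactly $1$.

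The main obstacle, once Lemma~\ref{lem:maxLemma} is available, is really just the one-line verification that welfare serves as a potential for unilateral moves; this is clean because the $\max$ operator forces both endpoints of an edge to share the same per-edge reward, so a unilateral change by $v$ is mirrored exactly in the utility of each neighbor affected. Everything else follows from standard compactness and the reduction from pairwise to Nash equilibrium.
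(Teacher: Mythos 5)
Your proposal is correct and follows essentially the same route as the paper: Lemma~\ref{lem:maxLemma} reduces pairwise stability to unilateral stability, and the exact-potential property of $w(s)/2$ shows the social optimum is a Nash, hence pairwise, equilibrium, giving price of stability 1. The only small inaccuracy is that the potential property is not ``specific to max-effort games''---it holds for every network contribution game in this model, since each edge reward is always shared equally by its two endpoints (the paper already notes this in the introduction); what is specific to max-effort games is Lemma~\ref{lem:maxLemma}.
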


We can also easily derive a tight result on the price of anarchy for
arbitrary functions.

\begin{theorem}
  \label{thm:maximum}
  The price of anarchy for pairwise equilibria in network contribution
  games with $f_e(x,y) = h_e(\max(x,y))$ and arbitrary monotonic
  increasing functions $h_e$ is at most 2. This bound is tight for
  arbitrary convex functions.
\end{theorem}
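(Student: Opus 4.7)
For the upper bound the plan is to leverage Lemma~\ref{lem:maxLemma}, which guarantees that in maximum effort games pairwise stability is equivalent to Nash stability. So I only need to exploit one-player deviations against a given pairwise equilibrium $s$. Let $s^*$ be a social optimum and write $m_e^* = \max(s^*_u(e), s^*_v(e))$ for each edge $e = (u,v)$, so the optimum welfare is $w(s^*) = 2\sum_e h_e(m_e^*)$. I would partition the edges by assigning each $e = (u,v)$ to an endpoint attaining the maximum, say $u$, so that $s^*_u(e) = m_e^*$; let $A(u)$ denote the set of edges assigned to node $u$ (ties broken arbitrarily).

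Next, for each node $u$ the deviation $s'_u$ defined by $s'_u(e) = s^*_u(e)$ for $e \in A(u)$ and $s'_u(e) = 0$ otherwise is feasible, since $\sum_{e \in A(u)} s^*_u(e) \le \sum_{e \in E_u} s^*_u(e) \le B_u$. Under this deviation, the max contribution on each $e \in A(u)$ is at least $m_e^*$ (independent of the other endpoint's play in $s$), so by monotonicity of $h_e$ the utility $u$ obtains under $s'_u$ is at least $\sum_{e \in A(u)} h_e(m_e^*)$. Since $s$ is a Nash equilibrium, $w_u(s)$ must be at least this quantity. Summing over $u$ and using that each edge appears in exactly one $A(u)$ yields $w(s) = \sum_u w_u(s) \ge \sum_e h_e(m_e^*) = w(s^*)/2$, which is the desired PoA bound.

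For tightness I would use a star graph with center $z$ of budget $n$ and $n$ leaves each of budget $1$, with $h_e(x) = x$ on every edge (a convex function). The configuration in which $z$ allocates $1$ to every incident edge and each leaf contributes $0$ has welfare $2n$: $z$'s utility equals the sum of its contributions (namely $n$) regardless of the split, since the leaves contribute nothing, and each leaf obtains utility $1$ fixed by $z$'s contribution. The social optimum has welfare $4n - 2$, achieved by concentrating all of $z$'s budget on one edge $e_1$ (yielding reward $n$) while each leaf contributes $1$ to its own edge (yielding reward $1$ on every other edge); then $z$ receives $n + (n-1)$, leaf $v_1$ receives $n$, and each other leaf receives $1$. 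The ratio $(4n-2)/(2n)$ tends to $2$ as $n \to \infty$.

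I do not expect a major obstacle. The one step requiring care is verifying that the spread-out state in the star is actually a pairwise equilibrium. No unilateral deviation is profitable by the indifference observation above, so Lemma~\ref{lem:maxLemma} immediately upgrades this to pairwise stability — sidestepping any direct analysis of bilateral deviations by $z$ and a leaf.
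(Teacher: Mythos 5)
Your upper bound is correct and is essentially the paper's own argument: you charge each edge of the optimum to an endpoint attaining the maximum effort on it, observe that this endpoint can unilaterally guarantee at least that edge's reward, and use that a pairwise equilibrium is in particular Nash, so $w_u(s)\ge\sum_{e\in A(u)}h_e(m_e^*)$; summing and using that welfare double-counts edge rewards gives $w(s)\ge w(s^*)/2$. The paper phrases this through the quantity $f_u$ (the best reward $u$ can secure on its own), but the charging is the same; invoking Lemma~\ref{lem:maxLemma} here is harmless though not needed for this direction, since resilience to unilateral deviations is already part of the definition.

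The tightness part falls short of the stated claim. Your star with $h_e(x)=x$ is indeed a pairwise equilibrium (the verification via Lemma~\ref{lem:maxLemma} is exactly how the paper argues existence), and the ratio $(4n-2)/(2n)\to 2$ is computed correctly, so you do show that the bound of $2$ is attained within the class of convex functions. However, the theorem claims tightness for \emph{arbitrary} convex functions, i.e., for every convex $h$ one should exhibit a tight instance whose rewards are (scalings of) that $h$; this is what the paper's example provides: a path on four nodes with interior budgets $1$, leaf budgets $0$, and rewards $h$, $h$, $\varepsilon h$, where the equilibrium in which the interior players play ``shifted'' yields welfare $(2+2\varepsilon)h(1)$ against an optimum of $4h(1)$. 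Your star construction does not generalize to that stronger statement: if $h$ is strictly convex with $h(0)=0$, the spread-out state is no longer an equilibrium, because the center strictly gains by concentrating its budget on a single edge ($h(n)>n\,h(1)$), so one cannot simply substitute an arbitrary convex $h$ into your instance. To cover the claim as stated you would need a construction like the paper's path, in which the pivotal players are indifferent among their incident edges for every monotone $h$ with $h(0)=0$, and Lemma~\ref{lem:maxLemma} again upgrades Nash stability to pairwise stability. Apart from this shortfall in the lower-bound family, the argument is sound.
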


\begin{proof}
  For an upper bound on the social welfare of the social optimum $s^*$
  consider each player $u$ and suppose that he optimizes his effort
  independently. This yields a reward $f_u$. Clearly, $w(s^*) \le
  2\sum_u f_u$. To see this, notice that in $s^*$, we can assume that
  every edge has contribution from only one direction. Let $E_u^*$ be
  the edges to which $u$ contributes in $s^*$. In this case, $u$'s
  reward from these edges is at most $f_u$. The reward of the other
  nodes because of these edges is also at most $f_u$.  Therefore, in
  total $w(s^*) \le 2\sum_u f_u$.

  On the other hand, in any pairwise equilibrium $s$ player $u$ will
  not accept less profit than $f_u$, because by a unilateral deviation
  he can always achieve (at least) the maxima used to optimize
  $f_u$. Thus, $w_u(s) \ge f_u$, and we have that
  \[
  w(s^*) \le 2\sum_{u \in V} f_u \le 2\sum_{u \in V} w_u(s) = 2
  w(s)~\enspace.
  \]

  Tightness follows from the following simple example. The graph is a
  path with four nodes $u_i$, for $i=1,...,4$. The interior players
  have budget 1, the leaf players have budget 0. The edges $e_1 =
  (u_1,u_2)$ and $e_2 = (u_2,u_3)$ have an arbitrary convex functions
  $h(x)$, the remaining edge $e_3$ has $h_{e_3}(x) = \varepsilon h(x)$,
  for an arbitrarily small $\varepsilon > 0$. A pairwise equilibrium $s$
  evolves when player $u_2$ spends his effort on $e_2$ and $u_3$ on
  $e_3$. This yields a total profit of $(2+2\varepsilon)h(1)$. The
  optimum evolves if $u_2$ contributes on $e_1$ and $u_3$ on $e_2$
  with total profit of $4h(1)$.
\end{proof}


\section{Approximate Equilibrium}
\label{sec:approx}

We showed above several classes of functions for which pairwise
equilibrium exists, and the price of anarchy is small. If we consider
{\em approximate} equilibria, however, the following theorem says that
this is always the case. By an $\alpha$-approximate equilibrium, we
will mean a solution where nodes may gain utility by deviating (either
unilaterally or bilaterally), but they will not gain more than a
factor of $\alpha$ utility because of this deviation.

\begin{theorem}\label{thm.approxEq}
  In network contribution games an optimum solution $s^*$ is a
  2-approximate equilibrium for any class of nonnegative reward
  functions.
\end{theorem}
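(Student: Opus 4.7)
The plan is to leverage the identity $w(s) = 2\sum_{e \in E} w_e(s)$, which reflects that both endpoints of each edge collect the same reward, together with the fact that $s^*$ is a social optimum. For unilateral deviations the argument is essentially the potential observation already recorded in Section~\ref{sec:model}: if a single player $v$ changes its strategy to $s_v'$, only edges in $E_v$ change reward, and each such change $\Delta_e = w_e(s') - w_e(s^*)$ is collected by both endpoints of $e$. Hence $w(s') - w(s^*) = 2(w_v(s') - w_v(s^*))$, and optimality of $s^*$ forces $w_v(s') \le w_v(s^*)$, so a single player cannot even gain anything by a unilateral move.

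For a bilateral deviation by a pair $u,v$ to a profile $s'$ that differs from $s^*$ only on the strategies of $u$ and $v$, I would carry out the same edge-counting for $E_u \cup E_v$. Each edge in $E_u \cup E_v$ contributes $2\Delta_e$ to $w(s') - w(s^*)$, while the edge $(u,v)$ (if it exists at all) is incident to \emph{both} deviating players and so is counted twice in $w_u(s') + w_v(s')$. Unpacking this bookkeeping yields the identity
\[
(w_u + w_v)(s') - (w_u + w_v)(s^*) \;=\; \tfrac{1}{2}\bigl(w(s') - w(s^*)\bigr) + \Delta_{(u,v)},
\]
where $\Delta_{(u,v)}$ is taken to be $0$ if $u$ and $v$ are not adjacent. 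Using $w(s') \le w(s^*)$ by optimality and $w_{(u,v)}(s^*) \ge 0$ by nonnegativity of the rewards, this simplifies to the clean bound
\[
w_u(s') + w_v(s') - w_u(s^*) - w_v(s^*) \;\le\; w_{(u,v)}(s').
\]

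The final step uses the trivial but crucial observation that $w_{(u,v)}(s')$ is itself a nonnegative summand of both $w_u(s')$ and $w_v(s')$, hence $w_{(u,v)}(s') \le \min\{w_u(s'), w_v(s')\}$. Substituting each of these bounds in turn gives the symmetric pair of inequalities
\[
w_u(s') \le w_u(s^*) + w_v(s^*) \quad \text{and} \quad w_v(s') \le w_u(s^*) + w_v(s^*).
\]
If both players gained strictly more than a factor of $2$ from the deviation, i.e., $w_u(s') > 2 w_u(s^*)$ and $w_v(s') > 2 w_v(s^*)$, then taking without loss of generality $w_u(s^*) \ge w_v(s^*)$ we would obtain $w_u(s') > 2 w_u(s^*) \ge w_u(s^*) + w_v(s^*)$, contradicting the displayed bound. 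Therefore at least one of the two deviating players gains at most a factor of $2$, and $s^*$ is a $2$-approximate equilibrium. The only real subtlety I anticipate is the double-counting of the edge $(u,v)$ in the sum $w_u + w_v$ versus its single contribution to $\tfrac{1}{2} w$: this asymmetry is precisely what creates the factor-$2$ slack, and once it is made explicit through the identity above, the remainder of the argument is essentially one line of casework on which endpoint has larger welfare in $s^*$.
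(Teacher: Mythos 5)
Your proposal is correct and takes essentially the same route as the paper's proof: optimality of $s^*$ bounds the total welfare change of the deviation, the shared edge $(u,v)$ is counted twice in $w_u+w_v$ but only once in $w/2$, and nonnegativity of the rewards then rules out both deviating players more than doubling their utility. The only minor difference is that you handle arbitrary bilateral deviations (including non-adjacent pairs) in one uniform identity, while the paper phrases the argument for deviations that add effort to the common edge; the underlying factor-$2$ mechanism is identical.
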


\begin{proof}
  First, notice that $s^*$ is always stable against unilateral
  deviations. This is because when a node $v$ changes the effort it
  allocates to its adjacent edges unilaterally, then the only nodes
  affected are neighbors of $v$. If $C$ is the change in node $v$'s
  reward because of its unilateral deviation, then the total change in
  social welfare is exactly $2C$. Therefore, no node can improve their
  reward in $s^*$ using unilateral deviations.

  Now consider bilateral deviations, and assume for contradiction that
  nodes $u$ and $v$ have a bilateral deviation by adding some amounts
  $\delta_u$ and $\delta_v$ to edge $e=(u,v)$, which increases their
  rewards by more than a factor of 2. Let $z_u^* = w_u^* -
  f_e(s_u^*(e),s_v^*(e))$ and $z_v^* = w_v^* - f_e(s_u^*(e),s_v^*(e))$
  be the rewards of $u$ and $v$ in $s^*$ not counting edge $e$. We
  denote by $z_u$ and $z_v$ the same rewards after $u$ and $v$ deviate
  by adding effort to $e$, and therefore possibly taking effort away
  from other adjacent edges. In other words, the reward of $u$ before
  the deviation is $w_u^*(s^*) = z_u^*+f_e(s^*_u,s^*_v)$, and after
  the deviation it is $z_u+f_e(s^*_u+\delta_u,s^*_v+\delta_v)$. Note
  that this change cannot increase $w(s)$ over $w(s^*)$, therefore, we
  know that
  \begin{equation}\label{eqn.approxEq1}
    2z_u+2z_v+2f_e(s^*_u+\delta_u,s^*_v+\delta_v) \leq 2z_u^*+2z_v^*+2f_e(s^*_u,s^*_v)\enspace.
  \end{equation}
  On the other hand, since both $u$ and $v$ must improve their reward
  by more than a factor of 2, we know that
  \[ z_u+f_e(s^*_u+\delta_u,s^*_v+\delta_v) >
  2z_u^*+2f_e(s^*_u,s^*_v)\enspace,\]
  and
  \[ z_v+f_e(s^*_u+\delta_u,s^*_v+\delta_v) >
  2z_v^*+2f_e(s^*_u,s^*_v)\enspace.\]
  Adding the last two inequalities together, we obtain that
  \begin{eqnarray}
    z_u+z_v+2f_e(s^*_u+\delta_u,s^*_v+\delta_v) > 2z_u^*+2z_v^*+4f_e(s^*_u,s^*_v)\\
    \geq 2z_u+2z_v+2f_e(s^*_u+\delta_u,s^*_v+\delta_v)+2f_e(s^*_u,s^*_v)
  \end{eqnarray}
  which implies that $z_u+z_v+2f_e(s^*_u,s^*_v) < 0$, a contradiction.
\end{proof}

\section{Convergence}
\label{sec:convergence}

In this section we consider the convergence of round-based improvement
dynamics to pairwise equilibrium. Perhaps the most prominent variant
is best response, in which we deterministically and sequentially pick
one particular player or a pair of adjacent players in each round and
allow them to play a specific unilateral or bilateral deviation. While
convergence of such dynamics is desirable, a drawback is that
convergence could rely on the specific deterministic sequence of
deviations. Here we will consider less demanding processes that allow
players or pairs of players to be chosen at random to make deviations,
and we even allow concurrent deviations of more than one player or
pair.

We consider random best response, where we randomly pick either a
single player or one pair of adjacent players in each round and allow
them to play a unilateral or bilateral deviation. In each round, we
make this choice uniformly at random, i.e, a specific pair $(u,v)$ of
players gets the possibility to make a bilateral deviation with
probability $1/(n+m)$. In concurrent best response, each player
decides independently whether he wants to deviate unilaterally or
picks a neighbor for a bilateral deviation. Obviously, a bilateral
deviation can be played if and only if both players decide to pick
each other. Hence, in a given round a player $v$ decides to play a
unilateral deviation with probability $p_v = 1/(\text{deg}_v+1)$,
where deg$_v$ is the degree of $v$. A pair $(u,v)$ of players makes a
bilateral deviation with probability $p_u \cdot p_v$. Note that in
both dynamics, in expectation, after a polynomial number of rounds
each single player or pair of players gets the chance to play a
unilateral or bilateral deviation.

The name ``best response'' in our dynamics needs some more explanation
for bilateral deviations, because for a pair of players $(u,v)$ a
particular joint deviation $(s'_u,s'_v)$ might result in the best
reward for $u$ but not for $v$. In fact, there might be no joint
deviation that is simultaneously optimal for both players. In this
case the players should agree on one of the Pareto-optimal
alternatives.

In this section we consider special kinds of dynamics, which resolve
this issue in an intuitive way. The intuition is that if two players
decide to play a bilateral deviation, then these strategies should
also be unilateral best responses. We assume that players do not pick
bilateral deviations, in which they would change the strategies
unilaterally. More formally, we capture this intuition by the
following definition.
\begin{definition}
  A \emph{bilateral best response} for a pair $(u,v)$ of players in a
  state $s$ is a pair $(s'_u, s'_v)$ of strategies that is
  \begin{itemize}
  \item a profitable bilateral deviation, i.e.,
    $w_u(s'_u,s'_v,s_{-u,v}) > w_u(s)$, and $w_v(s'_u,s'_v,s_{-u,v}) >
    w_v(s)$, and
  \item a pair of mutual best responses, i.e.,
    $w_u(s'_u,s'_v,s_{-u,v}) \ge w_u(s''_u,s'_v,s_{-u,v})$ for every
    strategy $s''_u$ of player $u$, and similarly for $v$.
  \end{itemize}
\end{definition}

Note that, in principle, there might be states that allow a bilateral
deviation, but there exists no bilateral best response. The set of
states resilient to unilateral and bilateral best responses is a
superset of pairwise equilibria. Hence, it might not even be obvious
that dynamics using only bilateral best responses converge to pairwise
equilibria. Our results below, however, show that the latter is true
in many of the games for which we showed existence of pairwise
equilibria above.

\paragraph{General Convex Functions}
For games with strictly coordinate-convex functions, the concept of
bilateral best response reduces to a simple choice rule. In this case, a
unilateral best response of every player places the entire player budget
on a single edge. This implies that there is no bilateral best response
where players split their efforts. Thus, bilateral best responses come in
three different forms, in which the players allocate their efforts towards
their joint edge (1) both, (2) only one of them, or (3) none of them.

Consider two incident players $u$ and $v$ in state $s$ connected by edge
$e$. To compute a bilateral best response from one of the forms mentioned
above, we proceed in two phases. In the first phase, we try forms (2) and
(3) and remove all contributions from $e$. Then player $u$ independently
picks a unilateral best response under the assumption that $s_v(e) = 0$.
Note that in case of equal reward a player always prefers to put the
effort on $e$, because this might attract the other player to put effort
on $e$ as well and, by convexity, increase their own reward even further.
Similarly, we do this for player $v$. This yields a pair of ``virtual''
best responses under the condition that the other player does not
contribute to $e$. Now we have to check whether this is a bilateral best
response. In particular, if only one of the players puts effort on $e$, by
convexity it might become a unilateral best response for the other player
to put his effort on $e$ as well. If this is the case, the computed state
is not a pair of mutual best responses, thus the most profitable candidate
for a bilateral best response is of form (1).

If this is not the case, then by convexity one player is not willing
to contribute to $e$ at all. Hence, his virtual best response is a
unilateral best response even though the other player contributes to
$e$. For the other player, this means that the assumption made for the
virtual best response are satisfied, hence, we have found a set of
mutual best responses of the form (2) or (3). However, in this case,
this set might not be a bilateral best response because the players do
not improve over their current reward. We thus also check, whether a
state of form (1) is a better set of mutual best responses. Hence, we
consider the state of form (1) and the resulting reward for each
player. Each reward must be at least as large as that from the virtual
best responses, because otherwise the state does not represent a set
of mutual best responses. If this is the case, we accept $s'_u(e) =
B_u$ and $s'_v(e) = B_v$ as our candidate for the bilateral best
response. Otherwise, we use the pair of virtual best responses.

Note that our algorithm computes in each case the most profitable
candidate for a bilateral best response, and always finds a bilateral best
response if one exists. This can be verified directly for each of the
cases, in which there is a bilateral best response of forms (1), (2), or
(3).

\begin{theorem}
  \label{thm:convexConverge}
  Random and concurrent best response dynamics converge to a pairwise
  equilibrium in a polynomial number of rounds when all reward
  functions are strictly coordinate-convex and $f_e(0,x) = 0$ for all
  $e \in E$ and $x \ge 0$.
\end{theorem}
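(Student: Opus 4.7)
My plan is to exploit the structural consequence of strict coordinate-convexity combined with $f_e(0,x) = 0$: each player's unilateral best response concentrates all budget on a single incident edge, and a player's reward is $c_e := f_e(B_u,B_v)$ if both endpoints of $e=(u,v)$ are fully committed to $e$ (i.e., $e$ is ``matched''), and zero otherwise. Thus, after any round of best responses, the reward-relevant information in a state $s$ is captured by the matching $M(s)$ of committed edges. The three-case breakdown of bilateral best responses given immediately before the theorem says that any BBR for a pair $(u,v)$ either (i) jointly commits both to the shared edge $(u,v)$, or (ii)/(iii) resolves to a pair of independent unilateral best responses that leave $(u,v)$ unused. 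In either case, the only new edges that can enter $M(s)$ are the target edges of the deviating players, and profitability forces the weight $c$ of each newly matched edge to strictly exceed that of the edge it replaces.

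With this in hand, I would define a potential $\Phi(s)$ as the multiset of weights $\{c_e : e\in M(s)\}$, sorted in decreasing order and compared lexicographically. The key lemma is that every profitable unilateral or bilateral deviation strictly increases $\Phi$. The reason is that such a deviation can be decomposed into a bijection between the added and removed edges where each added weight strictly dominates the corresponding removed weight; applying a shift argument, the sorted tuple strictly grows in the lex order. Since $\Phi$ takes values in a bounded, well-ordered set, this already proves convergence of both dynamics to some pairwise equilibrium.

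For the polynomial bound, I would argue in phases that mirror the greedy matching from the proof of Theorem~\ref{thm.convex}. Let $e_1$ be the edge of maximum weight $c_{e_1}$, with endpoints $u,v$. Once both endpoints are committed to $e_1$, they attain the maximum possible reward over all their incident edges and hence never deviate again; in particular, subsequent actions by other players do not disturb them. Moreover, whenever the pair $(u,v)$ is selected and $e_1$ is not yet matched, the Pareto-optimal BBR jointly commits them to $e_1$, because $c_{e_1}$ dominates any reward either endpoint could attain in any other strategy profile. In the random dynamics $(u,v)$ is selected with probability $1/(n+m)$, so $e_1$ becomes permanently matched after an expected $O(n+m)$ rounds; in the concurrent dynamics the probability that $u$ and $v$ mutually pick each other is at least $1/(n+1)^2$, giving an expected $O(n^2)$ rounds. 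Once $e_1$ is permanent, I restrict attention to the induced subgame on $V\setminus\{u,v\}$ and repeat the argument with the maximum-weight edge of the reduced graph playing the role of $e_1$. After at most $n/2$ such phases, the committed edges form a maximal matching in which every committed player is at its maximum reward and every uncommitted player has no improving move with any neighbor, so we are at a pairwise equilibrium; the total expected number of rounds is polynomial.

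The main obstacle I anticipate is twofold. First, tie-breaking must be handled: when several edges share the maximum weight in the current phase, or when the current pair has multiple mutual best responses, one must adopt a convention (natural here, since the paper specifies picking a Pareto-optimal BBR) to guarantee that the phase still completes. Second, in the concurrent model, simultaneous deviations may interfere: a player's unilateral best response is computed against the pre-round state, so its target edge might not actually become matched post-round. I would argue around this by observing that (a) the lex potential still grows in expectation because each player moves in any given round only with probability $O(1/n)$, so interference is rare, and (b) the phase argument depends only on the ``absorbing'' property of already-committed pairs, which is unaffected by concurrent moves of other players, and on the positive per-round probability that the current phase's target pair jointly selects each other and executes its Pareto-optimal BBR.
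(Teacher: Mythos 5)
Your phase argument is essentially the paper's proof: stabilize the endpoints of maximum-reward edges (they reach the global maximum $c_{u,v}$ and never move again), delete them, and recurse, with each phase taking expected polynomial time because the relevant pair is activated with inverse-polynomial probability per round. However, two places where your write-up goes beyond or around that skeleton contain genuine gaps. First, your phase is anchored to a \emph{single} maximum-weight edge $e_1=(u,v)$ and requires that $e_1$ eventually become matched. Under ties this can fail: if $u$ is already committed with some $u'$ to another edge of the same maximum weight, then $u$ gets reward $c_{e_1}$ already, no bilateral deviation onto $e_1$ is \emph{strictly} profitable for $u$, and $e_1$ is never matched --- so the phase as you defined it never completes, even though nothing is wrong with the state. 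The paper's formulation avoids this by working with the whole class of edges of maximum $c_{u,v}$ and by making the absorbing object the set of \emph{stabilized players} rather than a specific matched edge: every bilateral best response between two destabilized players adjacent via a top-class edge stabilizes at least two players, and the phase ends when no adjacent destabilized pair remains in that class (which covers your tie scenario). You flag tie-breaking as an obstacle but do not supply this resolution, and it is needed.

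Second, your lexicographic potential over matched-edge weights is not monotone under the concurrent dynamics, exactly because of the interference you describe: a player's best response is computed against the pre-round state, so a matched edge can be destroyed without a heavier one being created, and the potential can drop. Your patch (a) is not correct as stated: in the paper's concurrent process a player attempts a unilateral deviation with probability $1/(\deg_v+1)$, which is a constant, not $O(1/n)$, so interference is not rare, and ``increases in expectation'' is in any case not a convergence proof without an additional concentration or drift argument. Your patch (b) --- relying on the absorbing property of stabilized pairs and the positive per-round probability that a destabilized top-class pair executes its joint best response, whose outcome is independent of what the other players do because $c_{u,v}$ is maximal --- is precisely the paper's argument and suffices on its own, which also makes the potential superfluous. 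A smaller point: the reduction of a state to a matching of fully committed edges is only valid once every player has an integral strategy; the initial state may split effort arbitrarily, so you need one sentence noting that after a player's first (unilateral or bilateral) best response its budget sits on a single edge, and that the rewards of not-yet-moved players are irrelevant to the stabilization argument since $f_e(0,x)=0$ only after both endpoints of an edge are considered.
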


\begin{proof}
  Let us consider the edges in classes of their $c_{u,v}$ (c.f.\ proof
  of Theorem~\ref{thm.convex}). In particular, our analysis proceeds
  in phases. In phase 1, we restrict our attention to the first class
  of edges with the highest $c_{u,v}$ and the subgraph induced by
  these edges. Consider one such edge $e$ and suppose both players
  contribute their complete budgets to $e$. They are never again
  willing to participate in a bilateral deviation (not only a
  bilateral best response), because by strict convexity they achieve
  the maximum possible revenue. We will call such players
  \emph{stabilized}. Consider a first class edge $e = (u,v)$ where
  strategies $s_u(e) < B_u$ and $s_v(e) = B_v$. In this case, strict
  convexity, $f_e(x,y) = 0$ when $xy = 0$, and maximality of $c_{u,v}$
  imply that $s'_u(e) = B_u$ is a unilateral best response -
  independently of the current $s_u$. If both players have $s_u(e) <
  B_u$ and $s_v(e) < B_v$, then the same argument implies that both
  players allocating their full budget is a bilateral best response,
  again independently of what the current strategies of the players
  are. Note that each bilateral best response of two destabilized
  players enlarges the set of stabilized players. Phase 1 ends when
  there are no adjacent destabilized players with respect to first
  class edges, and this obviously takes only an expected number of
  time steps that is polynomial in $n$.

  After phase 1 has ended, we know that the stabilized players are
  never going to change their strategy again. Hence, for the purpose
  of our analysis, we drop the edges between stabilized players from
  consideration. The same can be done for all edges $e$ incident to
  exactly one stabilized player $v$, by artificially reducing $v$'s
  budget to $B_v = 0$ and noting that $f_e(B_u,0) = 0$. If there are
  remaining destabilized players, phase 2 begins, and we consider only
  the remaining players and the edges among them. In this graph, we
  again consider only the subgraph induced by edges with highest
  $c_{u,v}$. Again, we have the property that any pair of players
  contributing their full budget to such an edge is
  stabilized. Additionally, the same arguments show that for
  destabilized players there are always unilateral and/or bilateral
  best responses that result in investing the full budget,
  irrespective of the current strategy. Hence, after expected time
  polynomial in $n$, phase 2 ends and expands the set of stabilized
  players by at least 2.

  Repeated application of this argument shows that after expected time
  polynomial in $n$ either all players are stabilized or the remaining
  subgraph of destabilized players is empty. In this case, a pairwise
  equilibrium is reached. In particular, using unilateral and
  bilateral best responses suffices to stabilize all but an
  independent set of players. It is easy to observe that stabilized
  players have no profitable unilateral or bilateral
  deviations. Possibly remaining destabilized players in the end are
  only adjacent to stabilized players and therefore have no profitable
  unilateral or bilateral deviations. Thus, our dynamics converge to a
  pairwise equilibrium in expected polynomial time. This proves the
  theorem.
\end{proof}

\paragraph{Minimum Effort Games and Convex Functions}
In this section we show that there are games with infinite convergence
time of random and concurrent best response dynamics, although in each
step bilateral best responses are unique and can be found easily.

\begin{theorem}
  There are minimum effort games that have convex functions, uniform
  budgets, and starting states, from which any dynamics using only
  bilateral best responses does not converge to a stable state.
\end{theorem}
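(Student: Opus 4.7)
The plan is to prove the statement by explicit construction: exhibit a small minimum-effort game with convex reward functions and uniform budgets, together with an initial state $s_0$, such that no sequence of bilateral best responses from $s_0$ ever reaches a pairwise equilibrium. Since the previous results (Theorem~\ref{thm:convexMinExists}) show that a pairwise equilibrium always exists in this setting, the content of the claim is that BBR-only dynamics can fail to find one.

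The key reason non-convergence is possible at all is that a bilateral best response, unlike a unilateral one, need not increase social welfare. Indeed, if matched pairs $(u,u')$ and $(v,v')$ each sit on an edge of value $\alpha$, then $u$ and $v$ can jointly deviate onto a shared edge of value $\alpha+\varepsilon$: each of them individually strictly gains $\varepsilon$, so this is a valid BBR, but their former partners $u'$ and $v'$ each drop from reward $\alpha$ down to $0$. The total welfare thus decreases by $2\alpha-2\varepsilon$. This shows that no obvious monotone potential governs BBR-only trajectories, which is precisely what opens the door to cycling.

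With this mechanism in mind, I would search for a small gadget (likely on four to six vertices) in which the edge-function values are tuned so that: (i) the chosen starting state $s_0$ is not a pairwise equilibrium, because some player has a profitable unilateral deviation; (ii) every bilateral best response available from $s_0$ reroutes some currently matched pair onto a slightly higher-value edge, stranding two former partners; and (iii) from the resulting state the only available bilateral best responses execute a structurally identical move, producing a finite orbit of non-stable states. A natural template is a cyclic structure in which each currently-matched pair can be ``broken'' by a BBR that matches one of its members with a different neighbor onto an edge of value just above the matching edge, and these rotational moves compose into a cycle through which the dynamics is forced.

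The main obstacle, and what I expect to be the most delicate step, is ruling out unintended shortcuts: I must verify that from every state in the orbit, no ``off-orbit'' bilateral pair (in particular, no stranded partner re-pairing with an active neighbor) forms a mutually strictly-improving best response, since any such deviation would lead out of the cycle and could terminate at a pairwise equilibrium. This forces a careful tuning of the function values so that every candidate off-orbit pair fails \emph{either} strict improvement on at least one side \emph{or} mutual best-response on at least one side. Once the graph, budgets and the few parameters $\alpha,\varepsilon$ are pinned down, closing the argument is a finite case analysis over all $O(m)$ adjacent pairs at each of the handful of visited states, and the theorem follows.
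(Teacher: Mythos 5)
There is a genuine gap: what you have written is a plan to find a construction, not a construction. The entire content of this theorem is an explicit game (graph, reward functions, uniform budgets, starting state) together with a verification that every state reachable by bilateral best responses is unstable; you exhibit none of these, and you yourself defer the decisive step (``careful tuning'' of $\alpha,\varepsilon$ and the finite case analysis ruling out off-orbit deviations) to future work. Moreover, the cycling engine you propose is doubtful on its own terms. In your rotation template each moving pair locks onto a strictly higher-value edge, so along any closed orbit the values of the edges successively occupied by movers would have to increase and return to the start, which is impossible; the only way a mover can later lose utility is by being stranded, and whether the stranded/re-pairing states are themselves unstable (rather than being exactly the greedy-matching equilibria guaranteed by Theorem~\ref{thm:convexMinExists}) is precisely the part you have not argued. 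In small uniform-budget examples of this matching type, a pair that locks onto a locally maximal edge is typically already pairwise stable, so the orbit you describe tends to terminate rather than cycle. You would also need to handle unilateral best responses, which the dynamics permit alongside bilateral ones.

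The paper's proof takes a different and more robust route that you may want to compare against: it does not invent a fresh cyclic gadget but reuses Example~\ref{exm:minNoEq}, a path game with \emph{non-uniform} budgets ($B_z=1$) and convex functions $2x^2,5x,6x$ that provably has no pairwise equilibrium at all, so cycling there needs no tuning. Uniform budgets are then restored by attaching a central player $v_c$ via two edges of enormous value ($1000x$) to the $z$-endpoints of two copies of the gadget, and choosing a starting state in which $v_c$ splits its budget as $1+1$ across these edges. Under bilateral \emph{best} responses $v_c$ already earns its maximum possible reward and never changes strategy, and each $z$-player's best response always matches $v_c$'s contribution of $1$; this freezes exactly one unit of each $z$-player's budget and recreates the budget asymmetry of the no-equilibrium gadget, so the cycling persists forever even though the full game does possess pairwise equilibria (reachable only via bilateral deviations that are not best responses). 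If you want to salvage your approach, the missing idea is some analogue of this ``anchor'' that locks part of the budgets in place under best responses; without it, your conditions (ii)--(iii) are aspirations rather than a proof.
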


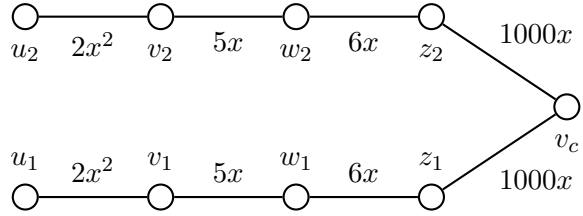
\begin{figure}
  \begin{center}
    \psset{unit=0.06}
    \begin{pspicture}(0,0)(130,50)

      \cnode(5,5){3}{A}
      \rput(5,13){$u_1$}
      \cnode(35,5){3}{B}
      \rput(35,13){$v_1$}
      \cnode(65,5){3}{C}
      \rput(65,13){$w_1$}
      \cnode(95,5){3}{D}
      \rput(95,13){$z_1$}

      \cnode(5,45){3}{E}
      \rput(5,37){$u_2$}
      \cnode(35,45){3}{F}
      \rput(35,37){$v_2$}
      \cnode(65,45){3}{G}
      \rput(65,37){$w_2$}
      \cnode(95,45){3}{H}
      \rput(95,37){$z_2$}

      \cnode(125,25){3}{I}
      \rput(125,17){$v_c$}

      \ncline{A}{B}
      \naput{$2x^2$}
      \ncline{B}{C}
      \naput{$5x$}
      \ncline{C}{D}
      \naput{$6x$}

      \ncline{D}{I}
      \nbput{$1000x$}
      \ncline{I}{H}
      \nbput{$1000x$}

      \ncline{H}{G}
      \naput{$6x$}
      \ncline{G}{F}
      \naput{$5x$}
      \ncline{F}{E}
      \naput{$2x^2$}

    \end{pspicture}
  \end{center}
  \caption{\label{fig:noConverge} A minimum effort game with convex
    reward functions and uniform budgets $B_v = 2$. There is a
    starting state such that no sequence of bilateral best responses
    can reach a stable state.}

\end{figure}

\begin{proof}
  We consider two paths of length 4 as in the games of
  Example~\ref{exm:minNoEq} and introduce a new player $v_c$ as shown
  in Figure~\ref{fig:noConverge}. All players have budget 2. In our
  starting state $s$ we assign all incident players to contribute 1 to
  the edges $(z_1,v_c)$ and $(z_2,v_c)$. This yields a maximum revenue
  of 2000 for $v_c$. As long as this remains the case, $v_c$ will
  never participate in a bilateral deviation. In turn, in every
  unilateral best response players $z_1$ and $z_2$ will match the
  contribution of $v_c$ towards their joint edges. Note that this
  essentially creates the budget restriction for the $z$-players that
  is necessary to show non-existence of a pairwise equilibrium in
  Example~\ref{exm:minNoEq}. It remains to show that we can implement
  the cycling of dynamics in terms of bilateral best responses. For
  this, note that for player $u_1$ it is always a unilateral best
  response to match any contribution of $v_1$ on their joint edge
  (similarly for $u_2$ and $v_2$). The same is true for $z_1$, he will
  match the contribution of $w_1$ up to an effort of 1. Finally, the
  joint deviations of players $v_1$ and $w_1$ are bilateral best
  responses as well. This implies that the cycling dynamics outlined
  above in Example~\ref{exm:minNoEq} remain present when we restrict
  to bilateral best responses. Thus, no stable state can be reached.
\end{proof}

Observe that in this game there are sequences of bilateral deviations that
converge to a pairwise equilibrium, but the bilateral deviations are not
bilateral best responses. Consider an arbitrary cycling sequence of
bilateral deviations from our starting state, and w.l.o.g.\ consider the
cycling dynamics happen on the upper path in Figure~\ref{fig:noConverge}.
Then at some point we will see a bilateral deviation of $w_2$ and $z_2$,
in which on their joint edge $z_2$ contributes 1 and $w_2$ increases his
effort. This creates a strict improvement of utility for both of them.
Note that a bilateral deviation allows both $w_2$ and $z_2$ to change
their strategies in arbitrary manner. Thus, while increasing his
contribution towards $w_2$, $z_2$ can also simultaneously decrease his
contribution towards $v_c$. If the decrease is very tiny, the increase in
reward on the edge to $w_2$ outweighs the decrease of reward on the edge
to $v_c$. In this way, this deviation still generates a strict improvement
of utility for $z_2$. Hence, both $z_2$ and $w_2$ would make strict
improvement although $z_2$ decreases the contribution towards $v_c$ by a
tiny amount, hence this represents a profitable bilateral deviation (but
obviously not a best response). Afterwards, the balance for $v_c$ is
broken, and $v_c$ and $z_1$ have a bilateral deviation to put all effort
on their joint edge. This quickly leads to a pairwise equilibrium.
Naturally, the argument works symmetrically for $z_1$. However, such an
evolution is quite unreasonable, as it is always in the interest of the
$z$-players to keep their contribution towards $v_c$ as high as possible.

\paragraph{Minimum Effort Games and Concave Functions}
For concave functions, we can use the following simple rule to find a
bilateral best response. Consider two incident players $u$ and $v$ in
state $s$ connected by edge $e$. In the first phase we consider each
player independently and compute a unilateral best response $s'_u$ and
$s'_v$ under the assumption that the other player would match his
contribution on $e$. Then we fix the strategy of the player $u$ for
which $s'_u(e) < s'_v(e)$. In the state $s'$, player $u$ is perfectly
happy with his choice and would not participate in any bilateral
deviation. However, player $v$ might be willing to deviate, so we
recalculate a unilateral best response for $v$ under the condition
that $s'_v(e) \le s'_u(e)$. Note that, due to concavity of the
functions, $v$ has a unilateral best response that matches
$s'_u(e)$. This yields a pair of mutual best responses: $u$ has best
possible utility (even if it were able to control $v$'s strategy), and
$v$ has the best possible utility given $u$'s strategy. As usual, the
players switch to $(s'_u, s'_v)$ if and only if it is a profitable
bilateral deviation.

\begin{theorem}
  \label{thm:concaveConverge}
  Random and concurrent best response dynamics converge to a pairwise
  equilibrium when all reward functions are $f_e(x,y) =
  h_e(\min(x,y))$ with differentiable and strictly concave $h_e$.
\end{theorem}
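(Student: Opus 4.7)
The plan is to reduce convergence to the uniqueness of the pairwise equilibrium $s^*$ established in Theorem~\ref{thm:concaveMinExists}, and to show that this equilibrium is the only attractor of the dynamics. I would proceed by induction on the wake-up ordering $v_1,\ldots,v_n$ produced by the existence proof, where $v_i$ is the $i$-th player whose strategy gets fixed by the greedy algorithm and $s^*_{v_i}$ is its strategy in $s^*$.

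For the base case I would argue that the strategy of $v_1$ --- the player with the largest initial derivative value $h'_{v_1}$ --- almost surely settles at $s^*_{v_1}$. The key observation is that $s^*_{v_1}$ is the unique unconditional best response of $v_1$ when it can dictate matching contributions from its neighbours, and the bilateral best response rule described immediately before the theorem forces exactly this matching behaviour on any edge incident to $v_1$ once $v_1$ is involved. Strict concavity combined with the tie-breaking rule from the proof of Theorem~\ref{thm:concaveMinExists} gives uniqueness of best responses and rules out cycling at $v_1$. For the inductive step, once $v_1,\ldots,v_{i-1}$ have converged to their equilibrium strategies, the subgame on $\{v_i,\ldots,v_n\}$ is structurally identical to the residual game considered in round $i$ of the wake-up algorithm, and $s^*_{v_i}$ is the unique best response $v_i$ obtains when it can still control the strategies of the not-yet-stabilised nodes. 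The same argument as the base case then applies to $v_i$.

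The complementary ingredient is a Lyapunov argument. I would verify that the half-welfare $\Phi(s)=\sum_{e}h_e(\min(s_u(e),s_v(e)))$ is non-decreasing along every trajectory. For unilateral best responses this is the standard potential-game fact. For the bilateral best-response rule the output satisfies $s'_u(e)=s'_v(e)=:x$ on the joint edge $e=(u,v)$, and a direct expansion gives
\[
\Phi(s')-\Phi(s)\;=\;\bigl(w_u(s')-w_u(s)\bigr)+\bigl(w_v(s')-w_v(s)\bigr)-\bigl(h_e(x)-h_e(\min(s_u(e),s_v(e)))\bigr).
\]
I would then invoke the strictness of both players' improvements together with the defining property of the ``fixed first'' player (whose unilateral best response is automatically consistent with the partner's new strategy on $e$) to absorb the last term and conclude strict monotonicity of $\Phi$. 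Boundedness of $\Phi$ and the fact that random and concurrent best response each select every player and every adjacent pair with positive probability in every round then yield almost-sure convergence of the state to $s^*$ via a Borel--Cantelli argument.

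The main obstacle will be the monotonicity step for bilateral best responses: neither of the natural decompositions of the transition into two unilateral moves --- routed through $(s_u,s'_v)$ or $(s'_u,s_v)$ --- need give an individually improving first step, since $s'_u$ may fail to be a best response against the original $s_v$ and vice versa. Overcoming this will require careful use of strict concavity of $h_e$ together with the specific structure of the bilateral best response rule, in particular the fact that after the move both endpoints agree on the contribution $x$ on $e$ and the ``fixed first'' player's assumption that its partner matches $x$ is realised exactly.
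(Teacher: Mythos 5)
Your Lyapunov step is not merely an obstacle to be overcome later --- it is false, and with it the whole proposed argument collapses. Take a path $u,v,w,z$ with unit budgets, edges $e_1=(u,v)$, $e_2=(v,w)$, $e_3=(w,z)$, and $h_{e_1}(x)=h_{e_3}(x)=2x-\varepsilon x^2$, $h_{e_2}(x)=3x-\varepsilon x^2$ for small $\varepsilon>0$ (differentiable and strictly concave). Start from the state in which $u,v$ both put their full budget on $e_1$ and $w,z$ both put theirs on $e_3$. The pair $(v,w)$ has a bilateral best response, exactly of the form produced by the rule described before the theorem, in which both move all effort to $e_2$: each utility rises strictly from $2-\varepsilon$ to $3-\varepsilon$, yet $\Phi$ drops from $4-2\varepsilon$ to $3-\varepsilon$. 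In your identity the term $h_{e_2}(x)-h_{e_2}(0)=3-\varepsilon$ outweighs the two individual gains, and strict concavity gives you no handle to ``absorb'' it. The failure is structural, not technical: in these games the social optimum is itself a Nash equilibrium (contributions can be taken matched on every edge), while the unique pairwise equilibrium can have only about half its welfare (the price of stability is $2$); any trajectory from the optimum to the equilibrium must therefore strictly decrease welfare, so no Lyapunov function proportional to welfare can exist. Your inductive base case has a related gap: the dynamics do not follow the wake-up order of Theorem~\ref{thm:concaveMinExists}, and when $v_1$ plays a bilateral best response its other neighbours' contributions are whatever they currently happen to be, not the matching contributions $v_1$ would dictate, so $v_1$'s moves are in general not $s^*_{v_1}$ and nothing ``settles'' after finitely many steps.

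The paper's proof uses an entirely different progress measure: the maximum derivative $h'_{\max}$ among non-stabilized edges, where an edge is stabilized once one endpoint exhausts its budget on edges all attaining this derivative. One shows that $h'_{\max}$ can never increase under unilateral or bilateral best responses, and that a bilateral best response on a maximal non-stabilized edge strictly increases the effort there (both endpoints have lower-derivative alternatives), hence by strict concavity strictly lowers that edge's derivative without creating new maximal edges; so each such step either enlarges the set of stabilized edges or shrinks the set of maximal non-stabilized ones. This yields convergence only in the limit --- the paper explicitly remarks that the decrease of $h'_{\max}$ is unbounded and exact equilibrium may never be reached in finite time, so the finite-time, Borel--Cantelli style conclusion you aim for is not available. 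Finally, the induction of Theorem~\ref{thm:concaveMinExists} is invoked only at the very end, to verify that the limit state (all edges stabilized) is resilient to \emph{all} unilateral and bilateral deviations, not just to the bilateral best responses used along the way; some such final verification would also be needed in any repaired version of your argument, since the dynamics only ever test a strict subset of the deviations that define pairwise equilibrium.
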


\begin{proof}
  We measure progress in terms of the derivatives of the edges. For a
  state $s$ consider an edge with highest derivative $e_{max} =
  \arg\max_{e \in E} h'_e$, where $h'_{(u,v)} =
  h_e'(\min(s_u(e),s_v(e)))$. Obviously, $h'_{max} = h'_{e_{max}} \ge
  h'_e$ for any other edge $e \in E$, so in any unilateral or
  bilateral best response the incident players will not try to remove
  effort from this edge once $s_u(e)=s_v(e)$. Edge $e_{max} =(u,v)$ is
  \emph{stabilized} if there is a player $u$ with $h'_{max} = h'_{e'}$
  for every edge $e' = (u,v') \in E$ and spending all his budget,
  i.e., $\sum_{e'=(u,v') \in E} s_{e'}(u) = B_u$. In this case, no
  player will remove effort from $e_{max}$, but at least one player
  has no interest in increasing effort on $e_{max}$.

  We now consider the dynamics starting in a state $s$ and the set of
  non-stabilized edges $E_{max}$ with maximum derivative $h'_{max}$
  among non-stabilized edges. Suppose that a bilateral deviation
  results in a reduction of the minimum effort on any edge $e$ to a
  value $x$ with $h'_e(x) > h'_{max}$. This is a contradiction to
  $h'_{max}$ being the currently highest derivative value and the
  deviation being composed of mutual unilateral best responses. Hence,
  the value $h'_{max}$ will never increase over the run of the
  dynamics. As an edge $e \in E_{max}$ with highest derivative is not
  stabilized, both incident players have other incident edges with
  strictly smaller derivative. Hence, if they play a bilateral best
  response, they strictly increase effort on $e$ while strictly
  decreasing effort on other edges. By strict concavity this implies
  that after the step $h'_e < h'_{max}$. In addition, both players
  picking a best response means that the derivative of all edges that
  were previously lower than $h'_{max}$ now remain at most
  $h'_e$. This means that no new edge with derivative value $h'_{max}$
  is created, but $e$ is removed. Thus, in each such step we either
  increase the number of stabilized edges, or we decrease the number
  of edges of highest derivative among non-stabilized edges. As such a
  step is played after a finite number of steps in expectation, this
  argument proves convergence.

  It remains to show that the resulting state, in which all edges are
  stabilized, is resilient to all unilateral and bilateral deviations
  and not only against the type of bilateral best responses we used to
  converge to it. Here we can apply an inductive argument similar to
  Theorem~\ref{thm:concaveMinExists} that no profitable bilateral
  deviation exists and the state is indeed a pairwise
  equilibrium. Note that the argument simplifies quite drastically for
  the case of \emph{strictly} concave functions. In particular, we
  consider the edge with maximum derivative. For at least one incident
  player $v$, all edges with positive minimum effort have the same
  derivative, hence this player will never change his strategy. In
  addition, the other adjacent players have an incentive to keep their
  efforts on the edges with $v$. Thus, we can remove $v$, reduce the
  budgets of incident players and iterate. This proves the theorem.
\end{proof}

The previous proof shows that convergence is achieved in the limit, but
the decrease of the maximum derivative value is not bounded. If the state
is close to a pairwise equilibrium, the changes could become arbitrarily
small, and the convergence time until reaching the exact equilibrium could
well be infinite.

\section{General Contribution Games}
\label{sec:general}

In this section we generalize some of our results to general
contribution games. However, a detailed study of such general games
remains as an open problem. A general contribution game can be
represented by a hypergraph $G=(V,E)$. The set of nodes $V$ is the set
of players, and each edge $e \in E$ is a hyperedge $e \subseteq 2^V$
and represents a joint project of a subset of players. Reward
functions and player utilities are defined as before. In particular,
using the notation $s_e = (s_u)_{u \in e}$ we get reward functions
$f_e(s_e)$ with $f_e : (\R_{\ge 0})^{|e|} \to \R_{\ge 0}$. In this
case, we extend our stability concept to \emph{setwise equilibrium}
that is resilient against all deviations of all player sets that are a
subset of any hyperedge. In a setwise equilibrium no (sub-)set of
players incident to the same hyperedge has an improving move, i.e., a
combination of strategies for the players such that every player of
the subset strictly improves. More formally, a \emph{setwise
  equilibrium} $s$ is a state such that for every edge $e \in E$ and
every player subset $U \subseteq e$ we have that for every possible
deviation $s'_U = (s'_u)_{u\in U}$ there is at least one player $v \in
U$ that does not strictly improve $w_v(s) \ge w_v(s'_U,s_{-U})$, where
$s_{-U} = (s_u)_{u \in V-U}$. Note that this definition includes all
unilateral deviations as a special case. The most central parameter in
this context will be the size of the largest project $k = \max_{e \in
  E} |e|$.

We note in passing that Nash equilibria without resilience to
multilateral improving moves are again always guaranteed for all
reward functions. It is easy to observe that $\Phi(s) = \sum_{e \in E}
f_e(s_e)$ is an exact potential function for the game. Note that this
is equal to $k\cdot w(s)$ and thus equivalent to $w(s)$ only for
uniform hypergraphs, in which all hyperedges have the same cardinality
$k$. In these cases the price of stability for Nash equilibria is
always 1. Otherwise, it is easy to construct simple examples, in which
all Nash equilibria (and therefore all setwise equilibria) must be
suboptimal.\footnote{Consider a maximum effort game with players $u$,
  $v_1$, $v_2$, and $v_3$ and edges $e = \{u,v_1\}$ and $g =
  \{u,v_1,v_2\}$. Budgets $B_u = 1$ and $B_{v_i} = 0$ for $i =
  1,2,3$. Rewards are given by a convex function $h(x)$ for edge $g$
  and a function $(1+\varepsilon)h(x)$ for $e$. Note that in any Nash
  equilibrium $s_u(e) = 1$. For small $\varepsilon < 1/2$ this gives
  $w(s) = (2+2\varepsilon)h(1)$, whereas we have higher welfare $w(s) =
  3h(1)$ when $s_u(g) = 1$.}

\paragraph{Convex Functions}
For general convex functions we extend the functions of class
$\mathcal{C}$ to multiple dimensions. In particular, the functions are
coordinate-convex and have non-negative mixed partial second
derivatives for any pair of dimensions. We first observe that the
proof of Theorem~\ref{thm.convex} can be adjusted easily to general
games if functions $f_e$ are coordinate-convex and $f_e(s_e) = 0$
whenever $s_u(e) = 0$ for at least one $u \in e$.

\begin{cor}
  \label{cor.convex}
  A setwise equilibrium always exists and can be computed efficiently
  when all reward functions $f_e$ are coordinate-convex and
  $f_e(s_e)=0$ whenever $s_u(e) = 0$ for at least one $u \in e$.
\end{cor}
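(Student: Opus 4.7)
The plan is to generalize the greedy matching argument from Theorem~\ref{thm.convex} to the hypergraph setting. First I would sort the hyperedges in $E$ in decreasing order of the maximum reward $c_e = f_e((B_u)_{u \in e})$, then build a vertex-disjoint collection $M \subseteq E$ greedily: scanning in this order, include $e$ in $M$ whenever $e$ is disjoint from every hyperedge already in $M$. Define the state $s$ by letting every matched player $v \in e \in M$ allocate all of $B_v$ to $e$ while every unmatched player contributes $0$ on each incident hyperedge. The hypothesis that $f_e$ vanishes whenever some coordinate is zero then forces every unmatched player to have utility zero, and each matched player to earn exactly $c_e$. The overall construction runs in $O(m \log m)$ time.

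To verify that $s$ is a setwise equilibrium, I would examine an arbitrary candidate deviation $s'_U$ by a subset $U \subseteq e$ of some hyperedge $e$. The heart of the argument is to pick a key player $v^* \in U$ that cannot strictly improve. If some player in $U$ is matched, take $v^* \in U$ matched via an $M$-hyperedge $e^*$ whose $c_{e^*}$ is maximal among all $M$-hyperedges incident to any player of $U$; otherwise take $v^*$ arbitrarily in $U$. Since each $f_{e''}$ is coordinate-convex in $s_{v^*}(e'')$ and vanishes when $s_{v^*}(e'')=0$, the utility $w_{v^*}$ as a function of $v^*$'s own strategy is a separable sum of convex terms that all vanish at the origin, so its maximum over $v^*$'s budget simplex is attained at an extreme point, i.e., by allocating all of $B_{v^*}$ to a single incident hyperedge $e_0$. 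It therefore suffices to bound $f_{e_0}(B_{v^*}, s'_{-v^*}(e_0))$ by $c_{e^*}$ (respectively by $0$ in the unmatched case) for every candidate $e_0 \ni v^*$.

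The main obstacle is the case analysis on $e_0$. If $e_0 = e^*$, the bound is immediate since each other player contributes at most its budget, so $f_{e^*}(B_{v^*}, s'_{-v^*}(e^*)) \leq c_{e^*}$. If $e_0 \neq e^*$, vertex-disjointness of $M$ forces $e_0 \notin M$, so the greedy must have rejected $e_0$ because of some vertex $w \in e_0 \cap e^{**}$ already occupied by an earlier $e^{**} \in M$ with $c_{e^{**}} \geq c_{e_0}$. If $w \notin U$, then $s_w(e_0) = 0 = s'_w(e_0)$ and the vanishing hypothesis yields $f_{e_0}(s') = 0$. If $w \in U$, then $e^{**}$ is an $M$-hyperedge incident to $U$, and by maximality $c_{e^*} \geq c_{e^{**}} \geq c_{e_0} \geq f_{e_0}(s')$. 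The unmatched case (where $v^*$'s current reward is $0$) reuses the same ideas: every $e_0 \ni v^*$ must lie outside $M$ (else $v^*$ would be matched), and the blocking vertex $w$ cannot lie in $U$ either (since all of $U$ is unmatched by assumption), so $f_{e_0}(s') = 0$ throughout. In every subcase, $v^*$'s new utility is bounded by its current value, contradicting strict improvement and establishing that $s$ is a setwise equilibrium.
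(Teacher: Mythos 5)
Your proof is correct and takes essentially the same route as the paper, which establishes this corollary by observing that the greedy-matching construction of Theorem~\ref{thm.convex} carries over to hyperedges: you build the same vertex-disjoint family $M$ sorted by maximum reward, and your key player (the member of $U$ matched to the maximum-reward $M$-hyperedge incident to the coalition) is exactly the paper's witness for coalitional deviations. The only minor difference is that you replace the paper's interval-increase convexity comparison by the observation that a separable sum of coordinate-convex terms vanishing at zero is maximized over the budget simplex at an extreme point (an all-or-nothing allocation), which is a valid and slightly cleaner way to finish the case analysis.
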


Note that we can also adjust the proof of
Claim~\ref{claim.optIntegral} for optimum solutions in a
straightforward way. In particular, to obtain the social welfare for
projects $e$, $e_1$ and $e_2$ we simply multiply each occurrence of
the functions $f$, $f_1$ and $f_2$ in the formulas by the
corresponding cardinalities of their edges. This does not change the
reasoning and proves an analogous statement of
Claim~\ref{claim.optIntegral} also for general games. The actual proof
of Theorem~\ref{thm.classC} then is a simple accounting argument that
relies on the cardinality of the projects. The observation that the
difference between $\sum_e w_e(s^*)$ and $w(s^*)$ is bounded by $k$
yields the following corollary. As previously, these results directly
extend to strong equilibria, as well.
\begin{cor}
  \label{cor.classC}
  For the class of general contribution games with reward functions in
  class $\mathcal{C}$ for all $e \in E$ that have a setwise
  equilibrium, the price of anarchy for setwise equilibria is at most
  $k$.
\end{cor}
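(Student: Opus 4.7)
The plan is to lift the witnessing argument of Theorem~\ref{thm.classC} from edges to hyperedges. First I would invoke the hypergraph analog of Claim~\ref{claim.optIntegral}, whose proof considered only a single player $v$ redistributing $\varepsilon$ effort between two incident objects; since coordinate-convexity is preserved by restriction to any single coordinate, the computation transports verbatim when those objects are hyperedges. Thus I may assume without loss of generality that the social optimum $s^*$ is tight, i.e., $s^*_u(e) \in \{0, B_u\}$ for every $u \in e$ and every $e \in E$.

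Next, for each tight hyperedge $e$ let $U_e = \{u \in e : s^*_u(e) = B_u\}$. I would consider the setwise deviation from $s$ in which every player in $U_e$ moves its entire budget onto $e$ (note that $U_e \subseteq e$, so this is a valid setwise deviation). Since $f_e$ is nondecreasing and the contributions of players in $e \setminus U_e$ are already zero in $s^*$, the reward on $e$ after the deviation is at least $f_e(s^*_e) = w_e(s^*)$, so each $u \in U_e$ obtains total reward at least $w_e(s^*)$. Because $s$ is a setwise equilibrium, at least one $u \in U_e$ cannot strictly improve, yielding $w_u(s) \ge w_e(s^*)$; designate that $u$ a witness for $e$. (Hyperedges with $U_e = \emptyset$ contribute only the minimum value $f_e(\vec 0)$ to $w_e(s^*)$ and can be absorbed into the trivial bound.)

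Each node witnesses at most one hyperedge, since witnessing requires $s^*_u(e) = B_u$, which exhausts $u$'s entire budget on $e$. Summing the witness inequalities gives $w(s) = \sum_v w_v(s) \ge \sum_e w_e(s^*)$. Finally, since each hyperedge appears once in the utility of each incident player, $w(s^*) = \sum_v w_v(s^*) = \sum_{e \in E} |e| \cdot w_e(s^*) \le k \sum_e w_e(s^*) \le k \cdot w(s)$, giving the desired bound of $k$. The only non-mechanical step I expect is verifying the hypergraph extension of Claim~\ref{claim.optIntegral}, but since its argument is entirely local to one player redistributing between two incident objects, it transports directly without any new ideas.
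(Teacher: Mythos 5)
Your proposal follows essentially the same route as the paper's: extend Claim~\ref{claim.optIntegral} to hyperedges so the optimum may be taken tight, rerun the witnessing argument of Theorem~\ref{thm.classC} with the deviating coalition $U_e \subseteq e$ (a legal setwise deviation), and lose only the factor $k$ via $w(s^*)=\sum_e |e|\,w_e(s^*)\le k\sum_e w_e(s^*)$. The lone imprecision is the word ``verbatim'': in the hypergraph version of Claim~\ref{claim.optIntegral} the two welfare inequalities pick up the cardinality multipliers $|e|$ and $|e'|$ (as the paper notes), but since these appear symmetrically at the two ends of the chain the coordinate-convexity argument closes exactly as before, and your handling of hyperedges with $U_e=\emptyset$ (i.e.\ $f_e(\vec 0)>0$) is no less explicit than what the paper itself leaves implicit in Theorem~\ref{thm.classC}.
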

%
%
%

\paragraph{Minimum Effort Games}
For minimum effort games some of our arguments translate directly to
the treatment of general games. For existence with convex functions
and uniform budgets, we can apply the same ``greedy matching''
argument and wake up players until every hyperedge is incident to at
least one awake player. The argument that this creates a setwise
equilibrium is almost identical to the one given in
Theorem~\ref{thm:convexMinExists} for pairwise equilibria. This yields
the following corollary.
\begin{cor}
  \label{cor:convexMinExists}
  A setwise equilibrium always exists in games with uniform budgets
  and $f_e(s_e) = h_e(\min_{u \in e} s_u(e))$ when all $h_e$ are
  convex. If all $h_e$ are strictly convex, all setwise equilibria are
  integral.
\end{cor}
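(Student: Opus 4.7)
The plan is to lift the greedy matching construction of Theorem~\ref{thm:convexMinExists} directly to hypergraphs: process hyperedges in decreasing order of $h_e(1)$, and when hyperedge $e$ is considered, if every player of $e$ is still sleeping, wake all of them up and set their strategies so that each puts its full unit budget on $e$; otherwise skip $e$. The algorithm terminates once no hyperedge has all its players sleeping, and any still-sleeping players are given the all-zero strategy. As in the pairwise case we normalize to $h_e(0)=0$, so that convexity gives $h_e(a)\le a\,h_e(1)$ for every $a\in[0,1]$.

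To verify that the resulting state $s$ is a setwise equilibrium, fix a hyperedge $f$, a subset $U\subseteq f$, and a candidate deviation $s'_U$. First, a sleeping player $u\in U$ has current utility $0$ and can obtain positive utility from no hyperedge $e\ni u$: any non-$U$ awake member of $e$ contributes $0$ to $e$ (otherwise $e$ was processed and $u$ would have been woken with it), so positivity of the minimum on $e$ would force $e\subseteq U$, making all of $e$ sleeping, which contradicts the termination condition. Hence we may assume $U$ contains an awake player, and take $u$ to be the earliest-awoken among them; say the algorithm assigned $u$ to $e_u$, giving $u$ current utility $h_{e_u}(1)$. The crux is that $h_{e_u}(1)\ge h_e(1)$ for every hyperedge $e\ni u$ on which $u$ can obtain positive reward after the deviation. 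Indeed, the same argument as above forces $e\subseteq U$ when $e\ne e_u$; and since $u$ is the earliest-awoken element of $U$, every other member of $U$ (and hence every member of $e\subseteq U$) was sleeping at the instant $u$ was woken, so $e$ was itself a valid candidate at that step, and the algorithm's choice of $e_u$ over $e$ forces $h_{e_u}(1)\ge h_e(1)$. Writing $a_e$ for $u$'s new contribution to $e$ and combining with convexity,
\[ w_u(s'_U,s_{-U}) \;\le\; \sum_{e\ni u} h_e(a_e) \;\le\; \Bigl(\sum_{e\ni u} a_e\Bigr) h_{e_u}(1) \;\le\; h_{e_u}(1) \;=\; w_u(s), \]
so $u$ does not strictly improve, contradicting the profitability of the deviation for all of $U$.

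For the integrality claim under strict convexity, the pairwise argument carries over essentially unchanged. If some setwise equilibrium $s$ has a hyperedge with $\min_{v\in e}s_v(e)\in(0,1)$, pick such an $e$ maximizing $h_e(1)$ and consider the setwise deviation $U=e$ in which every player of $e$ puts its full budget on $e$. A budget-feasibility calculation and strict convexity yield $\sum_{e'\ni u}h_{e'}(\min_v s_v(e')) < h_e(1)$ for every $u\in e$, while the deviation yields utility exactly $h_e(1)$ for each such $u$, so every member of $U$ strictly improves, contradicting stability. The main obstacle is adapting the ``first-awoken'' argument to the case $U\subsetneq f$, where members of $f\setminus U$ may be awake earlier than $u$ and could in principle disturb the candidacy of the hyperedges involved; this is precisely handled by the observation that $u$ can gain only on hyperedges entirely contained in $U$, which depend solely on the wake-up order inside $U$.
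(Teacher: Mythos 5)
Your construction and verification are correct and essentially the paper's own approach: the paper simply notes that the ``greedy matching'' argument of Theorem~\ref{thm:convexMinExists} lifts to hyperedges (wake up entire all-sleeping hyperedges in decreasing order of $h_e(1)$) and that the earliest-awoken member of any deviating coalition cannot strictly improve, which is exactly what you prove, including the integrality argument under strict convexity. Two small repairs for the write-up: the opening claim that a sleeping member of $U$ can never obtain positive reward is literally false when $U$ also contains awake members of a common hyperedge (what you need, and all you actually use, is that an all-sleeping coalition cannot improve, which is true), and in the displayed chain the sums should be restricted to hyperedges with positive post-deviation minimum, since $h_{e_u}(1)\ge h_e(1)$ is established only for those edges, while the remaining edges contribute zero reward under the normalization $h_e(0)=0$.
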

The duality analysis for the price of anarchy in
Theorem~\ref{thm:PoAconvexMinUniform} can be carried out as well. In
this case, however, the crucial inequality reads $s_e^* \cdot \sum_{u
  \in e} y_u \ge s_e^* f_e(1) \ge f_e(s_e^*)$. This results in a price
of anarchy of $k$.
\begin{cor}
  \label{cor.PoAconvexMinUniform}
  The price of anarchy for setwise equilibria in general contribution
  games is at most $k$ when all reward functions $f_e(s_e) =
  h_e(\min_{u \in e} s_u(e))$ with convex $h_e$, and budgets are
  uniform.
\end{cor}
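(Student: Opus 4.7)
The plan is to generalize the duality-style argument of Theorem~\ref{thm:PoAconvexMinUniform} to hyperedges. Fix a setwise equilibrium $s$ and a social optimum $s^*$. Because all reward functions depend only on the minimum effort on their hyperedge, we may assume a single contribution level $s_e$ (resp.~$s_e^*$) per hyperedge, and I will write $f_e(s_e)$ for $h_e(s_e)$. Define $y_u = w_u(s) = \sum_{e \ni u} f_e(s_e)$; then
\begin{equation*}
\sum_{u \in V} y_u \;=\; \sum_{e \in E} |e|\, f_e(s_e) \;=\; w(s).
\end{equation*}

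The central step is to show, using setwise stability, that for every hyperedge $e$ we have $\sum_{u \in e} y_u \ge f_e(1)$. Consider the coalitional deviation in which the subset $U = e$ jointly puts all of its budget onto the project $e$. Since budgets are uniform and equal to $1$, each $u \in e$ then earns $f_e(1)$ from this project, plus nonnegative rewards from any other projects containing $u$, so in total at least $f_e(1)$. If every $u \in e$ had $y_u < f_e(1)$, this deviation would strictly improve every member of the coalition, contradicting setwise stability. Hence at least one $u \in e$ satisfies $y_u \ge f_e(1)$, and so does the sum over $u \in e$.

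Next, by convexity of $h_e$ together with $h_e(0)=0$, for any $s_e^* \in [0,1]$ we have $s_e^* f_e(1) \ge f_e(s_e^*)$. Multiplying the previous inequality by $s_e^*$ and summing over all hyperedges yields
\begin{equation*}
\sum_{e \in E} f_e(s_e^*) \;\le\; \sum_{e \in E} s_e^* \sum_{u \in e} y_u \;=\; \sum_{u \in V} y_u \sum_{e \ni u} s_e^* \;\le\; \sum_{u \in V} y_u \;=\; w(s),
\end{equation*}
where we swap the order of summation and use $\sum_{e \ni u} s_e^* \le B_u = 1$. Since $w(s^*) = \sum_e |e|\, f_e(s_e^*) \le k \sum_e f_e(s_e^*)$, we obtain $w(s^*) \le k\cdot w(s)$, proving the bound.

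The main subtlety to nail down is that the coalitional deviation used in the stability step is a legitimate setwise deviation that is strictly improving for every participant. Legality is immediate because $U = e$ is itself a hyperedge. Strict improvement follows because the new reward of each $u \in e$ is at least $f_e(1)$ regardless of what happens on its other projects (the nonnegativity of all $f_{e'}$ forbids any negative offset), while by assumption $y_u < f_e(1)$. With this observation, the remainder of the argument is bookkeeping, and the factor $k$ enters exactly through the identity $w(s) = \sum_e |e|\, f_e(s_e)$ relating social welfare to the sum of edge rewards in a hypergraph.
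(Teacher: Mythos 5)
Your proof is correct and follows essentially the same route as the paper, which obtains the corollary by rerunning the charging argument of Theorem~\ref{thm:PoAconvexMinUniform} with the crucial inequality $s_e^* \cdot \sum_{u \in e} y_u \ge s_e^* f_e(1) \ge f_e(s_e^*)$, exactly the coalition-of-the-whole-hyperedge deviation and summation you carry out. Your explicit remark that $s_e^* f_e(1) \ge f_e(s_e^*)$ uses $h_e(0)=0$ is a fair point of care that is left implicit in the paper's own Theorem~\ref{thm:PoAconvexMinUniform}.
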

Again, both corollaries extend also to strong equilibria.
%
%

\paragraph{Maximum Effort Games}
For maximum effort games it is not possible to extend the main insight
in Lemma~\ref{lem:maxLemma} to general games. There are general
maximum effort games without setwise equilibria. This holds even for
pairwise equilibria in network contribution games, in which the graph
$G$ is not simple, i.e., if we allow multiple edges between agents.
\begin{example}\rm
  We consider a simple game that in essence implements a Prisoner's
  Dilemma. There is a path of four players $u$, $v$, $w$ and $z$, with
  edges $e_1 = (u,v)$, $e_2 = (v,w)$ and $e_3 = (w,z)$. In addition,
  there is a second edge $e_4 = (v,w)$ between $v$ and $w$. The
  budgets are $B_u = B_z = 0$ and $B_v = B_w = 1$. The reward
  functions are $h_{e_1}(x) = h_{e_3}(x) = 3x$, $h_{e_2}(x) =
  h_{e_4}(x) = 2x$. Note that for $v$ and $w$ it is a unilateral
  dominant strategy to put all effort on edges $e_1$ and $e_3$,
  respectively. However, in that case $v$ and $w$ can jointly increase
  their reward by allocating all effort to $e_2$ and $e_4$,
  respectively.
\end{example}
For general maximum effort games characterizing the existence and
computational complexity of pairwise, setwise, and strong equilibria
is an interesting open problem.

\section{Conclusions and Open Problems}
\label{sec:conclude}

In this paper we have proposed and studied a simple model of
contribution games, in which agents can invest a fixed budget into
different relationships. Our results show that collaboration between
pairs of players can lead to instabilities and non-existence of
pairwise equilibria. For certain classes of functions, the existence
of pairwise equilibria is even \classNP-hard to decide. This implies
that it is impossible to decide efficiently if a set of players in a
game can reach a pairwise equilibrium. For many interesting classes of
games, however, we are able to show existence and bound the price of
anarchy to 2. This includes, for instance, a class of games with
general convex functions, or minimum effort games with concave
functions. Here we are also able to show that best response dynamics
converge to pairwise equilibria.

There is a large variety of open problems that stem from our work. The
obvious open problem is to adjust our results for the network case to
general set systems and general contribution games. While some of our
proofs can be extended in a straightforward way, many open problems,
most prominently for concave functions, remain.

Another obvious direction is to identify other relevant classes of
games within our model and prove existence and tight bounds on the
price of anarchy. Another interesting aspect is, for instance, the
effect of capacity constraints, i.e., restrictions on the effort that
a player can invest into a particular project.

More generally, instead of a total budget a player might have a
function that characterizes how much he has to ``pay'' for the total
effort that he invests in all projects. Such ``price'' functions are
often assumed to be linear or convex (e.g.,
in~\cite{BramoulleJET07,Huyck90}).

Finally, an intriguing adjustment that we outlined in the introduction
is to view the projects as instances of the combinatorial agency
framework and to examine equilibria in this more extended model.

\section*{Acknowledgements}
The authors would like to thank Ramamohan Paturi for interesting
discussions about the model.

\bibliographystyle{plain}

\end{document}